\def\withcolor{}
  \definecolor{haskellblue}{rgb}{0.0, 0.0, 1.0}
  \definecolor{haskellstr}{rgb}{0.2, 0.2, 0.6}
  \definecolor{haskellred}{rgb}{1.0, 0.0, 0.0}
  \definecolor{gray_ulisses}{gray}{0.55}
  \definecolor{castanho_ulisses}{rgb}{0.71,0.33,0.14}
  \definecolor{preto_ulisses}{rgb}{0.41,0.20,0.04}
  \definecolor{green_ulises}{rgb}{0.2,0.75,0}
  \definecolor{haskelltypes}{rgb}{0.71,0.33,0.14}
  \definecolor{logiccolor}{rgb}{0.0, 0.0, 1.0}
  \definecolor{haskellblue}{gray}{0.1}
  \definecolor{haskellstr}{gray}{0.1}
  \definecolor{haskellred}{gray}{0.1}
  \definecolor{gray_ulisses}{gray}{0.1}
  \definecolor{castanho_ulisses}{gray}{0.1}
  \definecolor{preto_ulisses}{gray}{0.1}
  \definecolor{green_ulisses}{gray}{0.1}
  \definecolor{haskelltypes}{gray}{0.1}
  \definecolor{logiccolor}{gray}{0,1}
\definecolor{subtleOpHighlight}{rgb}{0.4, 0.2, 0.0}
\definecolor{lcolor}{gray}{0.0}
\definecolor{lappcolor}{gray}{0.0}
\definecolor{lappascolor}{gray}{0.0}
\def\codesize{\small}
\def\incodesize{\small}
\newcommand\showfocus[1]{\color{purple}{\textbf{#1}}}
\newcommand\showlogic[1]{\color{logiccolor}{#1}}
\lstdefinelanguage{HaskellUlisses} {
  aboveskip=\smallskipamount,
  belowskip=\smallskipamount,
  basicstyle=\ttfamily\codesize,
  moredelim=[is][\showfocus]{\#}{\#},
  moredelim=[is][\showlogic]{!}{!},
  sensitive=true,
  backgroundcolor=\color{white},
  morecomment=[l][\ttfamily\itshape\codesize]{--},
  morecomment=[s][\ttfamily\itshape\codesize]{\{-}{-\}},
  morestring=[b]",
  stringstyle=\color{haskellstr},
  showstringspaces=false,
  numberstyle=\codesize,
  numberblanklines=true,
  showspaces=false,
  breaklines=true,
  showtabs=false,
  literate={
           {<!}{{{\color{lcolor}<!}}}2
           {`}{{{$^{\backprime}{}$}}}1
           {?}{{{\color{lcolor}?}}}1
           {<=}{{$\leq$}}1
           {/=}{{$\neq$}}1
           {bot}{{$\bot$}}1
           {top}{{$\top$}}1
           {theta}{{$\theta$}}1
           {gf}{{{\color{lappascolor}f}}}1
           {rmap}{{{\color{lappcolor}map}}}3
           {gmap}{{{\color{lappascolor}map}}}3
           {r.}{{{\color{lappcolor}.}}}1
           {g.}{{{\color{lappascolor}.}}}1
           {r++}{{{\showfocus{++}}}}2
           {g++}{{{\color{lappascolor}++}}}2
           {>>}{{{\color{haskellblue}>>}}}3
           {>>=}{{{\color{haskellblue}>>=}}}5
           {</>}{{{\color{haskellblue}</>}}}3
           {<*>}{{{\color{haskellblue}<*>}}}3
           {++}{{{\color{haskellblue}++}}}3
           {gfib}{{{\color{lappascolor}fib}}}3
           {rfib}{{{\color{lappcolor}fib}}}3
           {r++}{{{\color{lappcolor}++}}}2
           {env}{{$\Gamma$}}1
           {|-}{{$\vdash$}}1
           {<=!}{{{\color{lcolor}<=!}}}3
           {!=}{{$\neq$}}1
           {->}{{$\rightarrow$}}2
           {~>}{{$\imparrow$}}2
           {<-}{{$\leftarrow$}}2
           {dollar}{{$\texttt{\$}$}}1
           {Set_mem}{{$\in$}}1
           {Set_cup}{{$\cup$}}1
           {Set_cap}{{$\cap$}}1
           {Set_emp}{{$\emptyset$}}1
           {Set_sub}{{$\subseteq$}}1
           {<=>}{{$\Leftrightarrow$}}3
           {=>}{{$\Rightarrow$}}2
           {1->}{{$\rightarrow$}}1
           {1=>}{{$\Rightarrow$}}1
           {||-}{{$\vdash$}}1
           {|->}{{$\mapsto$}}2
           {<:}{{$\preceq$}}1
           {Inarritu}{Inarritu}8},
  emph=
  {[1]
    succ,incr,two,incrMany,three,id,get,set,return,grant,revoke,fresh,main,canRead,safeRead,
    test1,test2,
    gnt,rev,next,foo,bar,baz,pure,client,done,clynt,fin,
    FilePath,IOError,abs,acos,acosh,all,and,any,appendFile,approxRational,asTypeOf,asin,
    asinh,atan,atan2,atanh,basicIORun,break,catch,ceiling,chr,compare,concat,concatMap,
    cos,cosh,curry,cycle,decodeFloat,denominator,digitToInt,div,divMod,drop,
    dropWhile,either,elem,encodeFloat,enumFrom,enumFromThen,enumFromThenTo,enumFromTo,
    error,even,exponent,fail,mapMaybe,filter,flip,floatDigits,floatRadix,floatRange,floor,
    foldl,foldl1,foldr1,fromDouble,fromEnum,fromInt,fromInteger,fromIntegral,
    fromRational,fst,gcd,put,tick,tock,tocker,ticker,getChar,getContents,getLine,head,inRange,index,init,intToDigit,
    interact,ioError,isAlpha,isAlphaNum,isAscii,isControl,isDenormalized,isDigit,isHexDigit,
    isIEEE,isInfinite,isLower,isNaN,isNegativeZero,isOctDigit,isPrint,isSpace,isUpper,iterate,
    last,lcm,length,lex,lexDigits,lexLitChar,lines,log,logBase,lookup,mapM,mapM_,max,
    maxBound,posMax,negMax,maximum,maybe,min,minBound,minimum,mod,negate,notElem,null,numerator,odd,
    or,ord,pi,pred,primExitWith,print,product,putChar,putStr,putStrLn,quot,
    quotRem,range,rangeSize,read,readDec,readFile,readFloat,readHex,readIO,readInt,readList,readLitChar,
    readLn,readOct,readParen,readSigned,reads,readsPrec,realToFrac,recip,rem,repeat,replicate,return,
    reverse,round,scaleFloat,scanl,scanl1,scanr,scanr1,seq,sequence,sequence_,show,showChar,showInt,
    showList,showLitChar,showParen,showSigned,showString,shows,showsPrec,significand,signum,sin,
    sinh,snd,span,splitAt,sqrt,subtract,tail,take,takeWhile,tan,tanh,threadToIOResult,toEnum,
    toInteger,toLower,toRational,toUpper,truncate,uncurry,undefined,unlines,until,unwords,unzip,
    unzip3,userError,words,writeFile,zip,zip3,zipWith,zipWith3,listArray,doParse,empty,for,initTo,
        assert,compose,checkGE,maxEvens,empty,create,get,set,initialize,idVec,fastFib,fibMemo,
        ex1,ex2,ex3,inc,dec,isPos,positives,find,flatten, expand,exAll,
        ind,evenLen,lenAppend,exDistOr,allDistAnd,len,size,union,fromList,initUpto,trim,
        insertSort,decsort,qsort,reverse,append,upperCase, ifM, whileM, get, decrM, diff,
        project, select, sel, elts, keys, dkeys, dfun, addKey, pTrue, emptyRD, rFalse,
                dom, rng, isI, isD, isS, movie1, movie2,  toI, toS, toD, good_titles, runState, ret,
                update, getCtr, setCtr, ctr, rdCtr, wrCtr, ifTest, whileTest, posCtr, zeroCtr, decr, decCtr,
                pread , pwrite , plookup , pcontents, pcreateF , pcreateFP, pcreateD, active, caps, pset, eqP,
                write, contents, alloc, derivP, copyP, createDir, store, copyRec, copySpec,
                forM_, when, flookup, fread, createDir, pcreateFile, isFile, copyFrame, ?
  },
  emphstyle={[1]\color{haskellblue}},
  emph=
  {[2]Show,Eq,Iso,VerifiedOrd,Ord,Num,UpClosed,Comp,Wit,Witness,Inductive,Meet,Flip,TRUE,
      Peano,Nat,Pos,SInt,Neg,IntGE,Plus,List,PAnd, POr, POrL, POrR,
        Bool,Char,Double,Either,Float,IO,Integer,Int,Maybe,Up,Mono,Identity,
        Ordering,Rational,Ratio,ReadS,ShowS,File,Token,Tk,ST,String,Str,Word8,
        InPacket,Tree,Vec,NullTerm,IncrList,DecrList,
        UniqList,BST,MinHeap,MaxHeap,World,RIO,IO,HIO,Post,Pre, OptEq,
        Privilege, Prop, Chain, ChainTy, Range, Dict, RD, Dom, Set, P, Univ, Schema, MovieSchema, RT,
        TDom, TRange, MoviesTable, RTSubEqFlds, RTEqFlds, Disjoint, Union, Ret, Seq, Trans, Map,
        Pure, Then, Else, Exit, Inv, OneState, Priv, Path, FH, Stable,
      Monoid, VerifiedMonoid, VerifiedComMonoid, Plus_2_2_eq_4, Plus_2_2, Nat_up, Int_up,
      AppendNilId, AppendAssoc,MapFusion,
      Plus_comm, Par, Term,
      Formula, Assignment, Body, Accel, Real, Accel', RVar, VerifiedCommutativeMonoid, CommutativeMonoid
  },
  emphstyle={[2]\color{castanho_ulisses}},
  emph=
  {[3]
    case,class,newtype,data,deriving,do,else,if,unpack,import,in,infixl,infixr,instance,let,
    module,of,primitive,then,refinement,type,where,rforall,forall,bound,
    measure,reflect,predicate, instance, class,
    exists
  },
  emphstyle={[3]\color{preto_ulisses}\textbf},
  emph=
  {[4]
    quot,rem,div,mod,elem,notElem,seq
  },
  emphstyle={[4]\color{castanho_ulisses}\textbf},
  emph=
  {[5]
    EQ,GT,LT,Left,Right
  },
  emphstyle={[5]\color{preto_ulisses}\textbf},
  emph=
  {[6]
      axiomatize, measure, inline
  },
  emphstyle={[6]\color{lcolor}}
}
\lstdefinelanguage{Pseudo} {
  basicstyle=\ttfamily\codesize,
  sensitive=true,
  mathescape=true,
  morecomment=[l][\color{gray_ulisses}\ttfamily\codesize]{--},
  morecomment=[s][\color{gray_ulisses}\ttfamily\codesize]{\{-}{-\}},
  morestring=[b]",
  showstringspaces=false,
  numberstyle=\codesize,
  numberblanklines=true,
  showspaces=false,
  breaklines=true,
  showtabs=false
}
\lstdefinelanguage{java} {
    keywordstyle=[1],
    keywordstyle=[2]\color{ForestGreen},
    keywordstyle=[3]\color{Bittersweet},
    keywordstyle=[4]\color{RoyalPurple},
    morekeywords={region,private,synchronized}
}
        \definecolor{typecol}{rgb}{0.0,0.5,0.0}
        \definecolor{funcol}{rgb}{0.0,0.1,0.9}
        \definecolor{typecol}{gray}{0.0}
        \definecolor{funcol}{gray}{0.0}
\lstdefinelanguage{pseudo2}{
  language=Python,
  basicstyle=\ttfamily\normalsize,
  mathescape=true,
  morekeywords={type,def,do,let,unpack},
  emph={[1] \Expr,\Pred, HP, FP, Int},
  emphstyle={[1]\itshape\color{typecol}},
  emph={[2] \pbe,normalize},
  emphstyle={[2]\itshape\color{funcol}},
  emph={[3] repeat,until},
  emphstyle={[3]\textbf}
}
\def \ha {\lstinline[language=HaskellUlisses,basicstyle=\ttfamily\incodesize,mathescape=true]}
\title{Refinements of Futures Past: Higher-Order Specification with Implicit Refinement Types (Extended Version)}
\titlerunning{Refinements of Futures Past (Extended Version)}
\author{Anish Tondwalkar}{University of California San Diego, CA, USA}{atondwal@eng.ucsd.edu}{}{}
\author{Matthew Kolosick}{University of California San Diego, CA, USA}{mkolosick@eng.ucsd.edu}{}{}
\author{Ranjit Jhala}{University of California San Diego, CA, USA}{rjhala@cs.ucsd.edu}{}{}
\authorrunning{A. Tondwalkar, M. Kolosick, and R. Jhala}
\keywords{Refinement Types, Implicit Parameters, Verification, Dependent Pairs}  
\begin{document}

\makeatletter
\maketitle
\makeatother

\begin{abstract}
Refinement types decorate types with assertions 
that enable automatic verification. Like assertions, 
refinements are limited to binders that are in scope, 
and hence, cannot express higher-order specifications. 
\emph{Ghost} variables circumvent this limitation 
but are prohibitively tedious to use as the programmer 
must divine and explicate their values at all call-sites.
We introduce \emph{Implicit Refinement Types}
which turn ghost variables into implicit pair and function types, 
in a way that lets the refinement typechecker 
automatically \emph{synthesize} their values at compile time.
Implicit Refinement Types further take
advantage of refinement type information, allowing them to be used as a
lightweight verification tool, rather than merely as a technique to
automate programming tasks.
We evaluate the utility of Implicit Refinement Types by 
showing how they enable the modular specification 
and automatic verification of various higher-order examples including 
stateful protocols, access control, and resource usage. 
\end{abstract}

\section{Introduction}
\label{sec:intro}

\emph{Refinement types} allow programmers to decorate types 
with statically checked assertions (``refinements'') that 
can be used for automatic verification over decidable 
theories,
with applications including checking
array bounds \cite{xi_eliminating_1998,rondon_liquid_2008}, totality
\cite{vazou_refinement_2014}, data structure invariants \cite{LiquidPLDI09},
cryptographic protocols \cite{GordonRefinement09,FournetCCS11}, and properties
of web applications~\cite{SwamyOAKLAND11}.

\mypara{Problem: Higher-Order Reasoning}
Unfortunately, refinements cannot express the \emph{higher-order} specifications
needed for higher-order imperative programs~\cite{vazou_bounded_2015}.
Consider the access-control API:
\begin{code}
  grant :: File -> IO ()                                    read :: File -> IO String
\end{code}
Here, \ha{grant} and \ha{read} represent a file 
access API that enforces access control policies:
to \ha{read} a given file, we must have been
\ha{grant}'d permission to that file.
Concretely, this means that calls to \ha{grant f} update
the state of the world to one in which \ha{f} has been added to the set of files we 
have permission to \ha{read}.

Next, consider the API for operating over a stream 
of tokens in a linear fashion:
\begin{code}
  init :: ST Token                                          next :: Token -> ST Token
\end{code}
Here, \ha{init} simply gives us the first 
token to start a stream.
To use \ha{next} to recieve a token we must 
pass it the previous token, which it will 
then invalidate.
Both of these APIs include functions where correct usage fundamentally depends
on the state of the world at their eventual call site, as they are composed using
higher-order combinators such as \ha{(>>=)} and \ha{(>>)}.

One can formalize these informal specifications 
by augmenting the source program with \emph{ghost variables} 
\cite{OwickiGries,unno_automating_2013, vazou_bounded_2015} 
that represent ``departed quantities'' that characterize 
the state of the world.
In our file access example, \ha{read f} would have 
to additionally take in the set of files we have 
been \ha{grant}ed access to, and \ha{f} must be 
a member, thus allowing us to compose the computations \ha{grant f >>  read f}.
In our stream example, \mbox{\ha{next t}} must both take 
in an additional mapping of tokens to their validity, 
in which \ha{t} must be valid, and then give us back
an updated mapping, reflecting that \ha{next t} is 
the next valid token, thus allowing us to sequence the
computations as \ha{next t >>=  next}, but not \ha{next t >>  next t}.
The key idea in both of these examples is that \ha{(>>)} (``sequence'') 
and \ha{(>>=)} (``bind'') must \emph{relate} the output of a higher-order 
argument (\ha{grant f} and \ha{next t}) to the requirements of their other 
argument (\ha{read f} and \ha{next}).
For these stateful combinators, ghost variables allow us to lift the state of the world to relate the output of one computation to the input of another.

Note that while we start with these stateful examples for familiarity, we are interested in the more general problem of reasoning with higher-order programs.
For instance, Handley et al.~\cite{liquidate} introduce a \ha{Tick} datatype
that tracks resource usage of a computation.
In this setting, one may want a higher-order constant-resource combinator such as
\begin{code}
  mapA :: (a -> Tick n b) -> xs:List a -> Tick (n * len xs) (List b)
\end{code}
Informally, \ha{mapA} maps a computation that uses a fixed amount of resources $n$ (say, time) over a list and guarantees that it will take a fixed amount of resources equal to $n$ times the length of the list.
We can use ghost variables to lift the parameter \ha{n} such that both the input function and the output computation can refer to it.

\mypara{Ghosts of Past and Future}
Notice that there is a key distinction between 
the ghost variables in our informal specifications.
The ghost variable for \ha{read} summarizes the 
history of accesses granted and therefore 
corresponds to a \emph{history variable}~\cite{OwickiGries} 
used to refine the past. 
It is determined \emph{externally} by what accesses 
have been granted by the time \ha{read} is called.
Similarly, \ha{mapA} uses the ghost variable 
\ha{n} to track costs that will have been incurred
once the computation finishes.
In contrast, \ha{next t} yields the next valid token, 
and so the ghost variable for \ha{next} captures the 
\emph{future} value when the computation is run.
This corresponds to a \emph{prophecy variable}~\cite{abadi_existence_1988} 
we can use to refine the future.
That is, \ha{next} makes an \emph{internal} choice about 
what the next token \emph{will be}.
While external choice can be encoded as an additional 
parameter to \ha{read}, internal choice can be encoded 
as adding an additional \emph{return} value to \ha{next}.
However, in both cases, these ghost variables require 
polluting code with specification-level details and 
break APIs that don't accept or produce ghost variables.

\mypara{Implicit Refinement Types}
In this paper, we introduce \emph{Implicit Refinement Types} (IRT), 
a feature that allows us to capture the above specifications 
with implicit ghost variables, while preserving the automatic 
verification properties of prior refinement type systems.
Our approach takes inspiration from \emph{implicit parameters}~\cite{lewis_implicit_2000,
oliveira_type_2010, oliveira_implicit_2012}, a popular language feature
that is the foundation of the theory of typeclasses~\cite{wadler_how_1989};
models objects~\cite{oliveira_type_2010};
lends flexibility to module systems~\cite{white_modular_2015};
and features in C\#~\cite{emir_variance_2006}, C++~\cite{heinlein_implicit_2006}, 
and Scala.
To capture the distinction between internal and external choice 
of ghost parameters, we introduce two different notions of implicits: 
implicit dependent functions (\ie implicit-$\Pi$) and 
implicit dependent pairs (\ie implicit-$\Sigma$).
Implicit functions capture the notion of \emph{external choice}, 
where the ghost parameter is determined by the caller of a function.
Dually, implicit pairs correspond to \emph{internal choice}, where 
the ghost parameter is determined by the implementation of the function.
To sum up, we make the following contributions:

\begin{itemize}

\item 
  A declarative semantics for \emphbf{implicit refinement types}
  (\secref{sec:lang}), and an inference algorithm that is sound with
  respect to the declarative semantics (\secref{sec:constraint-generation}).
  Crucially, our semantics preserve \emph{subtyping} information, allowing us
  to take advantage of refinement information in inferring implicit parameters.

\item A notion of \emphbf{implicit pair types} that, when added to implicit
  function types, allows us to express higher-order specifications
  without requiring any changes to code (\secref{sec:overview}).

\item
  We implement our system, in a tool dubbed \toolname, and
  \emphbf{evaluate} our prototype on a range of case studies 
  to demonstrate where automatic verification with 
  Implicit Refinement Types bears fruit (\secref{sec:eval}).
\end{itemize}

\section{Overview}\label{sec:overview}

We start with an example-driven overview that walks through specification and verification with
implicit functions and pairs on minimal examples of higher-order functions, and then scales these examples to verify the
access control, token stream, and resource accounting examples from \secref{sec:intro}.

\subsection{Implicit Function Types}
\label{sec:overview:implicits}

While plain refinement type systems can employ extra
parameters to capture ghost variables, they must do so explicitly, requiring
programmers to divine their values and implement bookkeeping.
To illustrate this, consider the following example of a
higher-order function \ha{foo} that accepts a function parameter:
\begin{code}
  foo :: (Bool -> Int) -> ()
  foo f = assert (f True == f False)
\end{code}
The static \ha{assert}ion requires that the function
passed to \ha{foo} be constant.
We could formalize this informal
specification by adding an extra 
ghost parameter to \ha{foo} to 
represent the singleton return 
value of its argument:
\begin{code}
  foo :: n:Int -> (Bool -> SInt n) -> ()
\end{code}
But now we must not only rewrite \ha{foo} to take this ``unused'' ghost
parameter (\ha{foo n f = assert (f True == f False)}) but must also manually
modify the call site to \ha{foo 1 (\z $\rightarrow$ 1)}.
The \emph{implicit function} type lets us handle this automatically. 
Instead of changing the definition of \ha{foo} and its uses, the 
programmer changes \ha{foo} to make \ha{n} an \emph{implicit parameter} by surrounding it with square brackets:
\begin{code}
  foo :: [n:Int] -> (Bool -> SInt n) -> ()
\end{code}

\mypara{Verifying programs with implicits:}
To verify \ha{foo} and its call sites we must check:
\begin{enumerate*}[label=(\arabic*)]
\item
  that the \ha{assert}ion in the body of \ha{foo} is valid

\item
  that at the call-site \ha{foo (\z $\rightarrow$ 1)}, the argument meets the precondition.
\end{enumerate*}
We do so via a bidirectional traversal 
that checks terms against types to 
generate and solve \emph{Existential Horn Constraints} (\ehc{}).
When checking \emph{inside} the body of \ha{foo}, 
the implicit parameter \ha{[n:Int]} behaves 
just like a standard explicit, or ``corporeal'', 
parameter one might see in a $\Pi$-binder.
That is, calls to \ha{foo} will implicitly 
``pass'' an argument for \ha{n}, so the sequel 
must hold for \emph{all} values of \ha{n}.
This constrains the explicit argument \ha{f} 
to always return \ha{n}, from which the 
\ha{assert}ion then follows directly.
Checking the \emph{call-site} is more interesting: the implicit parameter
\mbox{\ha{[n:Int]}} says the application is valid iff there \emph{exists} a
witness \ha{n} such that the remainder of the specification holds.
For \ha{foo (\z $\rightarrow$ 1)}, we require that \ha{n} 
equals the return value of \ha{(\z $\rightarrow$ 1)}, \ie \ha{n = 1}.
Let's see how to automatically find such witnesses.

\mypara{Step 1: Templates}
First, we generate \emph{templates} to represent the types 
of terms whose refinements must be inferred. 
These templates are the base unrefined types for those 
terms, refined with \emph{predicate variables} $\kappa$ 
that represent the unknown refinements~\cite{rondon_liquid_2008}.
For example, we represent the as-yet unknown type
of\, \ha{\z $\rightarrow$ 1} with the template $\trfun{z}{\tbool}{\reftpv{\vv}{\tint}{\kappa(\vv)}}$.

\mypara{Step 2: Existential Horn Constraints}
Next, we traverse the term \ha{foo (\z $\rightarrow$ 1)} in a 
bidirectional, syntax directed fashion (\secref{sec:constraint-generation}) 
to generate the \ehc{} in \figref{fig:foobar:ehc}.
\begin{itemize}
\item Constraint~(\ref{constOneBody}) comes from the 
      body of the $\lambda$-term \ha{\z $\rightarrow$ 1} and says 
      the type of the returned value \ha{1}, \ie 
      \(\reftpv\vv\tint{\vv=1}\), must be a subtype 
      of the output type \reftpv{\vv}{\tint}{\kappa(\vv)}.

\item Constraint~(\ref{fooApp}) comes from applying
      \ha{foo} to the $\lambda$-term: the type 
      of which must be a subtype of 
      \ha{foo}'s input. Since function outputs are covariant, 
      this means the output type \reftpv{\vv}{\tint}{\kappa(\vv)} 
      must be a subtype of \reftpv{\vv}{\tint}{\vv = n}.
\end{itemize}

\mypara{Step 3: NNF Constraints}
We say that the \ehc{} in \figref{fig:foobar:ehc} 
is satisfiable iff there exists a \emph{predicate}
for $\kappa$ that, when substituted 
into the constraint, yields a \emph{valid} 
first-order formula.
To determine satisfiability, we transform the 
\ehc{} into a new constraint shown in \figref{fig:foobar:hc} 
comprising
\begin{itemize}
\item 
  An \emph{\nnf{} Constraint}~\cite{bjorner_horn_2015}, 
  where we replace each existentially quantified 
  $n$ with a universally quantified version bounded 
  by a predicate variable $\pvar{n}$, as shown 
  in Constraint~(\ref{pix}), and
\item 
  An \emph{Inhabitation Constraint} that each $\pvarsym$
  is non-empty, as shown in Constraint~(\ref{sidecon}).
\end{itemize}
We write $\pvarsym$ instead of $\kappa$ here solely 
to differentiate between predicate variables from the
original \ehc{} and those produced in this translation. 
Intuitively, if every $n$ satisfying $\pi(n)$ satisfies 
the \nnf{} constraint \emph{and} $\pi$ is non-empty,
\ie there exists $n$ such that $\pi(n)$, then we can 
conclude there exists some $n$ that satisfies the 
original \ehc{}.

\mypara{Step 4: Solution} 
We require an assignment to the 
variables $\pvar{x_1}, \kappa$ 
that makes the constraint in 
\figref{fig:foobar:hc} valid.
We first compute the \emph{strongest} 
valid solution~\cite{cosman_local_2017} for $\kappa$,
which yields
$\kappa(v) \doteq v = 1$.
As we require $\pvar{x_1}$ to be inhabited (Constraint~(\ref{sidecon})), we
assign to $\pvar{x_1}$ the \emph{weakest} predicate (as detailed in
\secref{sec:solving}) that makes valid the \nnf{} Constraint~(\ref{pix}), \ie, we
assign: $\pvar{x_1}(n) \doteq n = 1$.
The above assignments for $\kappa$ and $\pvar{x_1}$ 
make the \nnf{} constraint in \figref{fig:foobar:hc} 
valid, thus verifying \ha{foo (\z $\rightarrow$ 1)}.

\begin{figure}[t]
  \begin{subfigure}[t]{0.47\textwidth}
    \begin{alignat}{2}
      & \wedge\ \csbind z{\ttrue} \Rightarrow \csbind\vv{\h\vv = 1} \Rightarrow \kappa(\h\vv)         && \label{constOneBody} \\
      & \wedge\ \h{\exists}{\h{n}}.~ \csbind{\vv}{\kappa(\h\vv)} \Rightarrow \h\vv = \h{n} && \label{fooApp}
    \end{alignat}
    \caption{Existential Horn Clause}
    \label{fig:foobar:ehc}
  \end{subfigure}
  \hspace{0.03\textwidth}
  \begin{subfigure}[t]{0.47\textwidth}
    \begin{alignat}{2}
      & \wedge\ \csbind z{\ttrue} \Rightarrow \csbind\vv{\h\vv = 1} \Rightarrow \kappa(\h\vv) && \notag \\
      & \wedge \csbind{n}{\pvaro{n}{\h{n}}} \Rightarrow \csbind{\vv}{\kappa(\h\vv)} \Rightarrow \h\vv = \h{n} && \label{pix}\tag{$\Pi$}   \\
      & \wedge\ \h{\exists}{\h{n}} .\ \pvaro{n}{\h{n}} && \label{sidecon}\tag{$\Sigma$}
    \end{alignat}
    \caption{Horn Clause and Side Condition}
    \label{fig:foobar:hc}
  \end{subfigure}
  \caption{Verifying the application \ha{foo (\\z -> 1)}}
  \label{fig:foobar}
\end{figure}

\subsection{Implicit Pair Types}\label{sec:overview:coimplicits}

Implicit function parameters \mbox{(\ha{[n:Int] ->} $\---$)} 
represent \emph{external} choice 
where the implicit's value is resolved 
at the call site.
But external choice alone is insufficient 
to capture every use of ghost variables.
Consider the following example of a function that \emph{returns} a function:
\begin{code}
  bar :: () -> (Bool -> Int)
  bar _ = (\z -> 1)
  let f = bar () in assert (f True == f False)
\end{code}
Unlike \ha{foo}, \ha{bar} instead \emph{returns} 
a function \ha{f}: we want to verify 
that the two calls to \ha{f} return 
the \emph{same} value.
This simple example models our token stream API: while in fact \ha{bar} always returns 1 we do not in general know the exact return value of a computation --- for instance the token returned by \ha{next} may be retrieved over the network.
Suppose, as for \ha{foo}, we use an implicit function argument to type \ha{bar}: 
\begin{code}
  bar :: [n:Int] -> () -> (Bool -> SInt n)
\end{code}
Unfortunately, with this type we cannot 
verify the body of \ha{bar}:
for \emph{any} external instantiation 
of \ha{n},  it requires that \ha{bar} will return a function 
that returns \ha{n}, which is not true!
Instead, \ha{bar} is making an 
\emph{internal} choice (specifically, that \ha{n = 1}).
This motivates our notion of \emph{implicit pair types}, 
(written \ha{[n:Int].} $\---$), which add a ghost parameter 
in the \emph{return} position of a function. This lets 
us specify
\begin{code}
  bar :: () -> [n:Int].(Bool -> SInt n)
\end{code}

To verify that the code is safe using this specification, 
we are once again left with two tasks:
\begin{enumerate*}[label=(\arabic*)]
\item
  check that the \ha{assert}ion at the use site of \ha{bar} is valid
\item
  check at the definition site that the body of \ha{bar} meets the specified postcondition.
\end{enumerate*}
The first task is the easier one, though it requires an additional step that was
not needed in verifying the body of \ha{foo}.
\emph{Externally}, the type \mbox{(\ha{[n:Int].} $\---$)} acts as an extra return value
that behaves exactly like a standard (``corporeal'') dependent pair and is assumed to exist in
the body of the \ha{let}.
In order to account for this, we use a type-directed elaboration to automatically insert the appropriate
``unpacking'', giving a name to \ha{n} at the use site:
\begin{code}
  unpack (n, f) = bar () in assert (f True == f False)
\end{code}
Verifying the assertion then follows the exact same logic as verifying the body
of \ha{foo}, as \ha{f} is constrained to always return a value equal to \ha{n}.

Now, we turn to the second task:
\emph{internally}, the type \mbox{(\ha{[n:Int].} $\---$)} 
states that there \emph{exists} some ``ghost'' 
value \ha{n} that makes the remaining specification 
valid.
Here, \ha{n} names the return value of the function 
returned by \ha{bar ()}.
When checking the body of \ha{bar}, we will need 
to find an instantiation of \ha{n} and implicitly 
``pair'' or ``pack'' this value with the type of 
the returned function \ha{(\z $\rightarrow$ 1)}.
Here, clearly the instantiation is \ha{n = 1}.
Note how the process of checking the definition 
site of \ha{bar} mirrors that of checking the 
use site \ha{foo (\z $\rightarrow$ 1)}.
In fact, checking the body of \ha{bar} will 
generate the exact same constraints shown in 
\figref{fig:foobar}, which will be solved by 
the same process, producing the same solution, 
which suffices to verify that \ha{bar}
implements its specification.

\subsection{State}
\label{sec:overview:state}

\begin{figure}[t]
  \begin{code}
-- | A Hoare-style State Transformer --------------------------------------
data HST p    q     s a = State (s -> (s, a))
type SST w_in w_out s a = HST {w|w = w_in} {w|w = w_out} s a 

-- | Read and write the state ---------------------------------------------
get :: [w:s] -> SST w w s {v:s|v = w}
get = State (\s -> (s, s))

set :: w:s -> HST s {v:s|v = w} s ()
set w = State (\_ -> (w, ()))
 
-- | Monadic Interface for HST --------------------------------------------
pure  :: [w:s] -> x:a -> SST w w s a 
>>= :: [w1 w2 w3] -> SST w1 w2 s a -> (a -> SST w2 w3 s b) -> SST w1 w3 s b

-- | Client: Computing a "fresh" Int  -------------------------------------
fresh :: [n:Int] -> SST n (n + 1) Int n 
fresh = do { n <- get; set (n + 1); pure n }
\end{code}
  \caption{Typing Stateful Computations using Implicits}
  \label{fig:state}
\end{figure}

Next, we show how Implicit Refinement Types allow us 
to develop a new way of typing stateful 
computations, represented as higher-order 
state transformers. 
The key challenge here is to devise 
specification mechanisms that can \emph{relate} 
the state \emph{after} the transformation with 
the state \emph{before}. For example, to say that 
some function \emph{increments} a counter, we 
need a way to say that the value of the counter 
after the transformation is one greater than the value 
before. 
Previous methods do this either by typing the computation with \emph{two-state}
predicates (as in YNot~\cite{nanevski_ynot_2008}) or with a \emph{predicate
transformer} that computes the value of the input state in terms of the
output~\cite{Swamy:2013}.

Implicit Refinement Types enable a new way to relate the 
input and output states while still ensuring that each atomic 
component of the specification simply refers to a single value.
We will see that implicits are the crucial ingredient,
allowing us to \emph{name} --- and hence, reason about --- the output 
state, much as they did in the simplified \ha{foo} and \ha{bar} functions.

\mypara{A Hoare-Style State Transformer Monad}
First, as shown in \figref{fig:state}, we define a 
state transformer monad indexed by the type of the state \ha{s} 
and the computation's result \mbox{\ha{a}.}
We call this a \emph{Hoare State Monad} as it is also indexed with two
\emph{phantom} parameters \ha{p} and \ha{q} which will be refinement types
describing the \emph{input} and \emph{output} states of the transformer.
For convenience, we also define the singleton 
version \ha{SST i o s a} where the 
\ha{pre}-condition and \ha{post}-conditions 
are singleton types that say that the input 
(resp. output) state is exactly \ha{i} 
(resp. \ha{o}).
We can write a combinator to \ha{get} the 
``current state'', represented by the implicit 
parameter \ha{w} that is both the input and 
output state, and also used in the singleton 
type of the result of the computation.
Finally, we can write a combinator to \ha{set} the state to 
some new (explicitly passed) value \ha{w}, in which case, 
the input state can be any \ha{s}.

\mypara{A Monadic Interface using Implicit States}
Next, we develop a monadic interface for programming with \ha{HST}, by
implementing the \ha{pure} and \ha{>>=} combinators whose types use implicit
parameters to relate their input and output states.
The \ha{pure} combinator takes an implicit \ha{w} and returns a ``pure''
computation \ha{SST w w s a} whose result is the input \ha{x} and where the
state is \emph{unchanged}, \ie where the input and output states are both
\ha{w}.
The bind combinator \ha{(>>=)} combines transforms
from \ha{w1} to \ha{w2}, and
from \ha{w2} to \ha{w3} into a single transform
from \ha{w1} to \ha{w3}.
Here, one can think of \ha{w1} as a history variable summarizing the state of
the world before the transformed computation and \ha{w2} and \ha{w3} as prophecy
variables predicting what the respective sub-transformers will compute.
The implicit refinement specification then ensures that these ghost variables
all align appropriately.

\mypara{Verifying Clients}
We can use our interface to write a specification 
and implementation of the function \ha{fresh} that 
``increments'' a counter captured by the state parameter. 
This program \ha{get}s an integer state \ha{n}, and \ha{set}s it 
to \ha{n + 1}, and then returns \ha{n}. 
The \ha{do}-notation desugars into the monadic interface 
shown above in \figref{fig:state}.
The specification captures the fact that the ``counter'' 
is incremented by relating the input and output states via 
the implicit parameter \ha{n}.
Notice that the implicit parameters are doubly crucial:
first, they let us relate the input- and output-states, and 
second, they make programming pleasant by not requiring the 
programmer tediously spell out the intermediate states.

\subsection{Access Control} \label{subsec:examples:grant}

\begin{figure}[t!]
\begin{code}
-- | Access control policy State Transformer -----------------------
type AC p1 p2 a = SST p1 p2 (Set String) a

-- | Grant or revoke access to a file path -------------------------
grant :: [p:Set String] -> f:String -> AC p (p Set_cup single f) ()
grant f = State (\p -> (insert f p, ()))

revoke :: [p:Set String] -> f:String -> AC p {v|v = p - single f} ()
revoke f = State (\p -> (delete f p, ()))

read :: [p:Set String] -> {f:String|f Set_mem p} -> AC p p String
read f = State (\p -> (p, "file contents"))

-- | How one might safely read a file ------------------------------
main = runST {} (do grant "f.txt"; read "f.txt")

-- | Enabling dynamic access control policies ----------------------
canRead :: [p:Set String] -> f:String -> AC p p {v:Bool|v = f Set_mem p}
canRead f = State (\p -> (p, member f p))

safeRead :: [p:Set String] -> f:String -> AC p p (Maybe String)
safeRead f = do r <- canRead; if r then Just <$> read f else pure Nothing
\end{code}
\caption{Verifying Access Control Policies}
\label{fig:revoke}
\end{figure}

The refined Hoare State Transformer lets us 
specify and verify the access control and 
token stream examples from \secref{sec:intro}.
In \figref{fig:revoke} we show how we can 
instantiate it with refinements over the 
theory of sets to derive 
a stateful API representing the verified 
access control primitives of \ha{grant} 
and \ha{read}.
We first define \ha{AC} as a specialization 
of the \ha{SST} monad where we track file 
access permissions as a set of filenames 
both at the runtime level and---using implicits to relate the input and 
output states---at the type level.

\mypara{An API for safe file access}
We use the \ha{AC} monad to develop an API 
that statically enforces compliance with 
an access control list (\textsc{acl}).
The \ha{grant} primitive adds a file name 
to the access list, and \ha{read} statically 
checks that the file is in the \textsc{acl} and---for
simplicity---just returns the string \mbox{\ha{"file contents"}.}
In conjunction with the implementations 
of \ha{pure} and \ha{>>=} from \figref{fig:state}, 
the combinators can be used to verify \ha{main} which, 
running with an initial empty access control list, 
grants permission to read a file then reads the file.
If we accidentally tried to \ha{read} from, 
say, \ha{"secret-password.txt"}, the type 
checker would reject the program as unsafe.

\mypara{Dynamic policies}
Systems enforcing access control policies in 
practice~\cite{ChongOSDI14} are not necessarily 
limited to a static policy---instead, they 
define a \emph{checked read}, which checks 
at runtime whether a file is in the access 
control list, and then reads that file, 
returning a failure result if the permission 
check fails.
We enable this via \ha{canRead}, which 
determines if a file is in the dynamic \textsc{acl}.
This is reflected at the type level by 
passing in an implicit access control 
list \ha{p} and specifying that 
\ha{canRead} returns true iff 
its argument is in the \textsc{acl} \ha{p}.
We then use \ha{canRead} to define 
\ha{safeRead}, which can be called 
with no conditions on the \textsc{acl}.
Instead, it uses \ha{canRead} to 
dynamically check permissions, 
returning \ha{Nothing} on failure.
Crucially, to verify \ha{main} 
and \ha{safeRead} we do not need 
to tediously instantiate ghost 
variables, as Implicit Refinement Types 
automatically infer the suitable 
instantiations.

\subsection{Token Stream} \label{subsec:examples:tokens}

\begin{figure}[t!]
\begin{center}
\begin{code}
-- | Token State Transformer ---------------------------------------------
type TokM m1 m2 a = SST m1 m2 Tk a

-- | Start a stream, Get the next token unless the stream is done --------
done  :: Tk
init :: [t:Tk]. TokM {} (store {} t top) {v|v = t}
next :: [m:Map Tk Bool] -> t:{Tk | (select m t) /\ (not (t = done))}
        -> ([t':Tk]. TokM m (store (store m t bot) t' top) t')

-- | Looping through all tokens ------------------------------------------
client :: [m:Map Tk Bool] -> t:{select m t} -> TokM m {m'|select m' done} ()
client t = if t == done then pure () else do {t' <- next t; client t'}

-- | Starting the stream and running client ------------------------------
main :: TokM {} {m|select m done} ()
main = do {t <- init; client t}
\end{code}
\end{center}
\caption{Verifying a Token Stream API}
\label{fig:tokens}
\end{figure}

The Hoare State Transformer can be used 
to verify the token stream example of 
\secref{sec:intro}.
In \figref{fig:tokens} we show how to instantiate 
it with refinements over maps that track the 
validity of tokens.
First, we define a specialization of the \ha{SST} monad named \ha{TokM} whose
concrete state is a token (of type \ha{Tk}) that tracks the last token we sent.
\ha{TokM}'s ghost state is a map of the status 
of \emph{every} token.
That is, \ha{select m t} represents the proposition 
that the token \ha{t} is \emph{valid} (the next value 
to pass to the API): if \ha{select m t = top} 
(shorthand for \ha{true}), then \ha{t} is valid.
Otherwise, \ha{select m t = bot} 
(shorthand for \ha{false}) means 
that the token \ha{t} has been 
used and is now invalid.

\mypara{An API for streaming tokens}
The \ha{TokM} monad lets us specify and verify the 
token streaming API from \sectionref{sec:intro}.
First, we specify a special token \ha{done} 
that represents the last token of the stream.
On the other hand, \ha{init} represents a 
\emph{computation} that begins the stream 
using an implicit pair to capture that there 
is \emph{some} valid token \ha{t} resulting
from the computation of \ha{init}.
Moreover, \ha{init} starts with the empty map 
(with no stale tokens) to ensure that we may 
only begin the stream once.

The workhorse of this API is \ha{next}.
It first takes an implicit history parameter 
\ha{m} representing all of the past tokens.
We then check that the token \ha{t} passed 
in is not the \ha{done} token, and use \ha{m} 
to constrain \ha{t} to be valid (\ha{select m t}).
Finally, another implicit pair is used to 
produce the \emph{prophecy} variable \ha{t'}
which is both the next value returned by the 
computation, and the next valid token as noted 
by the ghost state, which also marks the old 
token \ha{t} invalid.

We can now develop the \mbox{\ha{client},} 
which recursively consumes the
remaining stream of tokens.
Were we to attempt to reuse the token \ha{t} 
in the recursive call the program will be 
correctly rejected as unsafe.
\ha{main} kicks off the stream with
\ha{init} and consumes it using \mbox{\ha{client}.}
Thus, implicit refinements eliminate 
the tedium of manually instantiating ghosts.

\begin{figure}[t!]
\begin{code}
-- | Singletons as resource counts ---------------------------------------
data Tick t a = Tick a
type T t a = Tick {v:Int | v = t} a 

-- | The Applicative Functor API -----------------------------------------
<*> :: [n:Int m:Int] -> T n (a -> b) -> T m a -> T (n + m) b
</> :: [n:Int m:Int] -> T n (a -> b) -> T m a -> T (n + m + 1) b
pure :: x -> T 0 a

-- | Appending two lists in a linear number of steps ---------------------
++ :: xs:(List a) -> ys:(List a) -> T (len xs) {v|len v = len xs + len ys}
[]      ++ ys = pure ys
(x:xs') ++ ys = pure (x:) </> (xs'++ys)

-- | Mapping a costly function over a list -------------------------------
mapA :: [n] -> (a -> T n b) -> xs:List a -> T (n * len xs) (List b)
mapA f []     = pure []
mapA f (x:xs) = pure (:) <*> f x <*> mapA f xs
\end{code}
\caption{Intrinsic Verification of Resource Usage}
\label{fig:tick}
\end{figure}

\subsection{Intrinsic Verification of Resource Usage}
\label{subsec:tick}
Next, we demonstrate how implicit refinement types can be used for specifying higher-order programs beyond the state monad: in particular, tracking resource usage.
\figref{fig:tick} defines an applicative functor for counting resource usage in the same style as Handley et al.~\cite{liquidate}.
This API has both the standard application operator \ha{<*>} and a resource-consuming application operator \ha{</>}.
The API counts the number of times we use the \ha{</>} operator, which allows us to apply a function \ha{f} with cost \ha{n} to an argument \ha{x} of cost \ha{m}, incurring a total cost of \ha{n + m + 1}.

Handley et al.~\cite{liquidate} show these combinators can be used to verify properties about
resource usage.
We adapt their example of counting the recursive steps in \ha{xs ++ ys}.
At each recursive step, we append another element \ha{x} to the beginning of the list using \ha{pure (x:) </>}.
Ultimately \ha{(++)} will use \ha{len xs} applications of this operator to build this list, verifying that \ha{(++)} is linear in the first argument.

Using this API we can further define the higher-order constant-resource combinator \ha{mapA}, which allows us to map a function \ha{f} of constant cost \ha{n} over a list \ha{xs}, and automatically verify that doing so costs \ha{n * len xs}.
This is easy to specify with implicit refinement types: the implicit argument
lifts the output cost of the function argument so that we may relate it with the
overall cost of calling \ha{mapA}.

\mypara{Two-State Specifications}
It is worth pausing here to recall the standard technique for specifying effectful programs like the ones we have shown: the two-state specifications that allow expressing the input and output requirements of a particular computation.
For instance, \ha{fresh} (\figref{fig:state}) would be given a specification such as \ha{requires (\s $\rightarrow$ top)} \ha{ensures (\s o s' $\rightarrow$ s' = s + 1 $\wedge$ o = s)} where \ha{s} and \ha{s'} represent the input and output state and \ha{o} represents the output of the computation.
Notably, even if our language included such two-state specifications, specifying \ha{mapA} would still require an extra parameter, as the relationship between the start and end ``states'' of \ha{mapA} depends on the relationship between the start and end ``states'' of the \ha{(a $\rightarrow$ T n b)} argument.

On the other hand, implicit refinements scale from capturing relations between the inputs and
outputs of a single computation to relating a higher-order computation to its
function argument(s) without having to ``hardwire'' some notion of two-states or
pre/post conditions.
Instead, they allow us to \emph{name} the input and output worlds
and \emph{lift them to the top level}, which allows assertions (refinements) that span
those states/worlds, including in examples such as \ha{mapA}.
The key contribution of Implicit Refinement Types then is that they work both
for classic two-state specifications \emph{and other} use-cases where two-state
specifications prove cumbersome and allow the necessary extra parameter of
functions like \ha{mapA} to be instantiated automatically.

\section{Programs}\label{sec:lang}

\begin{figure}[t!]
  \begin{center}
    \small
    \begin{tabular}{>{$}r<{$} >{$}c<{$} >{$}l<{$}}
      \multicolumn{3}{l}{\textbf{Types}} \\
      b & \bnfdef & \tint \spmid \tbool \spmid \cdots \\
      t & \bnfdef & \reftpv{x}{b}{r} \spmid \trfun{x}{t}{t} \spmid \trifun{x}{t}{t} \spmid \tcoifun{x}{t}{t} \\

      \multicolumn{3}{l}{\textbf{Terms}} \\
      c & \bnfdef & \tfalse, \ttrue \spmid 0, 1, \ldots \spmid \wedge, \vee, +, -, =, \leq, \ldots \\

      e & \bnfdef & c \spmid x \spmid \elambda{x}{t}{e} \spmid \app{e}{e} \spmid \eletin{\tb{x}{t}}{e}{e} \spmid \ilambda{x}{t}{e} \spmid \eunpack{x}{y}{e}{e} \\

      \multicolumn{3}{l}{\textbf{Contexts}} \\
      \Gamma & \bnfdef & \bullet \spmid \Gamma, \fbd{x}{t} \spmid \Gamma, \ebd{x}{t}
    \end{tabular}
  \end{center}

  \small
  \judgementHead{Type Checking}{\hastype{\Gamma}{e}{t}}
  \begin{mathpar}
    \mprset{sep=1em}
    \inferrule[\tAbsi]
      {\hastype{\Gamma, \ebd{x}{t_x}}{e}{t}}
      {\hastype{\Gamma}{\ilambda{x}{t_x}{e}}{(\trifun{x}{t_x}{t})}}

    \inferrule[\tApp]
      {\hastype{\Gamma}{e_1}{t}
      \\ \appjudge{\Gamma}{t}{e_2}{t'}}
      {\hastype{\Gamma}{\app{e_1}{e_2}}{t'}}

    \inferrule[\tVar]
      {\fbd{x}{t} \in \Gamma}
      {\hastype{\Gamma}{x}{t}} 

    \inferrule[\tLetBase]
      {\hastype{\Gamma}{e_x}{t_x}
      \\ \hastype{\Gamma,\fbd{x}{t_x}}{e}{t}
      \\\\ \isWellFormed{\Gamma}{t}
      \\ \isWellFormed{\Gamma}{t_x}
      \\ \forgetreft{t_x} = \tau}
      {\hastype{\Gamma}{\eletin{\tb{x}{\tau}}{e_x}{e}}{t}}

    \inferrule[\tUnpack]
      {\hastype{\Gamma}{e_1}{\tcoifun{x'}{t_1}{t_2}}
      \\\\ \hastype{\Gamma,\ebd{x}{t_1},\fbd{y}{\subst{t_2}{x'}{x}}}{e_2}{t}
      \\ \isWellFormed{\Gamma}{t}}
      {\hastype{\Gamma}{\eunpack{x}{y}{e_1}{e_2}}{t}}
  \end{mathpar}

  \judgementHead{Application Checking}{$\appjudge{\Gamma}{t_1}{e}{t_2}$}
  \begin{mathpar}
    \mprset{sep=1em}
    \inferrule[\tAppIFun]
      {\appjudge{\Gamma}{\subst{t}{x}{e'}}{e}{t'}
      \\ \hastype{\forgetimplicits{\Gamma}}{e'}{t_x}}
      {\appjudge{\Gamma}{\trifun{x}{t_x}{t}}{e}{t'}}

    \inferrule[\tAppFun]
      {\hastype{\Gamma}{e}{t_e}
      \\ \isSubType{\Gamma}{t_e}{t_x}
      \\\\ \isSubType{\Gamma, \fbd{y}{t_e}}{\subst{t}{x}{y}}{t'}
      \\ \isWellFormed{\Gamma}{t'}
      \\ y \ \text{fresh}}
      {\appjudge{\Gamma}{\trfun{x}{t_x}{t}}{e}{t'}}
  \end{mathpar}

  \judgementHead{Subtyping}{\isSubType{\Gamma}{t_1}{t_2}}
  \begin{mathpar}
    \mprset{sep=1em}

    \inferrule[\tSubBase]
      {\embed{\Gamma}(\forall v_1:b.r_1 \h{\Rightarrow} \subst{r_2}{v_2}{v_1}) \text{ is valid}}
      {\isSubType{\Gamma}{\reftpv{v_1}{b}{r_1}}{\reftpv{v_2}{b}{r_2}}}

    \inferrule[\tSubSigmaR]
      {\isSubType{\Gamma}{t_1}{\subst{t_2'}{x}{e}}
      \\ \hastype{\forgetimplicits{\Gamma}}{e}{t_2}}
      {\isSubType{\Gamma}{t_1}{\tcoifun{x}{t_2}{t_2'}}}
  \end{mathpar}
\caption{{Syntax and static semantics of \lang}}
\label{fig:lang}
\end{figure}

We start with a declarative 
static semantics for our elaborated core language $\lang{}$. 
Our discussion here omits polymorphism as it is orthogonal to adding implicit
types. (The full system can be found in \appref{sec:rules:lang}).

\subsection{Syntax}\label{sec:lang:syntax}

\figref{fig:lang} presents the syntax of our source language---a
lambda calculus with refinement types, extended with implicit function and dependent
pair types.

\mypara{Types} of $\lang{}$ begin with base types $\tint$ and $\tbool$, which
are \emph{refined} with a (boolean-valued) expression $r$ to form refined
base types $\reftpv{x}{b}{r}$.
Next, $\lang{}$ has dependent function types $\trfun{x}{t_1}{t_2}$.
Dependent function types are complemented by \emph{implicit} dependent function types
$\trifun{x}{t_1}{t_2}$, which are similar,
except that the parameter $x$ is passed \emph{implicitly}, and does not occur at runtime.
Dually, we have \emph{implicit} dependent pairs $\tcoifun{x}{t_1}{t_2}$, which
represent a pair of values: The first, named $x$, of type $t_1$, is
implicit (automatically determined) and does not occur at runtime.
Meanwhile, the second is of type $t_2$ which may refer to $x$.
We use $\tau$ to denote \emph{unrefined} types.

\mypara{Terms} of $\lang{}$ comprise
\emph{constants} (booleans, integers and primitive operations)
and \emph{expressions}.
Let binders are half-annotated with either a refinement type to be checked, or a
base type on which refinements are to be inferred.

In addition to explicit function abstraction, $\lang{}$ has the
implicit $\lambda$-former $\ilambda{x}{t}{e}$, where the parameter $x$ represents a
\emph{ghost} value that can only appear in refinement types.
Implicit functions are instantiated automatically, so there is no
syntax for eliminating them.
Similarly, implicit dependent pairs are introduced automatically,
and thus have no introduction form in $\lang{}$.
Instead, implicit dependent pairs have an \emph{elimination} form
$\eunpack{x}{y}{e_1}{e_2}$.
Here, if $e_1$ is of type $\tcoifun{x}{t_x}{t}$, then $x$ is bound at type $t_x$
and $y$ is bound at type $t$ in $e_2$.
Just like with implicit functions the $x$ represents a \emph{ghost} value that
may only appear in refinement types.

Though both implicit lambda and unpack terms are present in our model,
in practice we handle their insertion by elaboration: we discuss this
aspect of our implementation in \secref{sec:eval:implementation}.
In light of this, we develop the theory of Implicit Refinement Types in terms of
the fully elaborated expressions of the $\lang$ syntax.

\mypara{Contexts}
of $\lang{}$, written $\Gamma$,
comprise the usual ordered sequences of ``corporeal'' binders 
$\fbd{x}{t}$, where $x$ 
is visible in both terms and refinements, as well as 
\emph{ghost binders} $\ebd{x}{t}$, 
where $x$ is only visible in 
refinements.

\subsection{Static Semantics}\label{sec:lang:static}

\figref{fig:lang} provides an excerpt of the declarative typing 
rules for $\lang{}$ (the complete rules are in \appref{sec:rules:lang}).
Most of the rules are standard for refinement types~\cite{rondon_liquid_2008} so we focus our attention
on the novel rules regarding implicit types.

\mypara{Type Checking} judgments of the 
form $\hastype{\Gamma}{\expr}{\type}$ mean
``in context $\Gamma$, the term $\expr$ has 
type $\type$.''
The ghost binders in $\Gamma$, written $\ebd{x}{t}$, reflect
the ghostly, refinement-only nature of implicits.
This distinction is witnessed by the rule [\tVar{}] 
which types term-level variables using only the 
corporeal binders $\fbd{x}{t}$ in $\Gamma$.
This ensures that implicit variables are erasable: they can only appear in types
(specifications) and thus \emph{cannot} affect computation.

With the separation of ghostly (erasable) implicit binders from corporeal
(computationally relevant) binders, both the introduction rule for implicit
functions [\tAbsi{}] and the elimination rule for implicit pairs [\tUnpack{}]
are standard up to the ghostliness of binders.
Implicit pairs are eliminated through ``unpacking'' as is typical for dependent
pairs and existential types.
The rule [\tLetBase{}] demonstrates how we 
handle annotations at unrefined base types $\tau$: 
we pick some well-formed refinement type $t_x$
that erases to the base type $\forgetreft{t_x} = \tau$, 
and then use $t_x$ as the bound type of $x$.

Lastly, we split off type checking of applications 
into an application checking judgment,
in order to handle instantiations of implicit 
functions. 
In the rule [\tApp{}] we use this additional 
judgment to check that the argument $e_2$ 
is compatible with the input type of $e_1$.

\mypara{Application Checking} judgments of the form
{$\appjudge{\Gamma}{t}{e}{t'}$} mean
``when $e$ is the argument to a function of type $t$
the result has type $t'$''.
The (corporeal) application rule [\tAppFun{}], 
finds the type $t_e$ of $e$, checks
that this is consistent with the input type 
$t_x$ of the function, and then creates a new 
name $y$ to refer to $e$ so that we 
may substitute it into the return type.
However, $y$ is not bound in $\Gamma$ so we 
guess a type $t'$, well-formed under $\Gamma$, 
for the entire term, and check that the return 
type $\subst{t}{x}{y}$ is a subtype of $t'$.
The rule [\tAppIFun{}] checks implicit 
applications by \emph{guessing} an 
expression $e'$ at which to instantiate 
the implicit parameter. 
This $e$ is only used at the refinement-level 
and is thus allowed to range over both corporeal 
and ghost binders in $\Gamma$, as described by 
the antecedent $\hastype{\forgetimplicits{\Gamma}}{e}{t}$. 
($\forgetimplicits{\Gamma}$ replaces each ghost binder $\ebd{x}{t}$ in $\Gamma$
with a corresponding corporeal binder $\forgetimplicits{\ebd{x}{t}} =
\fbd{x}{t}$.)
The rule then continues along the spine of the application, further
instantiating implicit parameters as necessary until we can apply 
the rule [\tAppFun{}].

\mypara{Subtyping Judgments}
of the form ${\isSubType{{\Gamma}}{\type_1}{\type_2}}$ 
mean ``in context $\Gamma$, the values of $\type_1$ are 
a subset of the values of $\type_2$.''
Most of the rules are standard, with the base 
case [\tSubBase{}] reducing to a 
\emph{verification condition} (VC) 
that checks if one refinement 
\emph{implies} another.
[\tSubSigmaR{}] serves as the 
``introduction form'' for implicit pairs, 
and states that a term of type $t_1$ can 
be used as an implicit pair type if there 
is some expression $e$ of type $t_2$ such 
that $t_1$ is a subtype of $t_2'$ instantiated 
with that $e$.
Note that this is similar to explicitly constructing the dependent pair with $e$
as the first element.

\section{Logic}\label{sec:log}
\begin{figure}[t!]
  \begin{center}
  \begin{tabular}{r >{$}r<{$} >{$}r<{$} >{$}l<{$} >{$}l<{$}}
    \textit{Predicates} & r & \bnfdef & \text{\ldots varies \ldots} \\
    \textit{Types} & \tau & \bnfdef & \text{\ldots varies \ldots} \\
    \textit{Propositions} & p & \bnfdef & \applykvar{\kappa}{\overline{x}} \spmid r \\
    \textit{Existential Horn Clauses} & c & \bnfdef & \exists x : \tau. c \spmid c \wedge c \spmid \forall x: \tau . p \Rightarrow c \spmid p \\
    \textit{First Order Assignments} & \Psi & \bnfdef & \bullet \spmid \Psi, \fobind{r}{x}\\
    \textit{Second Order Assignments} & \Delta & \bnfdef & \bullet \spmid \Delta, \sobind{\lambda \overline{x}.r}{\kappa}
  \end{tabular}
  \end{center}
\caption{Syntax of \smtlang{}}
\label{fig:rules:smt}
\end{figure}

We define the syntax and semantics of \emph{verification conditions} (VCs) 
generated by rule [\tSubBase{}].
\figref{fig:rules:smt} summarizes the 
syntax of VCs, \emph{Existential Horn Clauses} (\textsc{ehc}), 
which extends the \textsc{nnf} Horn 
Clauses used in Cosman and Jhala~\cite{cosman_local_2017}
with \emph{existential} binders.
\emph{Predicates} $r$
range over a decidable
background theory.
\emph{Propositions} $\h{p}$
include predicates and second order 
\emph{predicate variables} $\applykvar{\kappa}{\overline{\h{x}}}$.
\emph{Clauses} $\h{c}$ comprise 
quantifiers, conjunction, and propositions.
In the syntax tree of clauses, there are two places a $\h{\kappa}$ variable may appear: as a leaf (\emph{head} position), or in an antecedent under a
universal quantifier $\h{\forall x:\tau.\applykvar{\kappa}{\overline{y}}
\Rightarrow c}$, (\emph{guard} position).

\emph{Dependencies} 
\label{subsec:deps}
of an \ehc{} constraint \h{c} are the set $E$ 
of pairs $(\kappa, \kappa')$ such that $\kappa$ appears 
in guard position in \h{c} and $\kappa'$ appears 
in head position under that guard.
When $(\kappa, \kappa') \in E$, we say that 
$\kappa$ \emph{depends on} $\kappa'$, or, dually, 
$\kappa'$ appears \emph{under} $\kappa$.
The \emph{cycles} in a constraint \h{c} are 
nonempty sets $S$ of predicate variables 
such that: for all $\kappa \in S$, there 
exists $\kappa' \in S$ such that $(\kappa,\kappa') \in E$.
A set of predicate variables $S$ is said 
to be \emph{acyclic} in \h{c}, if all cycles in \h{c} 
contain at least one predicate variable not in 
$S$.
A predicate variable ${\kappa}$ is said to be \emph{acyclic} 
in \h{c}, if no cycles in \h{c} contain $\kappa$.
A constraint $\h{c}$ is said to be \emph{acyclic} if 
there are no cycles in $\h{c}$.

\mypara{Semantics}
\label{subsec:logic-semantics}
[\tSubBase{}] checks the \emph{validity} of the formula obtained by interpreting
contexts and terms in our constraint logic, (formalized by $\embed{\cdot}$ in \appref{sec:rules:logic}).
Contexts $\Gamma$ yield a sequence
of universal quantifiers for all variables bound at 
(interpretable) basic types.
Recall that VCs do not contain predicate variables 
$\applykvar{\kappa}{\overline{\h{x}}}$.
The restricted grammar of the VCs
is designed to be amenable to SMT 
solvers, represented by an oracle $\smtvalid{\h{c}}$ that checks validity, defined at the end of this section.

We eliminate \emph{predicate variables} $\kvar$ via substitution,
$\applyso{\h{\Delta}}{\h{c}}$ (defined in \appref{sec:solving:complete}) that map them onto meta-level lambdas $\lambda \overline{\h{x}}.\h{p}$.
We represent solutions to
\emph{existential binders} with a substitution ($\Psi$) binding existential
variables to predicates.
We define an \emph{existential substitution}
$\fosubst{\h{c}}{\h{x}}{\h{r}}$ recursively over
$\h{c}$ which removes the corresponding existential binder, using standard
substitution to replace $\h{x}$ with its solution $\h{r}$:
$\fosubst{(\exists x:\tau.c)}{x}{r} \doteq \subst{c}{x}{r}$.

\emph{\ehc{} Validity} is then defined by the judgment
$\valid{\h{\Delta}}{\h{\Psi}}{\h{c}}$.
Intuitively, $\h{c}$ is valid under the substitutions 
$\h{\Delta}, \h{\Psi}$ if the result of applying the substitutions 
yields a VC that is valid.
We say that an \textsc{ehc} is \emph{satisfiable}, 
written $\vDash \h{c}$ if \emph{there exist} 
$\h{\Delta}$ and $\h{\Gamma}$ such that  
$\valid{\h{\Delta}}{\h{\Gamma}}{\h{c}}$.
We say that \h{c} and \h{c'} are \emph{equisatisfiable} 
when $\vDash \h{c}$ iff $\vDash \h{c'}$.

\section{Type Inference}
\label{sec:constraint-generation}

\begin{figure}[t]
  \small

  \judgementHead{Subtyping}{\subtyping{\Gamma}{t_1}{t_2}{c}}
  \begin{mathpar}
    \mprset{sep=0.6em}

    \inferrule
      {c = \forall x:\tau. \embed{r} \Rightarrow \embed{\subst{r'}{y}{x}}}
      {\subtyping{\Gamma}{\reftpv{x}{\tau}{r}}{\reftpv{y}{\tau}{r'}}{c}}

    \inferrule
      {\subtyping{\Gamma, \fbd{z}{t_2}}{t_1}{\subst{t_2'}{x}{z}}{c}
      \\ z \text{ fresh}}
      {\subtyping{\Gamma}{t_1}{\tcoifun{x}{t_2}{t_2'}}{\genexists{z}{t_2}{c}}}
  \end{mathpar}

  \judgementHead{Checking}{\checking{\Gamma}{e}{t}{c}}
  \begin{mathpar}
    \mprset{sep=0.6em}

    \inferrule[\cSub]
      {\synth{\Gamma}{e}{t'}{c}
      \\\\ \subtyping{\Gamma}{t'}{t}{c'}}
      {\checking{\Gamma}{e}{t}{c \wedge c'}}

    \inferrule[\cLetBase]
      {\checking{\Gamma}{e_1}{\hat{t}}{c_1}
      \\\\ \checking{\Gamma,\fbd{x}{\hat{t}}}{e_2}{t}{c_2}
      \\ \hat{t} = \fresh{\Gamma}{\tau}}
      {\checking{\Gamma}{\eletin{\tb{x}{\tau}}{e_1}{e_2}}{t}{c_1 \wedge (\genimp{x}{\hat{t}}{c_2})}}

    \inferrule
      {\synth{\Gamma}{e_1}{\tcoifun{x'}{t_1}{t_2}}{c_1}
      \\ \checking{\Gamma,\ebd{x}{t_1},\fbd{y}{t_2'}}{e_2}{t}{c_2}
      \\\\ t_2' = \subst{t_2}{x'}{x}
      \\ c = c_1 \wedge (\genimp{x}{t_1}{(\genimp{y}{t_2'}{c_2})})}
      {\checking{\Gamma}{\eunpack{x}{y}{e_1}{e_2}}{t}{c}}

    \inferrule
      {\synth{\Gamma}{e_1}{t'}{c_1}
      \\\\ \spinechecking{\Gamma}{t'}{e_2}{t}{c_2}}
      {\checking{\Gamma}{\app{e_1}{e_2}}{t}{c_1 \wedge c_2}}
  \end{mathpar}

  \judgementHead{Synthesis}{\synth{\Gamma}{e}{t}{c}}
  \begin{mathpar}
    \inferrule
    {\fbd{x}{t} \in \Gamma}
    {\synth{\Gamma}{x}{t}{\true{}}}

    \inferrule
    {\synth{\Gamma}{e_1}{t_1}{c_1}
      \\ \spinesynth{\Gamma}{t_1}{e_2}{t_2}{c_2}}
    {\synth{\Gamma}{\app{e_1}{e_2}}{t_2}{c_1 \wedge c_2}}
  \end{mathpar}

  \judgementHead{Application Checking}{\spinechecking{\Gamma}{t}{e}{t'}{c}}
  \begin{mathpar}
    \inferrule[\cAppFun]
    {\checking{\Gamma}{y}{t_x}{c_1}
      \\ \subtyping{\Gamma}{\subst{t}{x}{y}}{t'}{c_2}}
    {\spinechecking{\Gamma}{\trfun{x}{t_x}{t}}{y}{t'}{c_1 \wedge c_2}}

    \inferrule[\cAppIFun]
    {\spinechecking{\Gamma,\ebd{z}{t_x}}{\subst{t}{x}{z}}{y}{t'}{c}
      \\ z \text{ fresh}}
    {\spinechecking{\Gamma}{\trifun{x}{t_x}{t}}{e}{t'}{\genexists{z}{t_x}{c}}}
  \end{mathpar}
  \caption{Constraint Generation}
  \label{fig:consgen}
\end{figure}

The declarative semantics described in \figref{fig:lang} are
decidedly non-deterministic.
This is most evident in the rules for implicits, such as [\tAppIFun{}] and
[\tSubSigmaR{}], where, out of thin air, we generate an expression to instantiate
the implicit parameter.
Additional non-determinism appears in rules like \mbox{[\tUnpack{}],} where a
refinement type must be picked such that it is well-formed under the outer
context $\Gamma$.
This is required as the body of the unpack expression is checked under $\Gamma$
extended with the binders $x$ and $y$, but the type must be well-formed under
$\Gamma$ itself to ensure that these variables do not escape (since they may
appear in the type $t$).

We account for all of the non-determinism of the declarative semantics with an
algorithmic, bidirectional type inference system~\cite{pierce_local_2000,
dunfield_complete_2013}, excerpts of which are shown in \figref{fig:consgen}
(the full rules are in \appref{sec:rules:cgen}).
We split the declarative type checking judgments 
into two forms:
synthesis ($\synth{\Gamma}{e}{t}{c}$) and 
checking ($\checking{\Gamma}{e}{t}{c}$) along with corresponding 
application synthesis ($\spinesynth{\Gamma}{t}{y}{t'}{c}$) and 
application checking ($\spinechecking{\Gamma}{t}{y}{t'}{c}$) forms.
Synthesis forms produce the type as an output while checking forms take the
type as an input. 
We also introduce an algorithmic subtyping judgment
(\mbox{$\subtyping{\Gamma}{t_1}{t_2}{c}$}).
In addition to their other outputs, these judgments 
produce an \ehc{} $c$ as an output.
The core of our inference algorithm is precisely 
in extending the restricted grammar of verification 
conditions to an \textsc{ehc} that captures the 
constraints on the non-deterministic choices.
Inference then reduces to the satisfiability 
of the constraint $c$ (as checked in \secref{sec:solving}).

\subsection{Constraining Unknown Refinements}
\label{subsec:extending-predicates}

Consider the following \lang program from \secref{sec:overview:implicits}:
\[
  \textsc{example} = \eletin{\etb{y}{(\tfun{\tbool}{\tint})}}{\elambda{z}{\tbool}{1}}{\;\app{\text{foo}}{y}}.
\]
recalling that $\text{foo}$ has the type
$\trifun{n}{\tint}{\tfun{(\tfun{\tbool}{\reftpv{v}{\tint}{v = n}})}{\tunit}}$.
We wish to check that this program is safe by checking that it types with type
$\tunit$.

To type \textsc{example} in our declarative 
semantics, we first need to apply the rule 
[\tLetBase{}] which non-deterministically chooses 
a $t_x$, well-formed under $\Gamma$
($\isWellFormed{\Gamma}{t_x}$), 
such that $t_x$ is consistent with
the base type ($\forgetreft{t_x} = \tfun{\tbool}{\tint}$).
To capture picking this refinement type 
we employ predicate variables that represent 
unknown refinements, as is standard in the 
refinement type inference literature~\cite{Knowles07,rondon_liquid_2008}.
As we know the type we wish to give \textsc{example}, 
we will focus on the checking rule [\cLetBase{}].
We generate a fresh refinement type using $\freshsym$, which takes as input a
type $t$ and the current context $\Gamma$ and then produces a
refinement type, where each base type is refined by a \emph{fresh} predicate
variable $\applykvar{\kappa}{\overline{x}}$, where $\overline{x}$ are all of the
variables bound in the context, all of which can appear in refinements at this
location.
In our example, this would give the type
$\hat{t} = \trfun{z}{\reftpv{v}{\tbool}{\applykvar{\kappa_1}{v}}}{\reftpv{v}{\tint}{\applykvar{\kappa_2}{z, v}}}$.

We then check $\elambda{z}{\tbool}{1}$ 
at the refinement type $\hat{t}$. 
Now, we use the rule [\cSub{}] to synthesize 
a type for this term and then use subtyping 
to check that the synthesized type is subsumed 
by $\hat{t}$.
The synthesized type will be
$\trfun{z}{\reftpv{v}{\tbool}{\applykvar{\kappa_3}{v}}}{\reftpv{v}{\tint}{v = 1}}$
and the subtyping check will generate the following constraint which is
equivalent to \coderef{constOneBody} in \figref{fig:foobar:ehc} modulo the
administrative predicate variable $\kappa_3$ and a simplification of the
unconstrained $\kappa_1$:
\begin{alignat*}{2}
  & \wedge \csimps{v}{\applykvar{\kappa_1}{v}}{\applykvar{\kappa_3}{v}} \\
  & \wedge \csbind{z}{\applykvar{\kappa_1}{z}} \Rightarrow \csimps{v}{v = 1}{\applykvar{\kappa_2}{z, v}}
\end{alignat*}
There is a subtlety in how [\cLetBase{}] generates the quantified subformula
$\csbind{z}{\applykvar{\kappa_1}{z}} \Rightarrow \cdots$: this formula is
generated by the clause $\genimp{x}{\hat{t}}{c_2}$ where the double colon
represents a \emph{generalized implication} that drops any variables quantified
at non-base types (as only base types are interpreted into the refinement logic).
$$
  \genimp{x}{\reftpv{x}{b}{r}}{c} \doteq \forall \tb{x}{b}.r \Rightarrow c 
  \quad
  \quad
  \quad
  \genimp{x}{t}{c} \doteq c
$$

\subsection{Constraining Implicit Application}
\label{subsec:extending-existentials}

The predicate variables let us capture 
guessed refinement types as second order 
constraints.
Next we turn to checking the \emph{implicit} 
application $\app{\text{foo}}{y}$.
$\text{foo}$ has the type
$\trifun{n}{\tint}{\tfun{(\tfun{\tbool}{\reftpv{v}{\tint}{v = n}})}{\tunit}}$,
so the declarative semantics arbitrarily picks
an expression $e$ of type $\tint$ 
to instantiate $n$.
In the algorithmic type system, we capture 
the constraints on this choice with the 
existential quantifiers of our \ehc{}.
This is shown in the application checking rule [\cAppIFun{}] (the corresponding
application synthesis rule [\sAppIFun{}] appears in \appref{sec:rules:cgen}).

Recall that the judgment $\spinechecking{\Gamma}{t}{e}{t'}{c}$ 
says that we are checking an application of a term of type $t$ to an argument 
$e$ and require that the application has the type $t'$.
Here, we are checking $\app{\text{foo}}{y}$ 
against the type $\tunit$.
Our bidirectional rule [\cAppIFun{}] ``guesses'' 
the instantiation by generating a fresh 
variable $n$ and binding it at the 
type $\tint$.
This variable is added to the context as 
predicate variables may depend on it.
We then generate a constraint $\genexists{n}{\tint}{c}$ 
which says that our guessed $n$ must be consistent 
with $c$, the constraint generated by continuing 
to check down the abstract syntax tree (along the spine 
of the application of the revealed function type
${\subst{t}{x}{y}} = \tfun{(\tfun{\tint}{\reftpv{v}{\tint}{v = n}})}{\tunit}$).
$\genexists{n}{\tint}{c}$ is the existential counterpart to the generalized
implication $\genimp{n}{\tint}{c}$:
\[  
  \genexists{x}{\reftpv{x}{b}{r}}{c} \doteq \exists \tb{x}{b}.(r \wedge c) 
  \quad
  \quad
  \quad
  \genexists{x}{t}{c} \doteq c
\]
This is now a concrete function type and the 
rule [\cAppFun{}] will check that the argument $y$ 
has the type
$\tfun{\tint}{\reftpv{v}{\tint}{v = n}}$
and that the type $\tunit$ is a subtype of $\tunit$.
Checking the implicit and then concrete application 
thus generates the constraints
$\exists n.(\csimps{v}{\true}{\applykvar{\kappa_1}{v}} \wedge{} \csimps{v}{\applykvar{\kappa_2}{\_, v}}{v = n})$.

Combining these constraints with those generated 
from checking $\elambda{z}{\tbool}{1}$, we get 
constraints equivalent to those in \figref{fig:foobar:ehc}.
A satisfying assignment is:
\[
  \Delta = [\sobind{\lambda z, v. v = 1}{\kappa_1}, \sobind{\lambda v. \true}{\kappa_2}, \sobind{\lambda v. \true}{\kappa_3}]
  \quad
  \quad
  \quad
  \Psi = [\fobind{1}{v}]
\]

Satisfying solutions to the predicate 
variables and existential constraints give instantiations 
to the angelic choices of refinement types and 
implicit arguments respectively.
This gives us the following soundness theorem 
for our type inference algorithm (where the function
$\kvars{c}$ returns the set of predicate variables in $c$):

\begin{theorem}[Soundness of Type Inference]
  {~}
  If $\synth{\bullet}{e}{t}{c}$, $\valid{\Delta}{\Psi}{c}$, and $\kvars{t} \subseteq \domain{\Delta}$, then $\hastype{\bullet}{e}{\Delta(t)}$.
\end{theorem}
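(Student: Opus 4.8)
The plan is to prove a more general statement by simultaneous induction over the derivations of all five constraint-generation judgments of \figref{fig:consgen} (synthesis, checking, application synthesis, application checking, and algorithmic subtyping), and then read off the theorem as the special case where the context is empty. For an arbitrary context $\Gamma$, the generalized claim says that whenever a judgment produces constraint $c$, and $\Delta, \Psi$ witness its validity ($\valid{\Delta}{\Psi}{c}$) while the invariant $\kvars{\cdot} \subseteq \domain{\Delta}$ holds for every type in sight, the corresponding declarative judgment holds after the substitutions are applied pointwise: e.g.\ $\synth{\Gamma}{e}{t}{c}$ yields $\hastype{\Delta(\Gamma)}{e}{\Delta(t)}$, $\subtyping{\Gamma}{t_1}{t_2}{c}$ yields $\isSubType{\Delta(\Gamma)}{\Delta(t_1)}{\Delta(t_2)}$, and the two spine judgments yield the declarative application judgment $\appjudge{\Delta(\Gamma)}{\Delta(t)}{e}{\Delta(t')}$. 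The substitution $\Delta$ resolves the predicate variables introduced by $\freshsym$ into concrete refinements, while $\Psi$ supplies the existential witnesses that the declarative rules guess ``out of thin air.''

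Before the induction I would establish a handful of supporting lemmas. First, a \emph{decomposition} lemma for validity: from $\valid{\Delta}{\Psi}{c_1 \wedge c_2}$ one recovers validity of each conjunct; from $\valid{\Delta}{\Psi}{\genimp{x}{t}{c}}$ one recovers validity of $c$ under the context extended with the universally bound $x$; and from $\valid{\Delta}{\Psi}{\genexists{z}{t}{c}}$ one extracts the witness $\Psi(z)$ together with validity of the residual constraint $\fosubst{c}{z}{\Psi(z)}$. Second, a \emph{commutation} lemma showing that $\Delta$ and $\Psi$ commute with $\embed{\cdot}$ and with type substitution $\subst{\cdot}{x}{e}$, so that applying the solution to a generated subtyping constraint yields exactly the verification condition demanded by [\tSubBase{}]. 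Third, \emph{well-formedness preservation}, needed to discharge the $\isWellFormed{\Gamma}{\cdot}$ side conditions of [\tLetBase{}], [\tAppFun{}], and [\tUnpack{}]: the freshly generated templates and the guessed return types are well-formed by construction, and this property is preserved under $\Delta$.

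With these in hand the induction is largely mechanical. The base cases ([\tVar{}], the $\true$ constraint) are immediate. For the structural rules ([\cSub{}], the synthesis/checking application rules, [\cLetBase{}]) I would split the conjoined constraint, apply the induction hypotheses to the premises, and assemble the matching declarative derivation via [\tApp{}], [\tAppFun{}], or [\tLetBase{}], using the commutation lemma to turn the algorithmic subtyping constraint into a valid [\tSubBase{}] verification condition. The binder-introducing rules for corporeal and universally-scoped ghost binders (the implicit-$\lambda$ introduction mirroring [\tAbsi{}], and [\tUnpack{}]) are handled by the $\genimp$ clause of the decomposition lemma, which hands back exactly the extended-context validity the induction hypothesis needs, matching the universal quantifier that $\embed{\cdot}$ places on the corresponding declarative context binder.

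The crux --- and the step I expect to fight with --- is the pair of rules that introduce \emph{existentials}: implicit application [\cAppIFun{}] and the algorithmic $\Sigma$-right subtyping rule, which wrap their premises in $\genexists{z}{t}{c}$. Here the declarative counterparts [\tAppIFun{}] and [\tSubSigmaR{}] demand an actual instantiating expression $e'$ (with $\hastype{\forgetimplicits{\Gamma}}{e'}{t_x}$), whereas the algorithm only leaves behind a ghost binder $\ebd{z}{t_x}$ to be solved by $\Psi$. I would use the decomposition lemma to extract $\Psi(z)$ as a refinement-level witness, lift it to the instantiating expression $e'$, and then invoke a \emph{substitution lemma} showing that specializing the premise derivation (taken under $\Gamma, \ebd{z}{t_x}$ with residual constraint $c$) at $z \mapsto e'$ yields the substituted judgment under $\Gamma$ with constraint $\fosubst{c}{z}{\Psi(z)}$. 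Making this substitution commute with the embedding, with the predicate-variable solution $\Delta$, and with the ghost/corporeal bookkeeping --- all while keeping the $\kvars{\cdot} \subseteq \domain{\Delta}$ invariant intact as binders are substituted away --- is the delicate part; everything else follows the standard bidirectional-soundness template, and the theorem falls out by instantiating the generalized lemma at $\Gamma = \bullet$.
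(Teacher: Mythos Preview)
Your plan matches the paper's proof almost exactly: the paper also states and proves a five-part mutual induction over synthesis, checking, the two spine judgments, and algorithmic subtyping, generalised to an arbitrary $\Gamma$ with the constraint closed by $\embed{\Gamma}(\cdot)$, and uses the $\genimp$/$\genexists$ decomposition and the $\Psi$-supplied witness for the existential rules just as you describe. Your explicit substitution lemma for the existential cases is, if anything, more careful than the paper, which simply says ``by the inductive hypothesis'' after extracting $\Psi(z)$.

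There is one bookkeeping point you do not address that the paper handles explicitly, and without it your induction does not go through as stated. You phrase the coverage condition as ``$\kvars{\cdot} \subseteq \domain{\Delta}$ for every type in sight,'' but in rules such as \cSub{}, the application rules, \sLetBase{}, and \tUnpack{}-synthesis, an \emph{intermediate} synthesized type $t'$ appears that may contain fresh $\kappa$'s introduced by $\freshsym$ deeper in the derivation. These $\kappa$'s need not lie in $\domain{\Delta}$: the hypothesis only covers $\kvars{\Gamma}$ and the checked type $t$, and nothing forces every fresh $\kappa$ to surface in the emitted constraint. Consequently the inductive hypothesis on the synthesis premise cannot be invoked directly. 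The paper's fix is to pick an arbitrary extension $\Delta'$ assigning those extra $\kappa$'s (well-formed but otherwise unconstrained), apply the inductive hypotheses under $\Delta',\Delta$, and then observe that since $\Gamma$ and the conclusion type $t$ contain none of the $\Delta'$-variables, a strengthening step drops $\Delta'$ and yields the desired $\hastype{\Delta(\Gamma)}{e}{\Delta(t)}$. You should add this manoeuvre; everything else in your outline is sound.
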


\section{Solving}
\label{sec:solving}

The constraints generated by algorithmic type checking 
have both predicate variables and alternating universal 
and existential quantifiers.
We must provide solutions to both predicate variables
and existential variables before we can use SMT solvers
to check the validity of a VC (\secref{subsec:logic-semantics}). 
We compute solutions in four steps:
\begin{enumerate}
\item We transform the \ehc{} by skolemization to replace existential
      variables with universally quantified predicate variables and 
      inhabitation side conditions.

\item We eliminate the original predicate variables.

\item We solve the skolem predicate variables.

\item Finally, we check the inhabitation side conditions.
\end{enumerate}

\mypara{Weakening and Strengthening}
A function $f$ on constraints is a \emph{strengthening} when $\forall \h{c}.~f(\h{c})\hrel{\Rightarrow} \h{c}$.
A function $f$ on constraints is a \emph{weakening} when $\forall \h{c}.~\h{c}\hrel{\Rightarrow} f(\h{c})$.
We prove that, if the transformations in steps 1 and 2 above are both weakening
and strengthening, our algorithm produces a verification condition that is
\emph{equisatisfiable} with the original constraint, \ie our algorithm is sound
and complete.

\mypara{Separable Constraints} 
An \ehc{} $\h{c}$ is \emph{separable} 
if it can be written as a conjunction 
$\h{c_1 \wedge c_2}$, where $\h{c_1}$ 
is an \nnf{} Horn clause and $\h{c_2}$ 
is an acyclic \ehc{}.
The following theorem exactly characterizes 
separable \ehc{}s:
\begin{theorem}\label{thm:separable}
  \h{c} is separable iff there are no 
  cyclic $\h{\kappa}$s under existential binders.
\end{theorem}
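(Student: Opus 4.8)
The statement is a biconditional, so the plan is to prove each direction separately after recasting both in terms of the dependency graph $E$ on the predicate variables of $c$. The first thing I would establish is a \emph{localization lemma} describing how $E$ and the occurrences of existential binders decompose over the syntactic constructors of an \ehc{}: in a conjunction $c' \wedge c''$ every edge of $E$ arises from a guard/head pair lying entirely within $c'$ or entirely within $c''$, and every existential binder sits inside exactly one conjunct; under $\forall x.\, p \Rightarrow c'$ the guard $p$ (when it is a predicate variable) gains edges to precisely the heads of $c'$; and an existential binder places all heads in its scope ``under'' it. With this lemma in hand, the theorem reduces to a combinatorial claim about whether the clauses carrying existentials can be peeled off into an acyclic \ehc{}.

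For the $(\Leftarrow)$ direction I would argue by structural induction on $c$, maintaining the invariant that the returned $c_1$ is existential-free (hence \nnf{} Horn) and that $c_2$ collects exactly the existential-bearing material and is acyclic. The constant and guard cases are immediate. The conjunction case distributes the two recursively obtained splits and appeals to the localization lemma together with the hypothesis (no cyclic $\kappa$ under an existential) to conclude that merging the two acyclic $c_2$-parts introduces no cycle. The universal case uses $\forall x.\, p \Rightarrow (c_1 \wedge c_2) \equiv (\forall x.\, p \Rightarrow c_1) \wedge (\forall x.\, p \Rightarrow c_2)$ to push the binder into each part, checking that the new edges from $p$ create no cycle through an existential head. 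The delicate case is $\exists x.\, c'$: since $\exists$ does not distribute over $\wedge$ in general, I cannot hoist it, so instead I keep the whole scope as a single unit destined for $c_2$, and the hypothesis is exactly what guarantees that this unit is acyclic.

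For the $(\Rightarrow)$ direction I would take a separation $c = c_1 \wedge c_2$ with $c_1$ \nnf{} Horn and $c_2$ acyclic and show directly that no $\kappa$ lying under an existential can sit on a cycle. By the localization lemma all existential binders---and hence every head that is under one---occur in $c_2$. Since $c_2$ is acyclic, any cycle through such a head $\kappa$ must be completed by an edge living in $c_1$; the plan is to show, using that $c_1$ carries no existential and that the separation keeps each cycle within a single conjunct, that this is impossible, so $\kappa$ is acyclic. This is precisely the step where the notion of separation does its work: it forbids a cycle from being split so that its existential edge hides in the acyclic conjunct while its return edge hides in the \nnf{} Horn conjunct.

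The main obstacle, in both directions, is the same: the dependency graph is \emph{not} the disjoint union of the graphs of the two conjuncts, since one predicate variable may occur in both, so a cycle can in principle thread back and forth across the conjunction. The crux is therefore the localization lemma and its corollary that, for the separation witnessing \emph{separability}, every cycle is confined to a single conjunct. Establishing this confinement---equivalently, that the existential-bearing clauses can be isolated as an acyclic block exactly when no cycle passes through an existential head---is where the real content of the theorem lies, and I expect it to require the careful, binder-respecting bookkeeping sketched in the induction above.
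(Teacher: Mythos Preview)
Your $(\Leftarrow)$ direction is correct in spirit but far more elaborate than the paper's argument. The paper simply takes $c_1 \doteq \noside{c}$ and $c_2 \doteq \sides{c}$, using functions already defined in the appendix: $\nosidesym$ replaces every existential subformula by $\true$ (so $c_1$ is an \nnf{} Horn clause by construction), while $\sidesym$ replaces every leaf proposition by $\true$ but keeps existential subformulas intact. Hence every head remaining in $c_2$ lies under some existential binder; since the edges of $c_2$ are a subset of the edges of $c$, any cycle in $c_2$ would also be a cycle in $c$ whose members occur under an existential, contradicting the hypothesis. No localization lemma or structural induction is needed---your induction would simply be recomputing $\nosidesym$ and $\sidesym$ by hand.

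For the $(\Rightarrow)$ direction the paper's proof is silent, and the obstacle you flag is not merely a bookkeeping nuisance but is actually fatal to the argument you sketch: your hoped-for ``confinement'' of cycles to a single conjunct is false. Consider
\[
  c \;=\; (\forall x.\,\kappa(x) \Rightarrow \kappa(x)) \;\wedge\; (\exists y.\,\kappa(y)).
\]
Here $\kappa$ is cyclic in $c$ (a self-loop contributed by the first conjunct) and also occurs under an existential binder (in the second conjunct), yet $c$ is already separated: the first conjunct is \nnf{} Horn, and the second has no guards, hence no edges, hence is acyclic. So under the literal reading of the statement the $(\Rightarrow)$ direction does not hold; the paper proves---and, for the solving algorithm that follows, only uses---the sufficiency direction.
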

There are standard partial techniques 
for solving cyclic \nnf{} Horn clauses~\cite{bjorner_horn_2015, cosman_local_2017}
so the task of solving a separable \ehc{} 
can be split into applying one of these 
existing techniques and then solving the 
acyclic \ehc{}.
Consequently, all a programmer must do to make 
constraints separable is provide either a 
local solution to an implicit variable via 
an explicit value or a solution to a cyclic 
predicate variable (\eg by providing a type 
signature for a recursive function.)

Thus, in the sequel, we focus on the remaining 
problem: solving an acyclic \ehc{}. 
We use the acyclic \ehc{} from \figref{fig:foobar:ehc} 
(reproduced below) as a running example.
Recall that this is a simplified version 
of the constraints generated during type 
inference on the program \textsc{example} in
\secref{subsec:extending-predicates} and 
\secref{subsec:extending-existentials}.
\begin{align}
  & \wedge\ \csbind z{\true} \Rightarrow \csbind\vv{\h\vv = 1} \Rightarrow \kappa(\h\vv) \\
  & \wedge\ \h{\exists}{\h{n}}. \csbind{\vv}{\kappa(\h\vv)} \Rightarrow \h\vv = \h{n}\label{ehc:existential-binder}
\end{align}

\mypara{Step 1: Skolemization}
We use the function \pokesym{} to transform the \ehc{} $\h{c}$
to a conjunction of an \nnf{} Horn Clause
$\noside{\poke{\emptyset}{\h{c}}}$ and side conditions $\sides{\poke{\emptyset}{\h{c}}}$.
This differs from textbook skolemization in two important ways:
We replace each existential quantifier $\exists n. c$ with Skolem predicates
$\h{\forall n . \pi_n(n,\overline{x}) \Rightarrow c}$ rather than Skolem
functions, so that we can synthesize a (declarative) relation
rather than a function.
As a result, we must still check to make sure that this relation is inhabited.
We do so by producing the side condition $\h{\exists n . \pi_n(n,\overline{x})}$.
Our transformation Skolemizes the existential binding \coderef{ehc:existential-binder} of our example as follows:
\begin{align}
  & \wedge\ \csbind z{\true} \Rightarrow \csbind\vv{\h\vv = 1} \Rightarrow \kappa(\h\vv) \label{ehc:head-kappa}\\
  & \wedge\ \forall n. \pi(n) \Rightarrow \csbind{\vv}{\kappa(\h\vv)} \Rightarrow \h\vv = \h{n} \\
  & \wedge\ \exists n. \pi(n)
\end{align}

This transformation will be crucial later: giving a name to $\pi$ allows us to
separate the inhabitation and sufficiency constraints on $n$.

\pokesym{} yields an \nnf{} that has two classes of predicate variables: Skolem
predicates corresponding to existential binders (written $\h{\pi_n}$) that have
an inhabitation side condition, and predicate variables corresponding to unknown
refinements (written $\kappa$).
The $\h{\pi_n}$ only appear negatively, so the standard technique of finding the
least fixed point solution~\cite{rondon_liquid_2008,cosman_local_2017} would
simply return $\h{\false{}}$, which will fail the inhabitation
side conditions.
Instead, we would like to compute the 
\emph{greatest fixed point} solution 
for each $\h{\pi_n}$, but, for efficiency 
reasons, do not wish to compute the 
greatest fixed point solution of every 
predicate variable. 
Fortunately Cosman and Jhala~\cite{cosman_local_2017}
show that acyclic predicate variables 
can be eliminated one by one. 
We explain first how to eliminate $\h\kappa$ variables and then how to eliminate
$\h\pi$ variables.

\mypara{Step 2: Eliminating $\h{\kappa}$-Variables from \h{c}}
Procedure $\elimksym$ of \cite{cosman_local_2017} 
eliminates each individual acyclic predicate variable, 
$\kappa$, in an \nnf{} Horn Clause.
Briefly, given a predicate variable $\kappa$ 
in an \nnf{} Horn Clause $c$, the procedure 
computes the strongest solution for $\kappa$:
$\solk{\kappa}{c}$, and then substitutes the 
solution into $c$.
For the single $\kappa$ in our example this solution
$\solk{\kappa}{c}$ is $\lambda x.(\exists z'.\true \wedge (\exists v'.v' = 1 \wedge v' = x))$.
After substitution and simplification we get 
\begin{align*}
  & \wedge\ \forall n. \pi(n) \Rightarrow \forall v. \true \Rightarrow \forall z'. \true \Rightarrow \forall v'. v' = 1 \Rightarrow v = v' \wedge v = n \\
  & \wedge\ \exists n. \pi(n)
\end{align*}

\begin{figure}[t]
\begin{center}
  \[\begin{array}{lcl}
    \toprule
    \elimsym_{qe} & : & P^C \rightarrow \overline{\Pi} \times C^{\Pi} \times C \rightarrow C\\
    \midrule

    \elimqe{qe}{[]}{[]}{\sigma}{c} & \doteq & c\\

    \elimqe{qe}{[]}{\pi_n:\overline{\pi}}{\sigma}{c} & \doteq &
    \elimqe{qe}{[]}{\overline{\pi}}{\sigma}{\sosubst{c}{\pi_n}{\lambda \overline{x}.p}}\\
    \quad \text{where} \quad p & = & qe(\solp{qe}{\{\pi_n\}}{\sigma}{\sigma(\pi_n)})\\

    \midrule
    \solpsym_{qe} & : & P^C \rightarrow \overline{\Pi} \times C^{\Pi} \times C \rightarrow C\\
    \midrule
    \multicolumn{2}{l}{$\solp{qe}{\overline{\pi}}{\sigma}{\forall n.\pi_n(n,\overline{x}) \Rightarrow c}$}\\
    \quad \mid \pi_n \in \overline{\pi} & \doteq &  \solp{qe}{\overline{\pi}}{\sigma}{c} \\
    \quad \mid \pi_n \notin \overline{\pi} & \doteq & \forall n.p \Rightarrow \solp{qe}{\overline{\pi}}{\sigma}{c}\\
    \quad \quad \text{where } p & =  & qe(\solp{qe}{\overline{\pi} \cup \{\pi_n\}}{\sigma}{\sigma(\pi_n)})\\

    \solp{qe}{\overline{\pi}}{\sigma}{\forall x.p \Rightarrow c} & \doteq &
    \forall x.p \Rightarrow \solp{qe}{\overline{\pi}}{\sigma}{c}\\

    \solp{qe}{\overline{\pi}}{\sigma}{c_1 \wedge c_2} & \doteq &
    \solp{qe}{\overline{\pi}}{\sigma}{c_1} \wedge  \solp{qe}{\overline{\pi}}{\sigma}{c_2}\\

    \solp{qe}{\overline{\pi}}{\sigma}{p} & \doteq & p\\
    \bottomrule
  \end{array}\]
\end{center}
  \caption{Eliminating $\pi$ Variables and their Side Conditions}
  \label{fig:elimE}
\label{fig:elimp}
\end{figure}

\mypara{Step 3: Eliminating Skolem Variables}
$\elimksym$ removes all the $\kappa$ predicate
variables leaving only the $\pi_n$ variables 
inserted by $\pokesym$.
The inhabitation side conditions require we 
find the greatest fixed point (\textsc{gfp}) solution to 
these variables to ensure we do not spuriously
eliminate witnesses.
As $\pi_n$ only appears negatively (in guards), the \textsc{gfp} is the conjunction of
every $c$ appearing as $\forall n.\pi_n(n, \overline{x}) \Rightarrow c$.
For a given $\pi_n$ appearing in the constraint 
$c'$, we write this \textsc{gfp} as $\defconstr{\pi_n}{c'}$, 
the \emph{defining constraint} of $\pi_n$.

A first challenge arises in that we wish to use 
these solutions to \emph{eliminate} the Skolem 
predicate variables, but $\defconstr{\pi_n}{c'}$ 
is a conjunction of clauses featuring quantifiers.
As the Skolem variables appear in guards, we must 
transform these $c$ into an equisatisfiable predicate 
$p$ before we may substitute, so we parameterize our 
elimination algorithms with a quantifier elimination 
algorithm $qe$ that handles this task.
$qe$ can vary with the particular domain and, for 
domains that do not admit quantifier elimination 
or where it is infeasible, we instead use an approximation
described in \secref{sec:solving:special-case}. 

A second technical challenge arises en route 
to our solving algorithm:
$\defconstr{\pi_n}{c'}$ may contain \emph{other} 
$\pi$ variables and may contain cycles involving 
$\pi$ variables.
Fortunately, recursive Skolem variables are redundant:
\begin{lemma} \label{lem:pi-no-mind}
  If $\pi_n$ is a predicate variable inserted by $\pokesym$, then
  $\vDash \forall n.\pi_n(n,\overline{x}) \Rightarrow c$ iff $\vDash\forall n.\pi(n,\overline{x})\Rightarrow c[\lambda \_ . \true/\pi]$.
\end{lemma}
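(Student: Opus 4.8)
The plan is to isolate the single structural property that drives the whole statement and then treat the two directions of the equisatisfiability asymmetrically. That property is the one recorded immediately before the lemma: every predicate variable inserted by \pokesym{} occurs \emph{only} in guard (antecedent) position and never in a head. Writing $\pi$ for the Skolem predicate whose (recursive) occurrences are being cut to $\lambda\_.\true$, this means each \ehc{} clause is \emph{antitone} in $\pi$, so enlarging the interpretation of $\pi$ can only strengthen the clause while shrinking it can only weaken it. Everything else reduces to monotonicity bookkeeping on top of this observation.

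First I would prove the key monotonicity lemma by structural induction on the clause grammar of \figref{fig:rules:smt}: for every predicate $Q$, the instantiation $c[\lambda\_.\true/\pi]$ logically implies $c[Q/\pi]$, i.e.\ substituting the top predicate for $\pi$ yields the logically strongest instance. The conjunction and universal-guard cases are immediate from monotonicity of $\wedge$ and of $\Rightarrow$ in its consequent, and the only case that consumes the hypothesis is a guard $\forall \overline{y}.\,\pi(\overline{y}) \Rightarrow c'$, where cutting $\pi(\overline{y})$ to $\true$ discharges the antecedent and so strengthens the clause. Because \pokesym{} guarantees that $\pi$ never appears in head position, there is no antitone occurrence to spoil the induction, so it goes through cleanly.

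With this lemma in hand, the $\Leftarrow$ direction is essentially free. If the cut constraint $\forall n.\,\pi_n(n,\overline{x}) \Rightarrow c[\lambda\_.\true/\pi]$ is satisfiable via some $(\Delta,\Psi)$, then since $c[\lambda\_.\true/\pi] \Rightarrow c[\Delta(\pi)/\pi]$ by the monotonicity lemma, the \emph{same} assignment validates the original clause $\forall n.\,\pi_n(n,\overline{x}) \Rightarrow c$; every remaining conjunct and every inhabitation side condition is left textually unchanged, so satisfiability transfers directly.

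The $\Rightarrow$ direction is where the real work lies, and I expect it to be the main obstacle. Clause-by-clause both forms are trivially satisfiable by setting the Skolem predicate to $\false$, so the content of the lemma is entirely about preserving the \emph{inhabitation witness} that forces $\pi$ to be non-empty once we pass from the recursive body $c$ to its strictly stronger cut $c[\lambda\_.\true/\pi]$. The naive move of reusing the same assignment fails precisely here, because the cut strengthens the consequent and a witness certified through the recursive guard need not survive it. My plan is instead to re-solve the Skolem variables by their greatest-fixed-point defining constraints $\defconstr{\pi_n}{c'}$ and to show that the cut solution is still inhabited whenever the recursive one is, exploiting that each defining-constraint operator is antitone (again because $\pi$ occurs only negatively) so that the relevant fixed points can be compared, with the $\true$-cut being exactly the value of the operator at the top element. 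Turning this comparison into a proof — rather than the purely local implication used for $\Leftarrow$ — is the delicate part; I would carry it out by an explicit fixed-point argument over these antitone operators, falling back if necessary to the weakening/strengthening characterization of equisatisfiability set up at the start of \secref{sec:solving}.
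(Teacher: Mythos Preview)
Your monotonicity set-up handles the $\Leftarrow$ direction cleanly, but the $\Rightarrow$ direction is not just ``delicate'' under your plan --- it is genuinely stuck, and the fixed-point detour through defining constraints and inhabitation witnesses is a symptom of having isolated the wrong structural invariant. The property you use, ``$\pi$ occurs only negatively,'' tells you that $c[\lambda\_.\true/\pi]$ is \emph{stronger} than $c[Q/\pi]$ for every $Q$; that is exactly why the naive move fails in the forward direction, and nothing about re-solving the Skolem variables at their greatest fixed points repairs it without essentially re-proving the lemma inside the argument. Your worry about ``preserving the inhabitation witness'' is also a misreading: the lemma is stated for the single guarded clause, not for the full constraint with its side conditions; the inhabitation check is handled elsewhere in $\safe$.

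The observation the paper's proof turns on is sharper than ``only negative'': a Skolem predicate $\pi_n$ introduced by $\pokesym$ is only ever \emph{applied to the same argument tuple} $(n,\overline{x})$ at every occurrence. Hence, inside the body $c$ of $\forall n.\,\pi_n(n,\overline{x}) \Rightarrow c$, every occurrence of $\pi_n$ is literally the atom $\pi_n(n,\overline{x})$ that already sits in the hypothesis. Under that hypothesis, each such atom is logically equivalent to $\true$ --- a nested guard $\pi_n(n,\overline{x}) \Rightarrow c'$ is a redundant assumption, and a head occurrence $\pi_n(n,\overline{x})$ is a trivially discharged obligation. Both directions of the iff then drop out as the propositional tautology ``under $A$, replacing $A$ by $\true$ anywhere is a no-op,'' for \emph{any} interpretation of $\pi_n$, with no fixed-point reasoning required. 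Your antitone argument never gets to exploit this because it treats different applications $\pi_n(\cdot)$ as potentially distinct atoms; once you know they all coincide, the lemma is one line.
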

\noindent
This lets us to break cycles by ignoring recursive occurrences of Skolem variables.

With this result in hand, we develop the procedure $\solpsym_{qe}$, as shown in \figref{fig:elimp}.
$\solpsym_{qe}$ recursively eliminates Skolem variables from the defining constraint of $\pi_n$, using $qe$ to transform a nested Skolem variable's defining constraint (tracked in the map $\sigma$) into a substitutable predicate.
The first argument $\overline{\pi}$ tracks the seen Skolem variables, treating them as if they were solved to $\true$ when they next occur.

$\solpsym_{qe}$ is used in the procedure $\elimsym_{qe}$ which eliminates all Skolem variables from the constraint one $\pi_n$ at a time.
$\elimsym_{qe}$ simply calls $\solpsym_{qe}$ on the defining constraint of $\pi_n$ and then $qe$'s the result (now free of Skolem variables) before substituting the returned predicate as the solution for $\pi_n$.

Regardless of the properties of $qe$ we have the following soundness theorem:
\begin{theorem} \label{thm:elimqe-sound}
  If $c' = \poke{\emptyset}{c}$, 
     $\overline{\pi}$ is the set of all Skolem variables in $c'$, 
     $c'$ has no other predicate variables, 
     $\sigma(\pi_n) = \defconstr{\pi_n}{c'}$ for all $\pi_n \in \overline{\pi}$, and
     $\vDash \elimqe{qe}{}{\overline{\pi}}{\sigma}{c'}$ 
  then $\vDash c$.
\end{theorem}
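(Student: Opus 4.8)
The plan is to factor the theorem into two implications: (I) $\vDash \elimqe{qe}{}{\overline{\pi}}{\sigma}{c'}$ implies $\vDash c'$, and (II) $\vDash c'$ implies $\vDash c$. The second implication is exactly the soundness direction of skolemization: since $\pokesym$ replaces each existential binder $\exists n.\,d$ by a guarded clause $\forall n.\,\pi_n(n,\overline{x}) \Rightarrow d$ together with an inhabitation side condition $\exists n.\,\pi_n(n,\overline{x})$, I would appeal to $\pokesym$ being a \emph{strengthening} (whence $\vDash \pokesym(c) \Rightarrow \vDash c$), as set up by the weakening/strengthening framework of \secref{sec:solving}. Concretely, given a solution $\Delta$ for the $\pi_n$ that validates both the guards and the inhabitation conditions, for each ambient assignment to $\overline{x}$ the inhabitation condition supplies a witness $n_0$ with $\Delta(\pi_n)(n_0,\overline{x})$, and the guard then forces $d$ to hold at $n_0$; collecting these witnesses yields the existential substitution $\Psi$ for $c$. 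The relational (rather than functional) form of the Skolemization is what makes this witness extraction go through, since $n_0$ is permitted to depend on $\overline{x}$.

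For implication (I), the key observation is that $\elimsym_{qe}$ performs nothing but substitution: it eliminates the $\pi_n$ one at a time via $\sosubst{c}{\pi_n}{\lambda\overline{x}.p_n}$, where $p_n = qe(\solp{qe}{\{\pi_n\}}{\sigma}{\sigma(\pi_n)})$ is a \emph{closed} predicate (free of all predicate variables). Thus I would first show that the cumulative effect of $\elimsym_{qe}$ is a single ground substitution $\Delta$ mapping each $\pi_n$ to $\lambda\overline{x}.p_n$, so that $\elimqe{qe}{}{\overline{\pi}}{\sigma}{c'} = \Delta(c')$. The hypothesis $\vDash \Delta(c')$ then provides some $\Psi$ with $\valid{\bullet}{\Psi}{\Delta(c')}$; since $\Delta$ acts on predicate variables and $\Psi$ on existential binders, this coincides with $\valid{\Delta}{\Psi}{c'}$, which immediately exhibits $\vDash c'$. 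Notably this step needs no assumption on $qe$ beyond its producing a predicate: any $\Delta$ whose substitution validates certifies satisfiability, which is precisely why the theorem holds \emph{regardless of the properties of} $qe$. The role of the defining constraint $\defconstr{\pi_n}{c'}$ as the \emph{greatest} solution---it is the conjunction of every body guarded by $\pi_n$, and $\pi_n$ occurs only negatively---is that it validates all the guards by construction, so that the only residual content of $\vDash \Delta(c')$ is the inhabitation side conditions, exactly what the witness extraction of step (II) consumes.

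The main obstacle is establishing that $\solpsym_{qe}$ computes these closed predicates correctly in the presence of \emph{nested and cyclic} Skolem variables. Here $\sigma(\pi_n)$ may mention other $\pi_m$, and the recursion of $\solpsym_{qe}$ inlines their defining constraints while tracking already-seen variables in its first argument $\overline{\pi}$, resolving any re-encountered Skolem variable to $\true$. I would prove, by induction on the well-founded recursion of $\elimsym_{qe}$ and $\solpsym_{qe}$ (the seen-set $\overline{\pi}$ grows and there are finitely many Skolem variables), an invariant stating that at each call the partially-inlined constraint is equisatisfiable to the true defining constraint; the inductive step is discharged by Lemma \ref{lem:pi-no-mind}, lifted from a single guarded clause to the full constraint, which guarantees that setting a recursive occurrence of a Skolem variable to $\true$ preserves satisfiability. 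This is the delicate point: the cycle-breaking must be shown to compose soundly through the mutual recursion of $\solpsym_{qe}$ and $\elimsym_{qe}$, so that the resulting $\Delta$ is both closed and a faithful representative of the greatest solution. Once this is in hand, implications (I) and (II) combine to give $\vDash c$.
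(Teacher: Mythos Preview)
Your decomposition into (I) and (II) is exactly the paper's: the paper first cites the equisatisfiability of $c$ and $c'=\poke{\emptyset}{c}$ (your step (II), which you re-derive by witness extraction rather than cite), and then observes that $\elimsym_{qe}$ is nothing but a sequence of second-order substitutions, so its validity furnishes a $\Delta$ witnessing $\vDash c'$ (your step (I)). Your argument for (I) is essentially verbatim the paper's three-line proof.

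Where you diverge is in your final paragraph. You flag as the ``main obstacle'' that $\solpsym_{qe}$ must compute the closed predicates \emph{correctly}, and you propose to maintain an equisatisfiability invariant through its recursion, discharged by Lemma~\ref{lem:pi-no-mind}. This is more than soundness requires, and the paper's proof reflects that by never invoking Lemma~\ref{lem:pi-no-mind}. Your own observation two paragraphs earlier already explains why: \emph{any} closed predicate whose substitution validates certifies satisfiability, so nothing about the substituted predicates needs to be ``correct'' or ``faithful to the greatest solution.'' All that is needed is that each $p_n$ produced by $qe(\solp{qe}{\{\pi_n\}}{\sigma}{\sigma(\pi_n)})$ contains no residual $\pi$-variables, and that follows from termination of the recursion (the seen-set strictly grows over a finite set). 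The equisatisfiability invariant you sketch, and Lemma~\ref{lem:pi-no-mind}, belong to the \emph{completeness} direction (the converse under the additional hypothesis that $qe$ is a weakening), not to this theorem. So your plan is sound but over-engineered: drop the last paragraph, replace it with the one-line closedness-by-termination remark, and you have the paper's proof.
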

If $qe$ is a strengthening and a weakening, the converse also holds and our algorithm is complete.

\mypara{Step 4: Putting It All Together}
The procedure $\safe$ (\figref{fig:algo:safe} in \appref{sec:solving:complete}) 
puts together all the pieces to automatically check the validity of acyclic \ehc{} 
constraints $c$.
Like $\solpsym$ and $\elimsym$, $\safe$ is parameterized by the quantifier 
elimination procedure $qe$.
$\safe$ first Skolemizes the constraint $c$ and then solves the original 
predicate variables by repeated applications of $\elimksym$.
Next, $\safe$ collects the defining constraints of the Skolem 
predicate variables and uses $\elimsym_{qe}$ to solve and eliminate 
the Skolem variables.
This leaves us with a predicate-variable-free constraint, 
but still featuring nested existential quantifiers from 
the inhabitation side conditions.
Here, we use $qe$ once again to eliminate the existential 
quantifiers, leaving us with a VC ripe to be passed to an 
automated theorem prover to verify validity.
If this VC is valid, we deem $c$ \emph{safe}.
If $qe$ is a strengthening then $\safe$ is sound
and if $qe$ is additionally a weakening then 
$\safe$ is complete, proved in Appendix E.

\begin{theorem} 
  \label{thm:safe-sound}
  \label{thm:safe-complete}
  Let $c$ be an acyclic constraint. If $qe$ is a strengthening then $\safe(c)$ implies $\vDash c$.
  If $qe$ is also a weakening then $\safe(c)$ iff $\ \vDash c$.
\end{theorem}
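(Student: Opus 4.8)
The plan is to prove both halves of the theorem by decomposing $\safe$ into the sequence of transformations it actually performs and showing that each one preserves satisfiability in the appropriate direction, so that the per-step guarantees compose into a chain running from the final SMT check all the way back to $\vDash c$. Concretely, on an acyclic input $c$ the procedure (i) Skolemizes via $\poke{\emptyset}{c}$, splitting it into the \nnf{} clause $\noside{\poke{\emptyset}{c}}$ and the inhabitation side conditions $\sides{\poke{\emptyset}{c}}$; (ii) eliminates the unknown-refinement variables $\kappa$ by repeated application of $\elimksym$; (iii) eliminates the Skolem variables $\pi_n$ via $\elimsym_{qe}$; and (iv) uses $qe$ to discharge the residual existential quantifiers before passing a predicate-variable- and quantifier-free VC to the SMT oracle. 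The overarching argument is that when $qe$ is a strengthening every transformation yields a constraint that \emph{implies} the previous one (an under-approximation), so validity of the final VC propagates backward to $\vDash c$; when $qe$ is additionally a weakening, each step becomes equisatisfiability-preserving and the implications tighten into the biconditional.

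I would then discharge the four steps in order. For Skolemization, I invoke the property established for $\pokesym$ that $\poke{\emptyset}{c}$ is equisatisfiable with $c$ (it is both a weakening and a strengthening, independently of $qe$). For $\kappa$-elimination, I use that $c$ is acyclic, so by the characterization of \Cref*{thm:separable} \emph{every} $\kappa$ is acyclic; hence the procedure $\elimksym$ of Cosman and Jhala applies to each $\kappa$ in turn, and by substituting the strongest solution $\solk{\kappa}{c}$ it preserves equisatisfiability, iterating to remove all $\kappa$ while carrying the $\pi_n$ and the side conditions along untouched. For $\pi$-elimination, I appeal directly to Theorem~\ref{thm:elimqe-sound}, which (using Lemma~\ref{lem:pi-no-mind} to justify breaking cycles through recursive Skolem occurrences) gives $\vDash \elimqe{qe}{}{\overline{\pi}}{\sigma}{c'} \Rightarrow \vDash c'$ when $qe$ strengthens, and the converse when $qe$ also weakens, with $\sigma(\pi_n) = \defconstr{\pi_n}{c'}$. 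Finally, the residual constraint has no predicate variables and only the existential inhabitation quantifiers, so applying $qe$ to these produces a VC whose validity implies (under strengthening) or coincides with (under weakening) the satisfiability of the residual. Chaining these implications gives soundness, and chaining the equivalences gives completeness.

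The main obstacle is reconciling the order in which $\safe$ runs its steps with the hypotheses of Theorem~\ref{thm:elimqe-sound}, which is stated for a poke output $\poke{\emptyset}{c}$ containing \emph{no} $\kappa$ variables, whereas $\safe$ removes the $\kappa$ only \emph{after} Skolemizing. I must therefore argue that $\elimksym$ and $\poke{\emptyset}{\cdot}$ commute: exhaustively applying $\elimksym$ to $\poke{\emptyset}{c}$ yields exactly $\poke{\emptyset}{\tilde c}$, where $\tilde c$ is the $\kappa$-free constraint obtained by solving the $\kappa$ in $c$ directly. This holds because $\poke$ rewrites only existential quantifiers while $\elimksym$ substitutes only for $\kappa$ occurring in head and guard positions, so the two rewrites act on disjoint syntactic occurrences; once it is established I set $\tilde c$ to be that residual, note $\vDash c \iff \vDash \tilde c$ by equisatisfiability of $\elimksym$, and apply Theorem~\ref{thm:elimqe-sound} to $\tilde c$. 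A secondary subtlety is the final existential-elimination step: since $\vDash$ is defined by the \emph{existence} of a first-order assignment $\Psi$, I must verify that $qe$ applied to the inhabitation conditions faithfully witnesses (under strengthening) and exactly characterizes (under weakening) the existence of such a $\Psi$ — this is the one place where the strengthening/weakening dichotomy of $qe$ is used genuinely rather than merely inherited from the earlier steps, and it is where I expect the bookkeeping to be most delicate.
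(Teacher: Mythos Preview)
Your high-level strategy is sound and genuinely different from the paper's. Whereas you chain per-step (equi)satisfiability results for the four phases of $\safe$, the paper proves the second claim by a single structural induction on the \ehc{} $c$, showing in each case ($p$, $c_1\wedge c_2$, $\forall x{:}\tau.\,p\Rightarrow c'$, $\exists n.\,c'$) that the composite transformation $\elimqe{qe}{}{\overline\pi}{\sigma}{\elims{\overline\kappa}{\poke{\emptyset}{c}}}$ is equisatisfiable with $c$, and then observes that the forward direction survives when $qe$ is only a strengthening. Your modular decomposition is arguably more transparent and reuses the Skolemization and $\kappa$-elimination lemmas cleanly; the paper's monolithic induction avoids having to factor through intermediate constraints at all.

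The one place your plan needs more care is the commutation argument you flag. Your justification---that $\pokesym$ touches only existential quantifiers while $\elimksym$ touches only $\kappa$ in head/guard position, so the rewrites act on ``disjoint syntactic occurrences''---is too quick. The operation $\elimksym$ does not merely substitute: it first computes $\solk{\kappa}{\noside{\scoped{\kappa}{c}}}$, and this solution depends on the \emph{quantifier structure} of the constraint, which $\pokesym$ has altered (each $\exists n$ has become $\forall n.\,\pi_n\Rightarrow\cdots$). In particular, after $\pokesym$ the scope of $\kappa$ may sit under a $\pi_n$-guard, and $\solksym$'s $\exists$-introduction in the $\forall$-case will in general drag that $\pi_n$ into the computed solution for $\kappa$; the resulting $c_2$ need not be literally of the form $\poke{\emptyset}{\tilde c}$ for a $\kappa$-free $\tilde c$. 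What you actually need is not syntactic commutation but a mild generalization of Theorem~\ref{thm:elimqe-sound}: that $\elimsym_{qe}$ preserves satisfiability on any constraint whose only predicate variables are the $\pi_n$ and in which each $\pi_n$ occurs (negatively) only as the outermost guard of its own binder and (positively) only in its side condition. That property survives $\elimksym$ and is all the proof of Theorem~\ref{thm:elimqe-sound} really uses. With that adjustment, your chain-of-implications argument goes through; the paper's structural induction simply sidesteps the issue by never isolating $c_2$ as a separate object.
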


\subsection{A Theory-Agnostic Approximation to $qe$}
\label{sec:solving:special-case}

We cannot, in general, provide a quantifier 
elimination that is a strengthening and a 
weakening (since we are agnostic to the set 
of theories) especially as some theories 
do not admit a decidable quantifier instantiation! 
However, since SMT theories work by equality 
propagation, we can make use of equalities 
between theory terms without making any 
additional assumptions about the theories 
themselves.
Therefore, we include a theory-agnostic
quantifier elimination strategy over equalities.

Given the defining constraint $c$ of some 
$\pi_n(n, \overline{x})$, our strategy
computes the (well-scoped) congruence 
closure of the variables $n$ and $\overline{x}$
using the body of $c$.
This set of equalities is then used as 
the solution to $\pi_n$.
To eliminate the existential 
side condition $\exists n.\pi_n(n, \overline{x})$ 
we note that a sound approximation 
is to find a solution for $n$.
We search for this solution within 
the set of equalities.
If there is not one, we return $\false$ 
and verification fails.
In \sectionref{sec:eval} we evaluate this 
incomplete quantifier elimination on a 
number of benchmarks and demonstrate
its real-world effectiveness.

\section{Evaluation}

\label{sec:eval}
\begin{table}[t]
  \begin{tabular}{|l|l|l|l|l|l|l|l|l|}
    \hline
    \multirow{2}{*}{\diagbox{\footnotesize{Case Study}}{\footnotesize{Tool}}} & \multicolumn{4}{l|}{\toolname} & \multicolumn{2}{l|}{\mochi~\cite{unno_automating_2013}} & \multicolumn{2}{l|}{F* Implicits~\cite{fstar-implicits}} \\
    \cline{2-9}
    & LOC & Time (s) & Spec & Check & Spec & Check & Spec & Check \\
    \hline
    \textsc{incr}                     & 8    & 0.01  & \cmark & \cmark & \cmark & \xmark & \cmark & \xmark \\
    \textsc{sum}                      & 5    & 0.01  & \cmark & \cmark & \cmark & \cmark & \cmark &  \xmark \\
    \textsc{repeat}                   & 86   & 0.28  & \cmark & \cmark & \cmark & \xmark & \xmark & \xmark \\
    $d_2$~\cite{unno_automating_2013} & 8    & 0.07  & \cmark & \xmark & \cmark & \cmark & \cmark & \xmark \\
    \hline
    \multicolumn{6}{l}{\textbf{Resources}} \\
    \hline
    \textsc{incrState}                & 28   & 0.65  & \cmark & \cmark & \cmark & \cmark & \cmark & \cmark \\
    \textsc{accessControl}            & 49   & 0.36  & \cmark & \cmark & \cmark & \xmark & \cmark & \cmark\footnotemark \\
    \textsc{tick}~\cite{liquidate}    & 85   & 0.03  & \cmark & \cmark & \xmark & \xmark & \cmark & \cmark \\
    \textsc{linearDSL}                & 20   & 0.03  & \cmark & \cmark & \xmark  & \xmark & \cmark & \cmark\footnotemark[1] \\
    \hline
    \multicolumn{6}{l}{\textbf{State Machines}} \\
    \hline
    \textsc{pagination}                              & 79   & 0.3    & \cmark & \cmark & \xmark & \xmark & \xmark & \xmark \\
    \textsc{login}~\cite{BradyStateMachinesAll2016}  & 121  & 0.09   & \cmark & \cmark & \xmark & \xmark & \xmark & \xmark \\
    \textsc{twoPhase}                                & 135  & 0.86   & \cmark & \cmark & \xmark & \xmark & \xmark & \xmark \\
    \textsc{tickTock}                                & 112  & 0.14   & \cmark & \cmark & \xmark & \xmark & \xmark & \xmark \\
    \textsc{tcp}                                     & 332  & 115.73 & \cmark & \cmark & \xmark  & \xmark & \xmark & \xmark \\
    \hline
  \end{tabular}
  \caption{Comparison of systems: \cmark on ``Spec'' and ``Check'' respectively 
           indicate that the specification can be written and that the code can 
           be verified by automatically instantiating the implicit parameters.}
  \label{tab:eval}
\end{table}

We implement our system of Implicit Refinement Types in a tool dubbed \toolname, evaluate \toolname using a set of illustrative examples and case studies (links to full examples elided for DBR), and compare \toolname against the existing state of the art, in order to answer the following questions:
\begin{description}
  \item[Q1: Lightweight Verification] Are implicits in conjunction with the theory-agnostic instantiation procedure sufficient to verify programs with IRTs, even without using heavyweight instantiation techniques such as domain-specific solvers and synthesis engines?
  \item[Q2: Expressivity] Can we use implicits to encode specifications that would've otherwise required the use of additional language features?
  \item[Q3: Flexibility] Do they allow \emph{automated verification} in places where unification-based implicits and CEGAR-based extra parameters do not?
\end{description}

\footnotetext{\label{fn:fstar}Verification requires annotating a simple fact about sets as \fstar does not include a native theory of sets.}

\mypara{Implementation}
\label{sec:eval:implementation}
\toolname extends our language (\secref{sec:lang}) with polymorphism
and type constructors, and omits concrete syntax for implicit lambdas and
unpacks.
Instead, implicit parameters appear solely in specifications---that is,
refinement \emph{types}---and do not require implementation changes to the code.

Inserting implicit lambdas is straightforward: when checking
that a term has an implicit function type, insert a corresponding implicit
lambda.
Inserting implicit unpacks is more interesting: we need to unpack any term of
implicit pair type before we use it.
For instance, if $e$ has the type $\tcoifun{x}{t_1}{t_2}$,
then we must unpack $e$ to extract the
corporeal ($t_2$) component of the pair before we can apply a function to it:
$\app{e_f}{e}$ becomes $\eunpack{x}{y}{e}{\app{e_f}{y}}$.
This transformation ensures that the ghost parameter ($x$ in the above term) is
in context at the use site of the implicit pair.
To automate this procedure, we use an \textsc{Anf}-like
transformation that restricts A-normalization to terms of implicit pair type.

As is common in similar research tools, \toolname handles datatypes by axiomatizing
their constructors. A production implementation would treat surface datatype
declarations as sugar over these axioms.

\mypara{Polymorphism}
\toolname also includes support for limited refinement polymorphism.
For simplicity, we left refinement
polymorphism out of our formalism in \secref{sec:lang:static} --- it is
orthogonal to the addition of implicit parameters and pairs --- but we did
include it in our implementation.

Refinement polymorphism alleviates some issues with phantom type parameters. 
First, when using phantom type parameters, core constructors cannot be directly 
verified and must be assumed to have the given type. Moreover, we can specify 
the semantics of our stateful APIs in terms of \eg the \texttt{HST} type, 
but the meaning of the arguments to the HST type operator are determined 
only by its use.
In contrast, refinement polymorphism brings the intended semantics
of HST from the world of phantom parameters into the semantics
of the language itself. The semantics of HST are reflected with the type:
\begin{code}
  type HST = rforall p q. forall a. p -> (q, a)
\end{code}
where \texttt{rforall} ranges over refinement types and \texttt{forall}
ranges over base types. The more sophisticated refinement
polymorphism~\cite{vazou_bounded_2015} present in
existing refinement systems would use the type:
\begin{code}
  type HST p q s a = {w:s | p w} -> ({w':s | q w'}, a)
\end{code}
Here, \texttt{p} and \texttt{q} are predicates on the state type
\texttt{s}.

Refinement polymorphism further makes core constructors and API primitives
themselves subject to verification: the \texttt{rforall} version of HST above
allows the direct verification of \texttt{get} as
\begin{code}
  get : forall s. [w:s] -> HST {v:s | v = w} {v:s | v = w} s
  get = \s -> (s, s)
\end{code}

\mypara{Comparsion}
\label{sec:eval:comparison}
We compare \toolname to higher-order model checker
\mochi~\cite{unno_automating_2013}, and
F*'s support for implicit parameters\footnote{%
Note that F*'s main verification mechanism is Dijkstra Monads~\cite{Swamy:2013}, which we do not compare against in our evaluation.
We discuss trade offs between Dijkstra Monads and Implicit Refinement Types in \secref{sec:related}.
}~\cite{fstar, fstar-implicits}.
Both systems, like \toolname and unlike foundational verifiers
 such as Idris~\cite{brady_idris_2013} and Coq~\cite{the_coq_development_team_coq_2009}, are
designed for lightweight, automatic verification.
\mochi aims to provide complete verification of higher-order programs by
automatically inserting extra (implicit) parameters.
Whereas \toolname has users write refinement type specifications (with
\emph{explicit} reference to implicit types), \mochi's specifications are implemented as
assertions within the code.
F*'s type system is a mix of a Martin-L\"of style dependent type system with
SMT-backed automatic verification of refinement type specifications.
There is no formal specification of implicit parameters in F*. They are
implemented by unification as part of Martin-L\"of typechecking. This is in
contrast to IRTs' integrated approach, that uses information from refinement subtyping constraints for
instantiation.

Our comparison, summarized in \tabref{tab:eval}, illustrates, via a series of
case studies, the specifications that can be written in
each system (the Spec column) and whether they can find the necessary
implicit parameter instantiations (the Check column).
As each tool is designed to be used for lightweight verification, we
write the implementation and then separately write the specifications.
We do not rewrite the implementations to better accommodate specifications,
though for \mochi we insert assertions within the code as necessary.

\subsection{Q1: Lightweight Verification}

We evaluate whether IRTs allow
for modular specifications that would otherwise 
be inexpressible with plain refinement types.
We do this via a series of higher-order programs that use implicits for
lightweight verification.
These programs are designed to capture core aspects of 
various APIs and how IRTs permit specifications 
that can be automatically verified in a representative 
client program using the API.
Notably, for all of these examples \toolname only 
uses the theory-agnostic instantiation procedure 
from \secref{sec:solving:special-case} and does not
employ heavyweight instantiation techniques such 
as domain-specific solvers or synthesis engines.

\mypara{Higher-Order Loops}
\textsc{repeat}, defines a 
loop combinator \ha{repeat} that takes 
an increasing stateful computation \ha{body} 
and produces a stateful computation that loops 
\ha{body} \ha{count} times.
Its signature in \toolname is:
\begin{code}
  repeat :: body:([x:Int] -> ([y:{v:Int | v > x}]. SST x y Int Int))
         -> count:{v:Int | v > 0}
         -> ([q:Int] -> ([r:{v:Int | v > q}]. SST q r Int Int))
\end{code}
which says that the input and output are stateful computations whose history
and prophecy variables together guarantee that the computation will leave the
state larger than it started.
In the first line the implicit function argument \ha{x} is externally determined
and captures the state of the world before the \ha{body} computation, while the
implicit pair component \ha{y} captures that \ha{body} updates the state of the
world to some (unknown) larger value.

The implicit pairs are necessary to specify \ha{repeat},
as the updated state is determined by an internal choice 
in \ha{body}.
\textsc{repeat} also demonstrates how implicit pairs can 
specify loop invariants (here, upward-closure) on higher-order 
stateful programs.
Though \ha{repeat} is specified using implicit pairs, 
\toolname does \emph{not} require that arguments passed 
to \ha{repeat} be specified using implicit pairs.
In \textsc{repeat} we define a function \ha{incr} with 
the type \ha{[x:Int] $\rightarrow$ SST x (x + 1) Int Int} 
that increments the state.
\toolname will appropriately type check and verify \ha{repeat incr}
as Implicit Refinement Types automatically account for the necessary subtyping 
constraints to ensure that \ha{incr} meets the conditions of \ha{repeat}.

\mypara{State Machines}
The next set of examples demonstrate that IRTs enable specification verification
of state-machine based protocols.
These are ubiquitous, spanning appications from  
networking protocols to device driver and operating 
system invariants~\cite{BradyStateMachinesAll2016}.
IRTs let us encode state machines as transitions 
allowed from a ghost state, using the Hoare State Monad (\ref{sec:overview:state}).
\textsc{login}~\cite{BradyStateMachinesAll2016}, 
which models logging into a remote server, served as our 
initial inspiration for studying this class of problems.
We verify that a client respects the sequence of 
connect, login, and only then accesses information.

\textsc{tickTock} verifies that two \ha{ticker} and
\ha{tocker} processes obey the specification standard to the concurrency 
literature~\cite{scalas_verifying_2019}.
Here we show the implementation of \ha{tocker}.
\begin{code}
  tocker = \c -> do
  msg <- recv c
  if msg = tick then send c tock else assert False
\end{code}
Ghost parameters on \ha{send} and \ha{recv} track 
the state machine and ensure messages follow the 
tick-tock protocol.
If \mbox{\ha{send c tock}} is changed to \mbox{\ha{send c tick}}
the program is appropriately rejected.

\textsc{twoPhase} is a verified implementation of 
one side of a two-phase commit process. 
This example serves as a scale model for \textsc{tcp}, 
a verified implementation of a model of a TCP client 
performing a 3-way handshake, using the TCP state 
machine\cite{RFC0793}.

\textsc{pagination} is an expanded version of our 
stream example from \sectionref{sec:overview}, and 
models the AWS S3 pagination API~\cite{RetrievingPaginatedResults}.
This example shows that our protocol state machine 
need not be finite, as it is specified with respect 
to an \emph{unbounded} state machine.

\mypara{Quantitative Resource Tracking}
The next set of examples reflect various patterns 
of specifying and verifying \emph{resources} and 
demonstrate that Implicit Refinement Types enable
lightweight verification of these patterns.
The examples show how \toolname handles specification 
and automatic verification when dealing with resources 
such as the state of the heap and quantitative
resource usage.
In \secref{subsec:examples:grant} we saw how Implicits enable
specifying the access-control API from \figref{fig:revoke}; 
the \textsc{accessControl} example verifies clients of this API.
In contrast, \textsc{tick} (as shown in \secref{subsec:tick}) follows Handley et
al.~\cite{liquidate} in defining an applicative functor that tracks quantitative
resource usage.
The key distinction from Handley et al.~\cite{liquidate} is that the resource
count exists only at the type level.
This example generalizes: we can port any of 
the \emph{intrinsic verification} examples 
from that work to use implicit parameters 
instead of explicitly passing
around resource bounds.

\textsc{linearDSL} embeds a simple linearly typed DSL in \toolname.
It allows us to embed linear terms in \toolname, with linear
usage of variables statically checked by our refinement type system.
The syntactic constructs of this DSL are smart constructors that take the typing
environments as implicit parameters, enforcing the appropriate linear typing
rules.

\toolname enables specification and automatic verification of this diverse range
of examples, and in fact only requires the theory-agnostic instantiation
procedure to find the correct implicit instantiations.
This demonstrates that IRTs enable lightweight
verification of a variety of higher-order programs, and that they are
widely useful even without a domain specific solver or heavyweight 
synthesis algorithm.

\subsection{Q2: Expressivity}

We compare the expressivity of Implicit Refinement Types as implemented in
\toolname to the assertion-based verification of \mochi and the implicit
parameters of \fstar. 
First, merely the fact that we allow users to explicity write specifications
using implicit parameters allows us to write specifications we otherwise
could not. In particular, this is witnessed by the \textsc{tick} and
\textsc{linearDSL} examples that specify resource tracking, a
non-functional property. \mochi cannot express the specifications
for these because there is no combination of program variables that
computes the (non-functional) \emph{usage} properties used in these specifications.

Second, Implicit Pair Types add expressivity by allowing us to write
specifications against choices made \emph{internal} to functions that we
wish to reason about.
For example, \textsc{repeat} uses implicit pairs to specify 
a loop invariant. 
This example can't be done with \fstar's implicits, as, absent
implicit pairs, we cannot bind to the return value of the loop 
body.
As a result, in \fstar we cannot verify this example with 
implicits alone: we would have to bring in additional 
features such as the Dijkstra monad~\cite{Swamy:2013}.

Similarly, \emph{none} of the protocol state machine examples 
can be encoded with implicit functions alone. Implicit pairs 
are required to write specifications against choices made by 
other actors on the protocol channel. 
As a result, these examples also cannot be encoded 
in \fstar's implicits. These state machine specifications 
cannot even be expressed in \mochi as they crucially 
require access to ghost state, which \mochi does not support.

\subsection{Q3: Flexibility}

\begin{figure}[t!]
\begin{code}
incr :: [n:Int] -> (Int -> SInt n) -> SInt (n + 1)
incr f = (f 0) + 1

test1 :: SInt 11                      test2 :: m:Int -> SInt (m + 1)
test1 = incr (\x -> 10)               test2 m = incr (\x -> m)
\end{code}
\caption{A higher order increment function}
\label{fig:incr}
\end{figure}
 
We compare the flexibility that our refinement-integrated 
approach gives us to solve for implicit parameters
relative to that of other systems.
We focus on the features of our semantics and
our abstract solving algorithm from~\sectionref{sec:solving:special-case} 
independently of the choice of quantifier
elimination procedure, which we examined above.
We illustrate these differences with several examples.
\textsc{incr} is the program from \figref{fig:incr}.
In \mochi the specification is given by assertions that 
\ha{test1} and \ha{test2} are equal to \ha{11} and \ha{m + 1} 
respectively.
\textsc{incrState} generalizes \textsc{incr} to track 
the integer in the singleton state monad instead of a 
closure.
\textsc{sum} is similar to \ha{incr} except that it takes 
two implicit arguments and two function arguments, returning 
the sum of the two returned values:
\begin{code}
  sum :: [n, m] -> (Int -> SInt n) -> (Int -> SInt m) -> SInt (n + m)
  sum f g = (f 0) + (g 0)
  
  test :: SInt 11
  test = sum (\x -> 10) (\y -> 1)
\end{code}
All three tools can specify the program. 
\toolname and \mochi successfully instantiate 
the implicit parameters needed for verification.
\textsc{D2} is an example from the 
\mochi~\cite{unno_automating_2013} 
benchmarks that loops a nondeterministic 
number of times and adds some constant each time.
Unlike \toolname, \mochi is unable 
to solve \textsc{accessControl}, as 
CEGAR is notoriously brittle on 
properties over the theory of 
set-operations.

\mypara{Comparsion to \mochi}
\mochi fails to automatically verify \textsc{incr}, 
but it succeeds if either \ha{test1} or \ha{test2} 
appear individually.
This is partly due to the unpredictability of \mochi's 
CEGAR loop~\cite{JhalaMcMillan06} (as it succeeds in 
verifying \textsc{incrState}), and partly due to the fact that \mochi attempts to infer specifications
(and implicit arguments) 
\emph{globally}, which can lead to the anti-modular behavior seen here.
In contrast, by introducing implicit arguments as a user-level specification
technique, \toolname permits modular verification: \ha{test1} and \ha{test2}
generate two separate quantifier instantiation problems that \toolname solves
locally.

In \textsc{D2}, \toolname \emph{fails} 
to solve the constraints generated by 
implicit instantiation as they involve
systems of inequalities, which our theory-agnostic 
quantifier elimination does not attempt.
However, \toolname captures the full set of constraints 
and could, with a theory specific solver like \mochi's, 
verify \textsc{D2}.
Concretely, \textsc{D2} yields a constraint 
of the form $\exists x. true \Rightarrow x > 3$.
\toolname could discharge this obligation by instantiating the 
abstract algorithm of \sectionref{sec:solving}
with either the solver from \mochi\cite{unno_automating_2013} 
or EHSF\cite{RybalchenkoEHSF} instead of the theory-agnostic one above.

\mypara{Comparsion to Unification}
The results of the comparison to \fstar's unification-based implicits are
summarized in Table~\ref{tab:eval}.
Implicit Refinements are more flexibile as \fstar attempts to solve the
implicits purely through unification, \ie without accounting for refinements. In
contrast, \toolname generates subtyping (implication) constraints that are
handled separately from unification.
When the unification occurs under a type 
constructor, then unification can succeed.
For example, \fstar verifies \textsc{incrState}
because elevating the ghost state to a parameter 
of the singleton state type constructor makes 
it ``visible'' to \fstar's unification algorithm.
However, if there is \emph{no} type constructor
to guide the unification, \fstar must rely on
higher-order unification, which is undecidable 
in general and difficult in practice.
For this reason, in the \textsc{incr} example 
from Figure~\ref{fig:incr}, \fstar fails to 
instantiate the implicit arguments needed to 
verify \ha{test1} and \ha{test2}, even if we 
aid it with precise type annotations on 
\ha{\x $\rightarrow$ 10} and \ha{\x $\rightarrow$ m}, and
\fstar fails to verify the \textsc{sum} example 
for the same reason.
By contrast, \toolname can take advantage 
of the refinement information to
solve for the implicit parameters. 
Finally, \fstar's unification fundamentally 
cannot handle an example like \mochi's 
\textsc{D2}, as unification will not be 
able to find an implicit instantiation
when it is only constrained by inequalities.

\section{Related Work}\label{sec:related}

\mypara{Verification of Higher-Order Programs}
As discussed in \sectionref{sec:eval}, \fstar~\cite{swamy_dependent_2016} is
another SMT-aided higher-order verification language.
Its main support for verifying stateful programs is baked in
via Dijkstra Monads~\cite{Swamy:2013}.
When it comes to verifying higher-order stateful programs, Dijkstra Monads
enable \fstar to scale automatic verification up to complex invariants.
However, like other two-state specification
techniques, they are not on their own flexible enough to handle higher-order
computations such as
\ha{mapA}, when higher-order computations are composed in richer ways than
simple Kleisli composition.
The key difference is that Implicit Refinement Types work for both
classic two-state specifications \emph{and other} use-cases where two-state
specifications prove cumbersome --- IRTs add value even to a system that already includes
two-state specifications by allowing the necessary extra parameter of
functions like \ha{mapA} to be instantiated automatically. Moreover, IRTs do so
while maintaining the key properties of refinement type systems: (1) SMT-driven
automatic verification and (2) refinement type specifications are added atop an
existing program without requiring code changes.  In this way IRTs also differ
from systems that use ``full-strength'' two-state specifications with dependent
types such as VST\cite{Appel12} or Bedrock\cite{Chlipala13}.

\fstar has both implicit parameters and refinement types, but \fstar{}'s
implicits have no formal
description, and instantiation is independent of refinement information.
On the other hand, we lack first-class invariants, but future work may be able
to alleviate this by abstracting over refinements \emph{a la} Vazou et
al.~\cite{vazou_abstract_2013}.

Dafny~\cite{leino_dafny_2010} supports specifications with ghost variables, but
the user must explicitly craft triggers to perform quantifier instantiation when
the backing SMT solver's heuristics cannot, and must manually pass and update
ghost variables.

\mypara{Implicit Parameter Instantiation}%
Since finding an instantiation of implicit parameters is, in general,
undecidable, different systems make different tradeoffs:
While Haskell and Scala only perform type-directed lookup,
Idris~\cite{brady_idris_2013} resolves implicits via first-order unification,
with a default value or by a fixed-depth enumerative program synthesis.
This form of implicit resolution system can be very powerful, but can't be used
in conjunction with solver-automated reasoning so, for example, Idris would not be
able to handle our \textsc{sum} example which requires automated reasoning
about arithmetic.
Agda~\cite{norell_dependently_2009} combines unification with a second kind of implicit
parameter, \emph{instance arguments}~\cite{devriese_bright_2011}, that uses a
separate, specialized, implicit resolution mechanism for implicit arguments
that relate to typeclasses.
Coq~\cite{the_coq_development_team_coq_2009} uses a mechanism called canonical
structures~\cite{mahboubi_canonical_2013}, which uses a programmable hint system, but is intimately tied to the
specifics of Coq's implementation, and its use for implicit parameter
instantiation lacks a formal description, as Devriese et
al.~\cite{devriese_bright_2011} lament.

As Devriese et al.~\cite{devriese_bright_2011} note, the complexity of dependent type
systems make even achieving similar functionality as in Haskell and Scala a
significant task.
These implicit parameters are
designed to accomplish similar tasks as in the non-dependent Haskell and Scala:
automating the instantiation of repetitive arguments or automatically searching
the context for relevant arguments.
Refinement types allow us to sidestep this issue as the base type system can separately
use implicit parameters to automate typeclasses or other programming tasks,
allowing our technique of Implicit Refinement Types to entirely focus on
using dependent type information.
Here, this means focusing on allowing us to use an SMT to simplify ghost specifications
without worrying about interactions with the base type system.

While there is much work on implicit parameters for \emph{dependent types}, we believe
that we provide the first formal description of a system combining implicit
parameters with refinement types.
\fstar is the only other example of a language that has implicit parameters and
refinement types, but we discuss how \fstar's implicits cannot take advantage of
refinement type information in \secref{sec:eval}.

\mypara{Horn Constraints}
Horn Clauses have emerged as a lingua franca 
of verification tools~\cite{bjorner_horn_2015}
as they offer a straightforward encoding of assertions.
Z3~\cite{de_moura_z3_2008} includes a fixpoint 
solver for Horn Constraints including many 
quantifier elimination heuristics.
Rybalchenko et al.~\cite{RybalchenkoEHSF} present a semi-decision 
procedure for solving existential Horn clauses 
using a template-based CEGAR loop.
Cosman and Jhala~\cite{cosman_local_2017} use NNF Horn constraints
to preserve \emph{scope}. We extend this framework 
to synthesize refinements for implicit program 
variables that are \emph{existentially} quantified, 
in addition to the usual \emph{universally} quantified 
binders.

Unno et al.~\cite{unno_automating_2013} show that
it is sufficient to add one extra
parameter for each higher order argument 
to any given function to achieve complete 
higher-order verification with first-order 
refinements.
Their treatment utilizes more automation, \eg interpolation and Farkas' lemma,
but is limited to arithmetic specifications, precluding programs manipulating
data structures like sets and maps, as demonstrated in \secref{sec:eval}.
\bibliography{bib}

\appendix
\section{Full Declarative Static Semantics}
\label{sec:rules:lang}

\judgementHead{Well-Formedness}{\isWellFormed{\Gamma}{t}}
\begin{mathpar}
\inferrule*[right=\wtBase]
  {\hasSort{\forgetreft{\forgetimplicits{\Gamma}},\fbd{v}{b}}{r}{\tbool}}
  {\isWellFormed{\Gamma}{\reftpv{v}{b}{r}}}

\inferrule*[right=\wtFun]
  {\isWellFormed{\Gamma}{t_x}
  \\ \isWellFormed{\Gamma,\fbd{x}{t_x}}{t}}
  {\isWellFormed{\Gamma}{\trfun{x}{t_x}{t}}}

\inferrule*[right=\wtVar]
  {\alpha \in \Gamma}
  {\isWellFormed{\Gamma}{\alpha}}

\inferrule*[right=\wtForall]
  {\isWellFormed{\Gamma,\alpha}{t}}
  {\isWellFormed{\Gamma}{\ttabs{\alpha}{t}}}

\inferrule*[right=\wtIFun]
  {\isWellFormed{\Gamma}{\reftpv{v}{b}{r}}
  \\ \isWellFormed{\Gamma,\fbd{x}{\reftpv{v}{b}{r}}}{t}}
  {\isWellFormed{\Gamma}{\trifun{x}{\reftpv{v}{b}{r}}{t}}}

\inferrule*[right=\wtSigma]
  {\isWellFormed{\Gamma}{\reftpv{v}{b}{r}}
  \\ \isWellFormed{\Gamma,\fbd{x}{\reftpv{v}{b}{r}}}{t}}
  {\isWellFormed{\Gamma}{\tcoifun{x}{\reftpv{v}{b}{r}}{t}}}
\end{mathpar}

\medskip \judgementHead{Subtyping}{\isSubType{\Gamma}{t_1}{t_2}}

\begin{mathpar}
\inferrule*[right=\tSubBase]
  {\embed{\Gamma}(\forall v_1:b.r_1 \h{\Rightarrow} \subst{r_2}{v_2}{v_1}) \text{ is valid}}
  {\isSubType{\Gamma}{\reftpv{v_1}{b}{r_1}}{\reftpv{v_2}{b}{r_2}}}

\inferrule*[right=\tSubFun]
  {\isSubType{\Gamma}{t_2}{t_1}
  \\ \isSubType{\Gamma, x_2:{t_2}}{\subst{t_1'}{x_1}{x_2}}{t_2'}}
  {\isSubType{\Gamma}{\trfun{x_1}{t_1}{t_1'}}{\trfun{x_2}{t_2}{t_2'}}}

\inferrule*[right=\tSubForall]
  {\isSubType{\Gamma,\alpha}{t}{\subst{t'}{\beta}{\alpha}}}
  {\isSubType{\Gamma}{\ttabs{\alpha}{t}}{\ttabs{\beta}{t'}}}

\inferrule*[right=\tSubFunIL]
  {\isSubType{\Gamma}{\subst{t_1'}{x}{e}}{t_2}
  \\ \hastype{\forgetimplicits{\Gamma}}{e}{t_1}}
  {\isSubType{\Gamma}{\trifun{x}{t_1}{t_1'}}{t_2}}

\inferrule*[right=\tSubFunIR]
  {\isSubType{\Gamma, \fbd{x}{t_2}}{t_1}{t_2'}}
  {\isSubType{\Gamma}{t_1}{\trifun{x}{t_2}{t_2'}}}

\inferrule*[right=\tSubSigmaL]
  {\isSubType{\Gamma, \fbd{x}{t_1}}{t_1'}{t_2}}
  {\isSubType{\Gamma}{\tcoifun{x}{t_1}{t_1'}}{t_2}}

\inferrule*[right=\tSubSigmaR]
  {\isSubType{\Gamma}{t_1}{\subst{t_2'}{x}{e}}
  \\ \hastype{\forgetimplicits{\Gamma}}{e}{t_2}}
  {\isSubType{\Gamma}{t_1}{\tcoifun{x}{t_2}{t_2'}}}

\inferrule*[right=\tSubAlpha]
  {~}
  {\isSubType{\Gamma}{\alpha}{\alpha}}
\end{mathpar}
\captionof{figure}{{Static Semantics of \s\lang}}
\label{fig:statics:complete:1}

\judgementHead{Type Checking}{\hastype{\Gamma}{e}{t}}
\begin{mathpar}
\mprset{sep=1em}

\inferrule*[right=\tBase]
  {\fbd{x}{\reftpv{v}{b}{r}} \in \Gamma
  \\ v' \ \text{fresh}}
  {\hastype{\Gamma}{x}{\reftpv{v'}{b}{\subst{r}{v}{v'} \land v' = x}}}

\inferrule*[right=\tVar]
  {\fbd{x}{t} \in \Gamma}
  {\hastype{\Gamma}{x}{t}} 

\inferrule*[right=\tCon]
  {~}
  {\hastype{\Gamma}{c}{\tc{c}}}

\inferrule*[right=\tTAbs]
  {\hastype{\Gamma,\alpha}{v}{t}}
  {\hastype{\Gamma}{\tabs{\alpha}{v}}{\ttabs{\alpha}{t}}}

\inferrule*[right=\tTApp]
  {\hastype{\Gamma}{e}{\ttabs{\alpha}{t}}
  \\ \isWellFormed{\Gamma}{t'}
  \\ \forgetreft{t'} = \tau}
  {\hastype{\Gamma}{\tyapp{e}{\tau}}{\subst{t}{\alpha}{t'}}}

\inferrule*[right=\tAbsReft]
  {\hastype{\Gamma, \fbd{x}{t_x}}{e}{t} 
  \\ \isWellFormed{\Gamma}{t_x}}
  {\hastype{\Gamma}{\elambda{x}{t_x}{e}}{\trfun{x}{t_x}{t}}}

\inferrule*[right=\tAbsBase]
  {\hastype{\Gamma, \fbd{x}{t_x}}{e}{t} 
  \\ \isWellFormed{\Gamma}{t_x}
  \\ \forgetreft{t_x} = \tau}
  {\hastype{\Gamma}{\elambda{x}{\tau}{e}}{\trfun{x}{t_x}{t}}}

\inferrule*[right=\tAbsi]
  {\hastype{\Gamma, \ebd{x}{t_x}}{e}{t}}
  {\hastype{\Gamma}{\ilambda{x}{t_x}{e}}{(\trifun{x}{t_x}{t})}}

\inferrule*[right=\tApp]
  {\hastype{\Gamma}{e_1}{t}
  \\ \appjudge{\Gamma}{t}{e_2}{t'}}
  {\hastype{\Gamma}{\app{e_1}{e_2}}{t'}}

\inferrule*[right=\tLetReft]
  {\hastype{\Gamma,\fbd{x}{t_x}}{e_x}{t_x}
  \\ \hastype{\Gamma,\fbd{x}{t_x}}{e}{t}
  \\ \isWellFormed{\Gamma}{t_x}
  \\ \isWellFormed{\Gamma}{t}}
  {\hastype{\Gamma}{\eletin{\tb{x}{t_x}}{e_x}{e}}{t}}

\inferrule*[right=\tLetBase]
  {\hastype{\Gamma}{e_x}{t_x}
  \\ \hastype{\Gamma,\fbd{x}{t_x}}{e}{t}
  \\ \isWellFormed{\Gamma}{t}
  \\ \isWellFormed{\Gamma}{t_x}
  \\ \forgetreft{t_x} = \tau}
  {\hastype{\Gamma}{\eletin{\tb{x}{\tau}}{e_x}{e}}{t}}

\inferrule*[right=\tUnpack]
  {\hastype{\Gamma}{e_1}{\tcoifun{x'}{t_1}{t_2}}
  \\ \hastype{\Gamma,\ebd{x}{t_1},\fbd{y}{\subst{t_2}{x'}{x}}}{e_2}{t}
  \\ \isWellFormed{\Gamma}{t}}
  {\hastype{\Gamma}{\eunpack{x}{y}{e_1}{e_2}}{t}}

\inferrule*[right=\tSub]
  {\hastype{\Gamma}{e}{t_2}
  \\ \isSubType{\Gamma}{t_2}{t_1}
  \\ \isWellFormed{\Gamma}{t_1}}
  {\hastype{\Gamma}{e}{t_1}}
\end{mathpar}

\medskip \judgementHead{Application Checking}{$\appjudge{\Gamma}{t_1}{e}{t_2}$}
\begin{mathpar}
\mprset{sep=1em}
\inferrule*[right=\tAppFunVar]
  {\hastype{\Gamma}{y}{t_x}}
  {\appjudge{\Gamma}{\trfun{x}{t_x}{t}}{y}{\subst{t}{x}{y}}}

\inferrule*[right=\tAppFun]
  {\hastype{\Gamma}{e}{t_e}
  \\ \isSubType{\Gamma}{t_e}{t_x}
  \\ \isSubType{\Gamma, \fbd{y}{t_e}}{\subst{t}{x}{y}}{t'}
  \\ \isWellFormed{\Gamma}{t'}
  \\ y \ \text{fresh}}
  {\appjudge{\Gamma}{\trfun{x}{t_x}{t}}{e}{t'}}

\inferrule*[right=\tAppIFun]
  {\appjudge{\Gamma}{\subst{t}{x}{e'}}{e}{t'}
  \\ \hastype{\forgetimplicits{\Gamma}}{e'}{t_x}}
  {\appjudge{\Gamma}{\trifun{x}{t_x}{t}}{e}{t'}}
\end{mathpar}
\captionof{figure}{{Static Semantics of \s\lang}}
\label{fig:statics:complete:2}

\begin{center}
\begin{minipage}[t]{0.25\linewidth}
  \begin{align*}
    \forgetimplicits{\bullet} &\doteq \bullet \\
    \forgetimplicits{\Gamma,\fbd{x}{t}} &\doteq \forgetimplicits{\Gamma}, \fbd{x}{t} \\
    \forgetimplicits{\Gamma,\ebd{x}{t}} &\doteq \forgetimplicits{\Gamma}, \fbd{x}{t} \\
  \end{align*}
\end{minipage}
\begin{minipage}[t]{0.25\linewidth}
  \begin{align*}
    \forgetreft{\reftpv{x}{b}{r}} &\doteq b \\
    \forgetreft{\alpha} &\doteq \alpha \\
    \forgetreft{\tpoly{\alpha}{t}} &\doteq \tpoly{\alpha}{\forgetreft{t}} \\
    \forgetreft{\trfun{x}{t_1}{t_2}} &\doteq \tfun{\forgetreft{t_1}}{\forgetreft{t_2}} \\
    \forgetreft{\trifun{x}{t_1}{t_2}} &\doteq \forgetreft{t_2} \\
    \forgetreft{\tcoifun{x}{t_1}{t_2}} &\doteq \forgetreft{t_2}
  \end{align*}
\end{minipage}
\end{center}
\captionof{figure}{Metafunctions for Declarative Semantics}

  \begin{tabular}{>{$}r<{$} >{$}c<{$} >{$}l<{$}}
    \multicolumn{3}{l}{\textbf{Types}} \\
    b & \bnfdef & \tint \spmid \tbool \spmid \cdots \\
    t & \bnfdef & \reftpv{x}{b}{r} \spmid \trfun{x}{t}{t} \spmid \trifun{x}{t}{t} \spmid \tcoifun{x}{t}{t} \spmid \alpha \spmid \ttabs{\alpha}{t} \\

    \multicolumn{3}{l}{\textbf{Terms}} \\

    c & \bnfdef & \tfalse, \ttrue \spmid 0, 1, \ldots \spmid \wedge, \vee, +, -, =, \leq, \ldots \\

    e & \bnfdef & c \spmid x \spmid \elambda{x}{t}{e} \spmid \app{e}{e} \spmid \eletin{\tb{x}{t}}{e}{e} \spmid \tabs{\alpha}{e} \spmid \tyapp{e}{\tau} \spmid \\
    & & \ilambda{x}{t}{e} \spmid \eunpack{x}{y}{e}{e} \\

    \multicolumn{3}{l}{\textbf{Contexts}} \\
    \Gamma & \bnfdef & \bullet \spmid \Gamma, \fbd{x}{t} \spmid \Gamma, \ebd{x}{t}
  \end{tabular}
  \captionof{figure}{{Syntax of \lang}}
  \label{fig:syntax:lang:complete}

\section{Algorithmic Typing Rules}
\label{sec:rules:cgen}

\judgementHead{Subtyping}{\subtyping{\Gamma}{t_1}{t_2}{c}}

\begin{mathpar}
\inferrule*[right=\aSubBase, vcenter]
  {~}
  {\subtyping{\Gamma}{\reftpv{x_1}{b}{r_1}}{\reftpv{x_2}{b}{r_2}}{\forall x_1:\tau. r_1 \Rightarrow \subst{r_2}{x_2}{x_1}}}

\inferrule
  {\subtyping{\Gamma, \alpha}{t_1}{\subst{t_2}{\beta}{\alpha}}{c}}
  {\subtyping{\Gamma}{\tpoly{\alpha}{t_1}}{\tpoly{\beta}{t_2}}{c}}

\inferrule
  {~}
  {\subtyping{\Gamma}{\alpha}{\alpha}{\true{}}}

\inferrule
  {\subtyping{\Gamma}{t_2}{t_1}{c}
  \\ \subtyping{\Gamma,\fbd{x_2}{t_2}}{\subst{t_1'}{x_1}{x_2}}{t_2'}{c'}}
  {\subtyping{\Gamma}{\trfun{x_1}{t_1}{t_1'}}{\trfun{x_2}{t_2}{t_2'}}{c \wedge {} (\genimp{x_2}{t_2}{c'})}}

\inferrule
  {\subtyping{\Gamma, \fbd{z}{t_1}}{\subst{t_1'}{x}{z}}{t_2}{c}
  \\ z \text{ fresh}}
  {\subtyping{\Gamma}{\trifun{x}{t_1}{t_1'}}{t_2}{\genexists{z}{t_1}{c}}}

\inferrule
  {\subtyping{\Gamma, \fbd{x}{t_2}}{t_1}{t_2'}{c}}
  {\subtyping{\Gamma}{t_1}{\trifun{x}{t_2}{t_2'}}{\genimp{x}{t_2}{c}}}

\inferrule
  {\subtyping{\Gamma, \fbd{x}{t_1}}{t_1'}{t_2}{c}}
  {\subtyping{\Gamma}{\tcoifun{x}{t_1}{t_1'}}{t_2}{\genimp{x}{t_1}{c}}}

\inferrule
  {\subtyping{\Gamma, \fbd{z}{t_2}}{t_1}{\subst{t_2'}{x}{z}}{c}
  \\ z \text{ fresh}}
  {\subtyping{\Gamma}{t_1}{\tcoifun{x}{t_2}{t_2'}}{\genexists{z}{t_2}{c}}}
\end{mathpar}
\captionof{figure}{Algorithmic Typing Rules}

\judgementHead{Checking}{\checking{\Gamma}{e}{t}{c}}

\begin{mathpar}
\inferrule*[right=\cSub, vcenter]
  {\synth{\Gamma}{e}{t'}{c}
  \\ \subtyping{\Gamma}{t'}{t}{c'}}
  {\checking{\Gamma}{e}{t}{c \wedge c'}}

\inferrule*[right=\cLetBase, vcenter]
  {\checking{\Gamma}{e_1}{\hat{t}}{c_1}
  \\ \checking{\Gamma,\fbd{x}{\hat{t}}}{e_2}{t}{c_2}
  \\ \hat{t} = \fresh{\Gamma}{\tau}}
  {\checking{\Gamma}{\eletin{\tb{x}{\tau}}{e_1}{e_2}}{t}{c_1 \wedge (\genimp{x}{\hat{t}}{c_2})}}

\inferrule*[right=\cLetReft, vcenter]
  {\checking{\Gamma,\fbd{x}{t_x}}{e_1}{t_x}{c_1}
  \\ \checking{\Gamma,\fbd{x}{t_x}}{e_2}{t}{c_2}}
  {\checking{\Gamma}{\eletin{\tb{x}{t_x}}{e_1}{e_2}}{t}{\genimp{x}{t_x}{(c_1 \wedge c_2)}}}

\inferrule
  {\synth{\Gamma}{e_1}{t'}{c_1}
  \\ \spinechecking{\Gamma}{t'}{e_2}{t}{c_2}}
  {\checking{\Gamma}{\app{e_1}{e_2}}{t}{c_1 \wedge c_2}}

\inferrule
  {\checking{\Gamma,\beta}{\subst{v}{\alpha}{\beta}}{t}{c}}
  {\checking{\Gamma}{\tabs{\alpha}{v}}{\ttabs{\beta}{t}}{c}}

\inferrule
  {\synth{\Gamma}{e_1}{\tcoifun{x'}{t_1}{t_2}}{c_1}
  \\ \checking{\Gamma,\ebd{x}{t_1},\fbd{y}{\subst{t_2}{x'}{x}}}{e_2}{t}{c_2}}
{\checking{\Gamma}{\eunpack{x}{y}{e_1}{e_2}}{t}{c_1 \wedge (\genimp{x}{t_1}{(\genimp{y}{\subst{t_2}{x'}{x}}{c_2})})}}
\end{mathpar}

\judgementHead{Synthesis}{\synth{\Gamma}{e}{t}{c}}

\begin{mathpar}
\inferrule
  {\synth{\Gamma,\alpha}{e}{t}{c}}
  {\synth{\Gamma}{\tabs{\alpha}{e}}{\tpoly{\alpha}{t}}{c}}

\inferrule
  {\synth{\Gamma}{e}{\tpoly{\alpha}{t}}{c}
  \\ \hat{t} = \fresh{\Gamma}{\tau}}
  {\synth{\Gamma}{\tyapp{e}{\tau}}{\subst{t}{\alpha}{\hat{t}}}{c}}

\inferrule
  {~}
  {\synth{\Gamma}{c}{\tc{c}}{\true{}}}

\inferrule
  {\fbd{x}{\reftpv{y}{b}{r}} \in \Gamma
  \\ v \ \text{fresh}}
  {\synth{\Gamma}{x}{\reftpv{v}{b}{\subst{r}{y}{v} \wedge v = x}}{\true{}}}

\inferrule
  {\fbd{x}{t} \in \Gamma}
  {\synth{\Gamma}{x}{t}{\true{}}}

\inferrule
  {\synth{\Gamma,\fbd{x}{\hat{t}}}{e}{t}{c}
  \\ \hat{t} = \fresh{\Gamma}{\tau}}
  {\synth{\Gamma}{\elambda{x}{\tau}{e}}{\trfun{x}{\hat{t}}{t}}{\genimp{x}{\hat{t}}{c}}}

\inferrule
  {\synth{\Gamma,\fbd{x}{t_x}}{e}{t}{c}}
  {\synth{\Gamma}{\elambda{x}{t_x}{e}}{\trfun{x}{t_x}{t}}{\genimp{x}{t_x}{c}}}

\inferrule
  {\synth{\Gamma}{e_1}{t_1}{c_1}
  \\ \spinesynth{\Gamma}{t_1}{e_2}{t_2}{c_2}}
  {\synth{\Gamma}{\app{e_1}{e_2}}{t_2}{c_1 \wedge c_2}}

\inferrule
  {\synth{\Gamma,\ebd{x}{t_x}}{e}{t}{c}}
  {\synth{\Gamma}{\ilambda{x}{t_x}{e}}{\trifun{x}{t_x}{t}}{\genimp{x}{t_x}{c}}}

\inferrule
  {\synth{\Gamma}{e_1}{\tcoifun{x'}{t_1}{t_2}}{c_1}
  \\ \synth{\Gamma,\ebd{x}{t_1},\fbd{y}{\subst{t_2}{x'}{x}}}{e_2}{t}{c_2}
  \\ \subtyping{\Gamma,\ebd{x}{t_1},\fbd{y}{\subst{t_2}{x'}{x}}}{t}{\hat{t}}{c_3}
  \\ \hat{t} = \fresh{\Gamma}{t}}
  {\synth{\Gamma}{\eunpack{x}{y}{e_1}{e_2}}{\hat{t}}{c_1 \wedge (\genimp{x}{t_1}{(\genimp{y}{\subst{t_2}{x'}{x}}{(c_2 \wedge c_3)})})}}

\inferrule*[right=\sLetBase, vcenter]
  {\checking{\Gamma}{e_1}{\hat{t}}{c_1}
  \\ \synth{\Gamma,\fbd{x}{\hat{t}}}{e_2}{t'}{c_2}
  \\ \subtyping{\Gamma,\fbd{x}{\hat{t}}}{t'}{\hat{t}'}{c_3}
  \\ \hat{t} = \fresh{\Gamma}{\tau}
  \\ \hat{t}' = \fresh{\Gamma}{t'}}
  {\synth{\Gamma}{\eletin{\tb{x}{\tau}}{e_1}{e_2}}{\hat{t}'}{c_1 \wedge (\genimp{x}{\hat{t}}{c_2 \wedge c_3})}}

\inferrule*[right=\sLetReft, vcenter]
  {\checking{\Gamma,\fbd{x}{t_x}}{e_1}{t_x}{c_1}
  \\ \synth{\Gamma,\fbd{x}{t_x}}{e_2}{t}{c_2}
  \\ \subtyping{\Gamma,\fbd{x}{t_x}}{t}{\hat{t}}{c_3}
  \\ \hat{t} = \fresh{\Gamma}{t}}
{\synth{\Gamma}{\eletin{\tb{x}{t_x}}{e_1}{e_2}}{\hat{t}}{\genimp{x}{t_x}{(c_1 \wedge c_2 \wedge c_3)}}}
\end{mathpar}
\captionof{figure}{Algorithmic Typing Rules}

\judgementHead{Application Checking}{\spinechecking{\Gamma}{t}{e}{t'}{c}}
\begin{mathpar}
\inferrule*[right=\cAppFun, vcenter]
  {\checking{\Gamma}{y}{t_x}{c_1}
  \\ \subtyping{\Gamma}{\subst{t}{x}{y}}{t'}{c_2}}
  {\spinechecking{\Gamma}{\trfun{x}{t_x}{t}}{y}{t'}{c_1 \wedge c_2}}

\inferrule*[right=\cAppIFun, vcenter]
  {\spinechecking{\Gamma,\ebd{z}{t_x}}{\subst{t}{x}{z}}{e}{t'}{c}
  \\ z \text{ fresh}}
  {\spinechecking{\Gamma}{\trifun{x}{t_x}{t}}{e}{t'}{\genexists{z}{t_x}{c}}}

\inferrule
  {\synth{\Gamma}{e}{t_e}{c_1}
  \\ \subtyping{\Gamma}{t_e}{t_x}{c_2}
  \\ \subtyping{\Gamma,\ebd{y}{t_e}}{\subst{t}{x}{y}}{t'}{c_3}
  \\ y \ \text{fresh}}
  {\spinechecking{\Gamma}{\trfun{x}{t_x}{t}}{e}{t'}{c_1 \wedge c_2 \wedge (\genimp{y}{t_e}{c_3})}}
\end{mathpar}

\judgementHead{Application Synthesis}{\spinesynth{\Gamma}{t}{e}{t'}{c}}

\begin{mathpar}
\inferrule*[right=\sAppFun, vcenter]
  {\checking{\Gamma}{y}{t_x}{c}}
  {\spinesynth{\Gamma}{\trfun{x}{t_x}{t}}{y}{\subst{t}{x}{y}}{c}}

\inferrule
  {\synth{\Gamma}{e}{t_e}{c_1}
  \\ \subtyping{\Gamma}{t_e}{t_x}{c_2}
  \\ \subtyping{\Gamma,\ebd{y}{t_e}}{\subst{t}{x}{y}}{\hat{t}}{c_3}
  \\ \hat{t} = \fresh{\Gamma}{t}
  \\ y \ \text{fresh}}
  {\spinesynth{\Gamma}{\trfun{x}{t_x}{t}}{e}{\hat{t}}{c_1 \wedge c_2 \wedge (\genimp{y}{t_e}{c_3})}}

\inferrule*[right=\sAppIFun, vcenter]
  {\spinesynth{\Gamma,\ebd{z}{t_x}}{\subst{t}{x}{z}}{e}{t'}{c_1}
  \\ \subtyping{\Gamma,\ebd{z}{t_x}}{t'}{\hat{t}}{c_2}
  \\ \hat{t} = \fresh{\Gamma}{t'}
  \\ z \text{ fresh}}
  {\spinesynth{\Gamma}{\trifun{x}{t_x}{t}}{e}{\hat{t}}{\genexists{z}{t_x}{(c_1 \wedge c_2)}}}
\end{mathpar}
\captionof{figure}{Algorithmic Typing Rules: Spines}

  \[\begin{array}{lcl}
    \genimp{x}{\reftpv{x}{b}{r}}{c} & \doteq & \forall \tb{x}{b}.r \Rightarrow c \\
    \genimp{x}{t}{c} & \doteq & c \\

    \\

    \genexists{x}{\reftpv{x}{b}{r}}{c} & \doteq & \exists \tb{x}{b}.(r \wedge c) \\
    \genexists{x}{t}{c} & \doteq & c \\

    \\
    
    \fresh{[\fbd{x_1}{t_1},\ldots]}{\reftpv{x}{b}{r}} & \doteq & \reftpv{v}{b}{\kappa(x_1,\ldots,v)} \\
    \quad \text{where } \kappa, v \text{ fresh names} \\
    \fresh{\Gamma}{\tpoly{\alpha}{t}} & \doteq & \tpoly{\alpha}{\fresh{\Gamma}{t}} \\
    \fresh{\Gamma}{\alpha} & \doteq & \alpha \\
    \fresh{\Gamma}{\trfun{x}{t_1}{t_2}} & \doteq & \trfun{x'}{t_1'}{t_2'} \\
    \quad \text{where } \begin{aligned}[t]
      & x' \text{ fresh name} \\
      & t_1' = {\fresh{\Gamma}{t_1}} \\
      & t_2' = {\fresh{(\Gamma,\fbd{x'}{t_1'})}{t_2}}
    \end{aligned} \\
    \fresh{\Gamma}{\trifun{x}{t_1}{t_2}} & \doteq & \trifun{x'}{t_1'}{t_2'} \\
    \quad \text{where } \begin{aligned}[t]
      & x' \text{ fresh name} \\
      & t_1' = {\fresh{\Gamma}{t_1}} \\
      & t_2' = {\fresh{(\Gamma,\fbd{x'}{t_1'})}{t_2}}
    \end{aligned} \\
    \fresh{\Gamma}{\tcoifun{x}{t_1}{t_2}} & \doteq & \tcoifun{x'}{t_1'}{t_2'} \\
    \quad \text{where } \begin{aligned}[t]
      & x' \text{ fresh name} \\
      & t_1' = {\fresh{\Gamma}{t_1}} \\
      & t_2' = {\fresh{(\Gamma,\fbd{x'}{t_1'})}{t_2}}
    \end{aligned} \\
  \end{array}\]
  \captionof{figure}{Metafunctions for Algorithmic Typing Rules}
  \label{fig:cgen:complete}

\section{Refinement Predicate Logic Definitions}
\label{sec:rules:logic}

\judgementHeadNameOnly{Syntax of \smtlang}

\begin{tabular}{r >{$}r<{$} >{$}r<{$} >{$}l<{$} >{$}l<{$}}
\textit{Predicates} & r & \bnfdef & \text{\ldots varies \ldots} \\
\textit{Types} & \tau & \bnfdef & \text{\ldots varies \ldots} \\
\textit{Propositions} & p & \bnfdef & \applykvar{\kappa}{\overline{x}} \spmid r \\
\textit{Existential Horn Clauses} & c & \bnfdef & \exists x : \tau. c \spmid \forall x: \tau . p \Rightarrow c \spmid c \wedge c \spmid p\\
\textit{First Order Assignments} & \Psi & \bnfdef & \bullet \spmid \Psi, \fobind{r}{x}\\
\textit{Second Order Assignments} & \Delta & \bnfdef & \bullet \spmid \Delta, \sobind{\lambda \overline{x}.r}{\kappa}\\
\end{tabular}

\judgementHead{Semantics of \smtlang}{$\valid{\Delta}{\Psi}{c}$}

\begin{mathpar}
\inferrule
  {\valid{\Delta}{\Psi}{\fosubst{c}{x}{r}}}
  {\valid{\Delta}{\Psi, \fobind{r}{x}}{c}}

\inferrule
  {\valid{\Delta}{\bullet}{\sosubst{c}{\kappa}{\lambda \overline{x}.r}}}
  {\valid{\Delta,\sobind{\lambda \overline{x}.r}{\kappa}}{\bullet}{c}}

\inferrule
  {\evars{c} = \emptyset
  \\ \kvars{c} = \emptyset
  \\ \smtvalid{c}}
  {\valid{\bullet}{\bullet}{c}}
\end{mathpar}

\judgementHead{Denotation from \lang to \smtlang}{\embed{\cdot}}
\begin{align*}
  \embed{\Gamma,\fbd{x}{\reftpv{v}{b}{r}}}(c) &\doteq \embed{\Gamma}(\forall x:b. \embed{\subst{r}{v}{x}} \Rightarrow c)\\
  \embed{\Gamma,\ebd{x}{\reftpv{v}{b}{r}}}(c) &\doteq \embed{\Gamma}(\forall x:b. \embed{\subst{r}{v}{x}} \Rightarrow c)\\
  \embed{\Gamma,\_ \,}(c) &\doteq \embed{\Gamma}(c)\\
  \embed{\bullet}(c) &\doteq c\\
\end{align*}

\judgementHeadNameOnly{Substitutions}
\begin{align*}
  \fosubst{(\exists x:\tau.c)}{x}{r} &\doteq \subst{c}{x}{r}\\
  \fosubst{(\exists y:\tau.c)}{x}{r} &\doteq \exists y:\tau. \fosubst{c}{x}{r}\\
  \fosubst{(\forall y:\tau.p \Rightarrow c)}{x}{r} &\doteq \forall y:\tau.p \Rightarrow \fosubst{c}{x}{r}\\
  \fosubst{(c_1 \wedge c_2)}{x}{r} &\doteq \fosubst{c_1}{x}{r} \wedge \fosubst{c_2}{x}{r}\\
  \fosubst{p}{x}{r} &\doteq p\\
  \\
  \sosubst{p}{\kappa}{\lambda \overline{x}.r} &\doteq p\\
  \sosubst{\applykvar{\kappa}{\overline{y}}}{\kappa}{\lambda \overline{x}.r} &\doteq r\overline{[y/x]}\\
  \sosubst{(\exists y:\tau.c)}{\kappa}{\lambda \overline{x}.r} &\doteq \exists y:\tau. \sosubst{c}{\kappa}{\lambda \overline{x}.r} \\
  \sosubst{(\forall y:\tau.p \Rightarrow c)}{\kappa}{\lambda \overline{x}.r} &\doteq \forall y:\tau. \sosubst{p}{\kappa}{\lambda \overline{x}.r} \Rightarrow \sosubst{c}{\kappa}{\lambda \overline{x}.r}\\
  \sosubst{(c_1 \wedge c_2)}{\kappa}{\lambda \overline{x}.r} &\doteq \sosubst{c_1}{\kappa}{\lambda \overline{x}.r} \wedge \sosubst{c_2}{\kappa}{\lambda \overline{x}.r}\\
\end{align*}

\judgementHeadNameOnly{Shorthand}

\[
  \begin{array}{rcl}
    \vDash c & \defeq & \exists \Delta, \Psi.~\valid{\Delta}{\Psi}{c}\\
    \applyfo{\fobind{r_1}{x_1}, \ldots, \fobind{r_n}{x_n}, \bullet}{c} & \defeq & \fosubst{c}{x_1}{r_1} \cdots \{r_n/x_n\}\\
    \applyso{\sobind{\lambda \overline{x}.r_1}{\kappa_1}, \ldots, \sobind{\lambda \overline{x}.r_n}{\kappa_n}, \bullet}{c} & \defeq & \sosubst{c}{\kappa_1}{\lambda \overline{x}.r_1} \cdots [\lambda \overline{x}.r_n/\kappa_n]\\
  \end{array}
\]
\captionof{figure}{{Syntax and Static Semantics of \smtlang}}
\label{fig:logic:complete}
\section{Constraint Solving}
\label{sec:solving:complete}

  \[\begin{array}{lcl}
    \toprule
    \pokesym & : & \mathbbm{2}^X \times C \rightarrow C\\
    \midrule
    \poke{\overline{y}}{\exists n:\tau.c} & \doteq &
    (\forall n:\tau. \pi_{n}(n, \overline{y}) \Rightarrow \poke{\{\overline{y},n\}}{c})  \\
                                              && \, \wedge\, \exists n.\pi_{n}(n,\overline{y}\})\\
    \poke{\overline{y}}{\forall x:\tau.p \Rightarrow c} & \doteq &
    \forall x:\tau.p \Rightarrow \poke{\{x,\overline{y}\}}{c}\\
    \poke{\overline{y}}{c_1 \wedge c_2} & \doteq & \poke{\overline{y}}{c_1} \wedge \poke{\overline{y}}{c_2}\\
    \poke{\overline{y}}{p} & \doteq & p\\

    \midrule
    \nosidesym & : & C \rightarrow C\\
    \midrule
    \noside{\exists n:\tau.c} & \doteq & \true{}\\

    \noside{\forall x:\tau.p \Rightarrow c} & \doteq &
    \forall x:\tau.p \Rightarrow \noside{c}\\

    \noside{c_1 \wedge c_2}
      & \doteq & \noside{c_1} \wedge \noside{c_2} \\

    \noside{p} & \doteq & p\\

    \midrule
    \sidesym & : & C \rightarrow C\\
    \midrule
    \sides{\exists n:\tau.c} & \doteq & \exists n:\tau.c\\

    \sides{\forall x:\tau.p \Rightarrow c} & \doteq &
    \forall x:\tau.p \Rightarrow \sides{c}\\

    \sides{c_1 \wedge c_2}
      & \doteq & \sides{c_1} \wedge \sides{c_2} \\

    \sides{p} & \doteq & \true{}\\
    \bottomrule
  \end{array}\]
  \label{fig:complete:pokestar}
  \captionof{figure}{Transformation from EHC to NNF with existential side conditions}

  \[\begin{array}{lcl}
    \toprule
    \solksym & : & K \times C \rightarrow C\\
    \midrule
    \solk{\kappa}{\forall x:\tau.p \Rightarrow c} & \doteq &
      {\exists} x:\tau.p \wedge \solk{\kappa}{c}\\
    \solk{\kappa}{c_1 \wedge c_2}
      & \doteq & \solk{\kappa}{c_1} \vee \solk{\kappa}{c_2} \\
    \solk{\kappa}{\kappa(\overline{y})} & \doteq & \bigwedge_i x_i = y_i \\ \quad \mbox{where }\overline{x} =\text{params}(\kappa) \\
    \solk{\kappa}{p} & \doteq & \false{}\\

    \midrule
    \defconstrsym & : & \Pi \times C \rightarrow C\\
    \midrule
    \defconstr{\pi_n}{\forall n : \tau.\pi_n(n,\overline{x}) \Rightarrow c} & \doteq & c \\
    \defconstr{\pi_n}{\forall x : \tau.p \Rightarrow c} & \doteq & \defconstr{\pi_n}{c}\\
    \defconstr{\pi_n}{c_1 \wedge c_2} & \doteq & \defconstr{\pi_n}{c_1}\wedge \defconstr{\pi_n}{c_2}\\
    \defconstr{\pi_n}{p} & \doteq & \true\\
    \bottomrule
  \end{array}\]
  \captionof{figure}{Strongest solution for $\kappa$ and weakest solution for $\pi$}

\begin{figure}[H]
  \[\begin{array}{lcl}
    \toprule
    \scopesym & : & K \times C \rightarrow C\\
    \midrule
    \scoped{\kappa}{c_1 \wedge c_2}\\
    \quad \mid \kappa \in c_1, \kappa \notin c_2 & \doteq & \scoped{\kappa}{c_1}\\
    \quad \mid \kappa \notin c_1, \kappa \in c_2 & \doteq & \scoped{\kappa}{c_2}\\
    \scoped{\kappa}{\forall x:\tau.p \Rightarrow c}\\
    \quad \mid \kappa \notin p & \doteq & \forall x:\tau.p \Rightarrow \scoped{\kappa}{c}\\
    \scoped{\kappa}{c} & \doteq & c\\
    \bottomrule
  \end{array}\]
  \caption{The scope of a \(\kappa\) in a Constraint}
\end{figure}

\begin{figure}[H]
  \[\begin{array}{lcl}
    \toprule
    \elimksym & : & K \times C \rightarrow C \\
    \midrule
    \elimk{\kappa}{c}{\sigma} & \doteq & \elimsol{\rho}{c}\\
    \quad \mbox{where}\\
    \qquad \forall(\overline{x_i:p_i}) \Rightarrow c' & = & \scoped{\kappa}{c}\\
    \qquad \rho & = & \kappa \mapsto \lambda\overline{x}.\solk{\kappa}{\noside{c'}}\\
    \qquad \overline{x} & = & \args{\kappa}\\

    \midrule
    \elimsolsym & : & (\overline{X} \rightarrow C)^K \times C \rightarrow C \\
    \midrule
    \elimsol{\rho}{\forall x:\tau.\kappa(\overline{y}) \Rightarrow c}\\
    \quad \mid \kappa \in \text{domain}(\rho) & \doteq &
    \demorgan{x:\tau}{\rho(\kappa)(\overline{y})}{\elimsol{\rho}{c}}\\
    \elimsol{\rho}{\forall x:\tau.p \Rightarrow c} & \doteq &
      \forall x:\tau. p \Rightarrow \elimsol{\rho}{c}\\
    \elimsol{\rho}{c_1 \wedge c_2} & \doteq &
      \elimsol{\rho}{c_1} \wedge \elimsol{\rho}{c_2}\\
    \elimsol{\rho}{\kappa(\overline{y})}\\
    \quad \mid \kappa \in \text{domain}(\rho) & \doteq & \true{}\\
    \elimsol{\rho}{p} & \doteq & p\\
    \elimsol{\rho}{\exists x:\tau.c} & \doteq & \exists x:\tau.c\\

    \midrule
    \elimssym & : & \overline{K} \times C \rightarrow C \\
    \midrule
    \elims{[]}{c} & \doteq & c\\
    \elims{\kappa:\overline{\kappa}}{c} & \doteq & \elims{\overline{\kappa}}{\elimk{\kappa}{c}{meaningless}}\\

    \midrule
    \demorgansym & : & (X : \tau) \times C \times C \rightarrow C \\
    \midrule
    \demorgan{x:\tau}{{\exists}y:\tau'.p \wedge c}{c'} & \doteq &
      \forall y:\tau'.p \Rightarrow \demorgan{x:\tau}{c}{c'}\\
    \demorgan{x:\tau}{c_1 \vee c_2}{c'} & \doteq &
    \demorgan{x:\tau}{c_1}{c'} \\ && \, \wedge\, \demorgan{x:\tau}{c_1}{c'}\\
    \demorgan{x:\tau}{p}{c'} & \doteq &
      \forall x:\tau.~p \Rightarrow c'\\
    \bottomrule
  \end{array}\]
  \caption{Eliminating \(\kappa\) Variables}
\end{figure}

\begin{figure}[H]
  \[\begin{array}{lcl}
    \toprule
    \elimEsym & : & P^C \rightarrow C \rightarrow C\\
    \midrule
    \elimE{qe}{\forall x:\tau.p \Rightarrow c} & \doteq & \forall x:\tau \Rightarrow \elimE{qe}{c}\\
    \elimE{qe}{\exists x:\tau.c} & \doteq & qe(\exists x:\tau.c)\\
    \elimE{qe}{c_1 \wedge c_2} & \doteq & \elimE{qe}{c_1} \wedge \elimE{qe}{c_2}\\

    \midrule
    \elimsym_{qe} & : & P^C \rightarrow \overline{\Pi} \times C^{\Pi} \times C \rightarrow C\\
    \midrule

    \elimqe{qe}{[]}{[]}{\sigma}{c} & \doteq & c\\

    \elimqe{qe}{[]}{\pi_n:\overline{\pi}}{\sigma}{c} & \doteq &
    \elimqe{qe}{[]}{\overline{\pi}}{\sigma}{\sosubst{c}{\pi_n}{\lambda n,\overline{x}.p}}\\
    \quad \text{where}\\
    \quad \quad p & = & qe(\solp{qe}{\{\pi_n\}}{\sigma}{\sigma(\pi_n)})\\

    \midrule
    \solpsym_{qe} & : & P^C \rightarrow \overline{\Pi} \times C^{\Pi} \times C \rightarrow C\\
    \midrule
    \solp{qe}{\overline{\pi}}{\sigma}{\forall n.\pi_n(n, \overline{x}) \Rightarrow c}\\
    \quad \mid \pi_n \in \overline{\pi} & \doteq &  \solp{qe}{\overline{\pi}}{\sigma}{c} \\
    \quad \mid \pi_n \notin \overline{\pi} & \doteq & \forall n.p \Rightarrow \solp{qe}{\overline{\pi}}{\sigma}{c}\\
    \quad \quad \text{where } p & =  & qe(\solp{qe}{\overline{\pi} \cup \{\pi_n\}}{\sigma}{\sigma(\pi_n)})\\

    \solp{qe}{\overline{\pi}}{\sigma}{\forall x.p \Rightarrow c} & \doteq &
    \forall x.p \Rightarrow \solp{qe}{\overline{\pi}}{\sigma}{c}\\

    \solp{qe}{\overline{\pi}}{\sigma}{c_1 \wedge c_2} & \doteq &
    \solp{qe}{\overline{\pi}}{\sigma}{c_1} \\ && \, \wedge\,  \solp{qe}{\overline{\pi}}{\sigma}{c_2}\\

    \solp{qe}{\overline{\pi}}{\sigma}{p} & \doteq & p\\
    \bottomrule
  \end{array}\]
  \caption{Eliminating $\pi$ Variables and their Side Conditions}
\end{figure}

\begin{figure}[H]
  \begin{align*}
    \safe_{qe}(c) &\doteq \smtvalid{\text{VC}} \\
    \text{where}\\
    \quad \overline{\kappa} &= \{\kappa \mid \kappa \text{ in } c\} \\
    \quad c_1 &= \poke{\emptyset}{c} \\
    \quad \overline{\pi} &= \{\kappa \mid \kappa \text{ in } c_1\} - \overline{\kappa} \\
    \quad c_2 &= \elims{\overline{\kappa}}{c_1} \\
    \quad \sigma(\pi_n) &= \defconstr{\pi_n}{c_2} \\
    \quad c_3 &= \elimqe{qe}{meaningless drivel}{\overline{\pi}}{\sigma}{c_2} \\
    \quad \text{VC} &= \elimE{qe}{c_3}{}{}{}
  \end{align*}
  
  \caption{Checking if an \ehc{} is safe}
  \label{fig:algo:safe}
\end{figure}
\section{Proofs}
\label{sec:proofs}

\begin{lemma} \label{lem:fresh-wf}
  If $t' = \fresh{\Gamma}{t}$, and $\kvars{t'} \subseteq \domain{\Delta}$, then $\isWellFormed{\Delta(\Gamma)}{\Delta(t)}$.
\end{lemma}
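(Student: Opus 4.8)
The plan is to proceed by structural induction on $t$, following the recursive definition of $\freshsym$, and to assemble the well-formedness derivation for $\Delta(t)$ by mirroring that recursion against the well-formedness rules of \figref{fig:statics:complete:1}. Two structural facts about $\freshsym$ drive the argument: first, it preserves the unrefined skeleton, $\forgetreft{t'} = \forgetreft{t}$, so that $\Delta(t)$, $t$, and $t'$ all carry the same head constructor at every node; and second, whenever $\freshsym$ recurses under a binder it extends its context with exactly the variable that the matching well-formedness rule introduces (e.g. it descends into the codomain of a function type under $\fbd{x'}{t_1'}$, while [\wtFun{}] descends under the corresponding corporeal binder). The hypothesis $\kvars{t'} \subseteq \domain{\Delta}$ transfers to every subterm, since the predicate variables of a subtemplate are a subset of those of $t'$, so the induction hypothesis is always applicable.

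For the inductive cases (function, implicit-function, implicit-pair, polymorphic, and type-variable) I would apply the corresponding rule ([\wtFun{}], [\wtIFun{}], [\wtSigma{}], [\wtForall{}], [\wtVar{}]) and discharge its premises from the induction hypotheses, after checking that $\Delta$ commutes with context extension, $\Delta(\Gamma, \fbd{x}{t_1}) = \Delta(\Gamma), \fbd{x}{\Delta t_1}$, so that the context in which the codomain is checked is precisely $\Delta$ applied to the context in which $\freshsym$ produced the codomain template. The base case $t = \reftpv{x}{b}{r}$ is the heart of the argument: here $\freshsym$ emits $t' = \reftpv{v}{b}{\kappa(\overline{x}, v)}$, where $\overline{x}$ is exactly the corporeal part of $\Gamma$ and $\kappa$ is fresh, so $\kappa \in \kvars{t'} \subseteq \domain{\Delta}$ and $\Delta$ assigns it a predicate $\lambda \overline{x}\,v.\,s$. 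Applying [\wtBase{}] to $\Delta(t) = \reftpv{x}{b}{\Delta(r)}$ reduces the goal to showing that $\Delta(r)$ is $\tbool$-sorted in $\forgetreft{\forgetimplicits{\Delta(\Gamma)}}, \fbd{x}{b}$.

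I expect this sorting obligation to be the main obstacle, because it is the one place where the bare hypotheses do not literally suffice. It requires (i) that $r$ is itself scoped to $\Gamma, \fbd{x}{b}$ and (ii) that $\Delta$ sends each predicate variable occurring in $r$ to a $\tbool$-sorted predicate whose free variables lie among its declared arguments. Neither follows from $\kvars{t'} \subseteq \domain{\Delta}$ on its own, so the plan is to strengthen the induction with a companion invariant on $\Delta$ --- that every $\Delta(\kappa)$ is a well-sorted, well-scoped solution, which is exactly the shape guaranteed by the argument lists $\freshsym$ records and is maintained by constraint generation --- and to carry the well-scopedness of $t$'s refinements (a property of the synthesized or annotated types to which $\freshsym$ is applied) as a side hypothesis. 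With these in hand, substitution by $\Delta$ preserves $\tbool$-sorting, the base case closes, and the inductive cases assemble the full derivation of $\isWellFormed{\Delta(\Gamma)}{\Delta(t)}$.
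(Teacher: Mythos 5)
The paper's entire proof of this lemma is the single line ``Follows by a simple induction,'' and your structural induction on $t$, mirroring the recursion of $\freshsym$ against the well-formedness rules, is exactly that induction spelled out, so your proposal is correct and takes essentially the same route. The companion invariant you add for the base case---that $\Delta$ sends each $\kappa$ to a boolean-sorted predicate scoped to its declared arguments---is not a departure from the paper but merely makes explicit its standing convention on second-order assignments $\sobind{\lambda \overline{x}.r}{\kappa}$, where $r$ is a predicate of the background theory over the parameters $\overline{x}$.
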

\begin{proof}
  Follows by a simple induction.
\end{proof}

\begin{lemma} \label{lem:fresh-erasure}
  If $t = \fresh{\Gamma}{\tau}$, then $\forgetreft{t} = \tau$.
\end{lemma}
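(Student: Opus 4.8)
The plan is to prove the statement by structural induction on the unrefined type $\tau$, generalizing over the context $\Gamma$. The generalization is necessary because the function clause of the $\freshsym$ metafunction recurses under an extended context $\Gamma, \fbd{x'}{t_1'}$, so the induction hypothesis must be available for \emph{every} context rather than a fixed one. The quantity to establish is thus: for all unrefined $\tau$ and all $\Gamma$, $\forgetreft{\fresh{\Gamma}{\tau}} = \tau$.

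First I would observe that since $\tau$ is unrefined---that is, it lies in the image of $\forgetreft{\cdot}$---it is built only from base types $b$, type variables $\alpha$, quantifiers $\tpoly{\alpha}{\tau'}$, and explicit function types $\tfun{\tau_1}{\tau_2}$. Crucially, $\tau$ contains no implicit-function or implicit-pair constructors, since $\forgetreft{\trifun{x}{t_1}{t_2}} = \forgetreft{t_2}$ and $\forgetreft{\tcoifun{x}{t_1}{t_2}} = \forgetreft{t_2}$ discard those formers entirely. Consequently the $\freshsym$ clauses for $\trifun{x}{t_1}{t_2}$ and $\tcoifun{x}{t_1}{t_2}$ are never reached, and only four cases remain to check.

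In the base cases: for $\tau = b$, the base-type clause of $\freshsym$ yields $\fresh{\Gamma}{b} = \reftpv{v}{b}{\kappa(\ldots, v)}$ for fresh $\kappa, v$, and $\forgetreft{\reftpv{v}{b}{\kappa(\ldots, v)}} = b$; for $\tau = \alpha$, we have $\fresh{\Gamma}{\alpha} = \alpha$ and $\forgetreft{\alpha} = \alpha$. In the inductive cases: for $\tau = \tpoly{\alpha}{\tau'}$, I would unfold $\fresh{\Gamma}{\tpoly{\alpha}{\tau'}} = \tpoly{\alpha}{\fresh{\Gamma}{\tau'}}$, apply $\forgetreft{\cdot}$ to get $\tpoly{\alpha}{\forgetreft{\fresh{\Gamma}{\tau'}}}$, and conclude by the induction hypothesis on $\tau'$. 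For $\tau = \tfun{\tau_1}{\tau_2}$, I would unfold to $\trfun{x'}{t_1'}{t_2'}$ with $t_1' = \fresh{\Gamma}{\tau_1}$ and $t_2' = \fresh{(\Gamma, \fbd{x'}{t_1'})}{\tau_2}$, then use $\forgetreft{\trfun{x'}{t_1'}{t_2'}} = \tfun{\forgetreft{t_1'}}{\forgetreft{t_2'}}$ together with the induction hypothesis on $\tau_1$ (under $\Gamma$) and on $\tau_2$ (under the extended context) to obtain $\tfun{\tau_1}{\tau_2} = \tau$.

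There is essentially no hard step here, matching the ``simple induction'' remark attached to the neighboring Lemma~\ref{lem:fresh-wf}. The only two points requiring any care are the generalization over $\Gamma$, so that the induction hypothesis applies in the extended context of the function clause, and the preliminary observation that restricting the input to unrefined types guarantees the implicit-constructor clauses of $\freshsym$ are unreachable. The remainder is a routine unfolding of the $\freshsym$ and $\forgetreft{\cdot}$ definitions.
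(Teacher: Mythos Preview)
Your proposal is correct and matches the paper's own proof, which simply reads ``Follows by a simple induction.'' Your explicit generalization over $\Gamma$ and the observation that unrefined $\tau$ cannot contain implicit-function or implicit-pair constructors (so those $\freshsym$ clauses are unreachable) are exactly the small technical points one would spell out when making that induction precise.
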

\begin{proof}
  Follows by a simple induction.
\end{proof}

\begin{lemma} \label{lem:gamma-genimp}
  $\valid{\Delta}{\Psi}{\embed{\Gamma}(\genimp{x}{t_x}{c})}$ iff 
  $\valid{\Delta}{\Psi}{\embed{\Gamma,\fbd{x}{t_x}}(c)}$.
\end{lemma}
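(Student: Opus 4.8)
The plan is to reduce the biconditional to a syntactic identity between the two $\smtlang$ formulas that are actually handed to the validity judgment. Since $\valid{\Delta}{\Psi}{\cdot}$ depends only on its constraint argument, it suffices to prove the stronger statement
$$\embed{\Gamma}(\genimp{x}{t_x}{c}) \;=\; \embed{\Gamma,\fbd{x}{t_x}}(c),$$
from which the ``iff'' is immediate. The first thing I would observe is that both sides are of the form $\embed{\Gamma}(\phi)$ for the \emph{same} suffix $\Gamma$ (on the right, because the corporeal-binder and catch-all clauses of $\embed{\cdot}$ both peel off the last binder and re-apply $\embed{\Gamma}$), so no induction on $\Gamma$ is needed: it is enough to check that the inner formulas $\phi$ agree.

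The core is then a single case analysis on the shape of $t_x$, which is exactly the case split shared by the definitions of $\genimp{\cdot}{\cdot}{\cdot}$ and of $\embed{\cdot}$. If $t_x = \reftpv{v}{b}{r}$ is a refined base type, unfolding the first clause of $\genimp{\cdot}{\cdot}{\cdot}$ (renaming the value variable to the binder name $x$) gives the inner formula $\forall \tb{x}{b}.\, \subst{r}{v}{x} \Rightarrow c$, while the corporeal-binder clause of $\embed{\cdot}$ gives $\forall \tb{x}{b}.\, \embed{\subst{r}{v}{x}} \Rightarrow c$; these are the same formula up to the embedding of the guard refinement, which the definition of $\genimp{\cdot}{\cdot}{\cdot}$ leaves implicit (during constraint generation the refinements carried by types are already $\smtlang$ propositions, so $\embed{\cdot}$ acts as the identity on them). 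If $t_x$ is any other type---a function, implicit-function, pair, or polymorphic type, i.e.\ one for which $\forgetreft{t_x}$ is not an interpretable base type---then the second clause $\genimp{x}{t_x}{c} \doteq c$ and the catch-all clause $\embed{\Gamma,\_\,}(c) \doteq \embed{\Gamma}(c)$ both discard the binder, so the inner formulas are literally $c$ on both sides.

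I expect no real obstacle: the lemma is a definitional bridge stating that extending the context with a corporeal binder and then embedding coincides with applying the generalized implication and embedding the smaller context, and it falls out by unfolding. The only care required is bookkeeping---handling $\alpha$-renaming of the refinement's value variable ($v$ versus the binder name $x$) consistently across the two unfoldings, and fixing the convention that identifies a refinement $r$ with its embedding $\embed{r}$ so that the two sides are equal on the nose rather than merely logically equivalent. With those conventions pinned down, the proof is the above case split followed by unfolding, and the biconditional on validity then holds trivially.
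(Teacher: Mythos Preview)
Your proposal is correct and matches the paper's approach: the paper's proof is the single sentence ``Follows from the definition of $\embed{\Gamma}$ and $\genimp{x}{t_x}{c}$,'' and your argument is precisely a careful unfolding of those two definitions via the same case split on whether $t_x$ is a refined base type. You have simply written out in detail what the paper leaves implicit, including the bookkeeping about $\alpha$-renaming the value variable and identifying $r$ with $\embed{r}$.
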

\begin{proof}
  Follows from the definition of $\embed{\Gamma}$ and $\genimp{x}{t_x}{c}$.
\end{proof}

\begin{theorem}[Soundness of Type Inference] \label{lem:inference-sound}
  {~}
  \begin{enumerate}
  \item If $\synth{\Gamma}{e}{t}{c}$, $\valid{\Delta}{\Psi}{\embed{\Gamma}(c)}$, and $\kvars{\Gamma} \cup \kvars{t} \subseteq \domain{\Delta}$, then $\hastype{\Delta(\Gamma)}{e}{\Delta(t)}$.

  \item If $\checking{\Gamma}{e}{t}{c}$, $\valid{\Delta}{\Psi}{\embed{\Gamma}(c)}$, and $\kvars{\Gamma} \cup \kvars{t} \subseteq \domain{\Delta}$, then $\hastype{\Delta(\Gamma)}{e}{\Delta(t)}$.

  \item If $\spinesynth{\Gamma}{t}{e}{t'}{c}$, $\valid{\Delta}{\Psi}{\embed{\Gamma}(c)}$, and $\kvars{\Gamma} \cup \kvars{t} \cup \kvars{t'} \subseteq \domain{\Delta}$, then $\appjudge{\Delta(\Gamma)}{\Delta({t})}{e}{\Delta(t')}$.

  \item If $\spinechecking{\Gamma}{t}{e}{t'}{c}$, $\valid{\Delta}{\Psi}{\embed{\Gamma}(c)}$, and $\kvars{\Gamma} \cup \kvars{t} \cup \kvars{t'} \subseteq \domain{\Delta}$, then $\appjudge{\Delta(\Gamma)}{\Delta({t})}{e}{\Delta(t')}$.

  \item If $\subtyping{\Gamma}{t_1}{t_2}{c}$, $\valid{\Delta}{\Psi}{\embed{\Gamma}(c)}$, and $\kvars{\Gamma} \subseteq \domain{\Delta}$ then $\isSubType{\Delta(\Gamma)}{\Delta(t_1)}{\Delta(t_2)}$.
  \end{enumerate}
\end{theorem}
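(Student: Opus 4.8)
The plan is to prove all five statements simultaneously by induction on the derivation of the algorithmic judgment named in each part. Generalizing to an arbitrary $\Gamma$ (with the embedding $\embed{\Gamma}(\cdot)$ wrapped around the constraint in the validity hypothesis) and to arbitrary $\Delta, \Psi$ is exactly what makes the mutual induction go through; the main-text theorem is then the $\Gamma = \bullet$ instance of part~1, since $\embed{\bullet}(c) = c$ and $\kvars{\bullet} = \emptyset$. The guiding intuition is that $\Delta$ supplies the refinement types that the declarative rules conjure out of thin air (the predicate variables $\kappa$ introduced by $\freshsym$), that $\Psi$ supplies the implicit-argument instantiations the declarative rules guess (the existential binders introduced by [\cAppIFun{}], [\sAppIFun{}], and the $\tcoifun$-right subtyping rule), and that the hypothesis $\valid{\Delta}{\Psi}{\embed{\Gamma}(c)}$ certifies that every verification condition arising in the reconstructed declarative derivation is valid.

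For the syntax-directed, non-implicit rules the argument is bookkeeping. Each such rule emits a conjunction $c_1 \wedge \cdots \wedge c_k$, and validity of a conjunction under $\embed{\Gamma}(\cdot)$ splits into validity of each conjunct, so every premise can be fed its own validity hypothesis before invoking the corresponding induction hypothesis. Where an algorithmic rule records a binder as a generalized implication $\genimp{x}{t_x}{c}$, Lemma~\ref{lem:gamma-genimp} rewrites $\valid{\Delta}{\Psi}{\embed{\Gamma}(\genimp{x}{t_x}{c})}$ as $\valid{\Delta}{\Psi}{\embed{\Gamma,\fbd{x}{t_x}}(c)}$, matching the context extension in the declarative rule. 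In [\cLetBase{}] (and the synthesis rules that introduce $\freshsym$) we take the guessed declarative type to be $\Delta(\hat t)$ with $\hat t = \fresh{\Gamma}{\tau}$: Lemma~\ref{lem:fresh-wf} discharges the well-formedness premise $\isWellFormed{\Delta(\Gamma)}{\Delta(\hat t)}$ and Lemma~\ref{lem:fresh-erasure} discharges $\forgetreft{\Delta(\hat t)} = \tau$. The $\kvars{\cdot} \subseteq \domain{\Delta}$ side conditions propagate to the sub-derivations because the final semantics rule for $\valid{\cdot}{\cdot}{\cdot}$ demands that $\Delta$ eliminate every $\kappa$ in the embedded constraint, so $\domain{\Delta}$ already covers the freshly generated predicate variables. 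The base rule [\aSubBase{}] is the endpoint: after applying $\Delta$, its emitted formula $\forall x_1{:}\tau.\, r_1 \Rightarrow \subst{r_2}{x_2}{x_1}$ embedded under $\Gamma$ is precisely the SMT-valid implication required by [\tSubBase{}], so validity hands us exactly that premise. We then reassemble the declarative derivation using [\tSub{}], [\tApp{}]/[\tAppFun{}], [\tLetBase{}], [\tUnpack{}], and the subtyping rules.

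The one genuinely delicate family of cases is the implicit-instantiation rules [\cAppIFun{}], [\sAppIFun{}], and the $\tcoifun$-right subtyping rule, each of which introduces a fresh binder $z$ and emits $\genexists{z}{t_x}{c}$. Unlike the universal side, the declarative counterparts ([\tAppIFun{}] and [\tSubSigmaR{}]) demand an actual instantiation expression $e'$ together with a typing $\hastype{\forgetimplicits{\Gamma}}{e'}{t_x}$, whereas the algorithm has recorded only an existential obligation to be discharged by $\Psi$. The crux is therefore to convert the first-order witness $r$ that $\Psi$ assigns to $z$ into a refinement-level expression $e'$ of type $t_x$, and to show the witnessed constraint lines up with the instantiated type. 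Concretely, I would establish a substitution/commutation lemma stating that $\fosubst{\embed{\Gamma}(\genexists{z}{t_x}{c})}{z}{r}$ agrees with $\embed{\Gamma}(\subst{c}{z}{e'})$ whenever $e'$ denotes $r$, so that peeling the $\fobind{r}{z}$ entry off $\Psi$ (via the first semantics rule for $\valid{\cdot}{\cdot}{\cdot}$) yields $\valid{\Delta}{\Psi'}{\embed{\Gamma}(\subst{c}{z}{e'})}$; the induction hypothesis then applies to the reduced derivation and [\tAppIFun{}] (resp.\ [\tSubSigmaR{}]) closes the case. Two compatibility points must be checked here: that the algorithmic ghost binder $\ebd{z}{t_x}$, which $\forgetimplicits{\cdot}$ converts to a corporeal binder, matches the $\forgetimplicits{\Gamma}$ appearing in the declarative premise; and that the witness $r$ is well-scoped as an expression over the corporeal and ghost binders of $\Gamma$, exactly the context in which the declarative rule types $e'$. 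I expect this translation from logical existential witnesses to declarative instantiation expressions, together with the commutation lemma between existential substitution and type substitution under $\embed{\cdot}$, to be the main obstacle; everything else is routine rule-by-rule reconstruction.
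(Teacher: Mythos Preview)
Your proposal is correct and follows essentially the same mutual-induction strategy as the paper: same five-way simultaneous induction, same helper lemmas (\ref{lem:fresh-wf}, \ref{lem:fresh-erasure}, \ref{lem:gamma-genimp}), and the same idea of reading the implicit instantiation $e'$ off the $\Psi$-witness for the existentially bound $z$ in the $\genexists{z}{t_x}{c}$ cases.

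There is one methodological difference worth flagging. For rules whose premises mention an \emph{intermediate} type (e.g.\ the synthesized $t'$ in [\cSub], the synthesized $t_1$ in the application rules, the synthesized pair type in \textsc{Unpack}), you argue that $\domain{\Delta}$ already covers the $\kappa$s in that type because the base validity rule forces $\kvars{c} = \emptyset$ after applying $\Delta$. That is plausible but requires, as a side lemma, that every $\kappa$ occurring in an intermediately synthesized type actually shows up in the emitted constraint. The paper sidesteps this: it picks an \emph{arbitrary} extension $\Delta'$ assigning the extra $\kappa$s, applies the IH under $\Delta,\Delta'$, and then invokes a strengthening observation (since neither $\Gamma$ nor the conclusion type mentions those $\kappa$s, the derived declarative judgment is unchanged when $\Delta'$ is dropped). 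Your route is arguably cleaner if the coverage lemma holds; the paper's is more robust because it never needs that lemma. Conversely, you are more explicit than the paper about the commutation needed in the existential cases (that $\fosubst{\embed{\Gamma}(\genexists{z}{t_x}{c})}{z}{r}$ coincides with $\embed{\Gamma}(\subst{c}{z}{e'})$), which the paper treats as immediate by writing the $\Psi$-entry directly as $\fobind{e'}{z}$.
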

\begin{proof}
  We proceed by mutual induction on the derivations.

\begin{enumerate}
  \item The synthesis judgments:
  \begin{itemize}
    \pfcase{$\synth{\Gamma}{\tabs{\alpha}{e}}{\tpoly{\alpha}{t}}{c}$}

    Immediate by inductive hypothesis.
    
    \pfcase{$\synth{\Gamma}{\tyapp{e}{\tau}}{\subst{t}{\alpha}{\tau}}{c}$}

    Immediate by inductive hypothesis, \lemref{lem:fresh-wf}, and \lemref{lem:fresh-erasure}.
    
    \pfcase{$\synth{\Gamma}{c}{\tc{c}}{\true}$}

    Immediate.
    
    \pfcase{$\synth{\Gamma}{x}{\reftpv{v}{b}{\subst{r}{y}{v} \wedge v = x}}{\true{}}$}

    Consider $\Delta$, $\Psi$ such that $\valid{\Delta}{\Psi}{\true}$, and $\kvars{\Gamma} \subseteq \domain{\Delta}$.
    We must show
    \[\hastype{\Delta(\Gamma)}{x}{\Delta({\reftpv{v}{b}{\subst{r}{y}{v} \wedge v = x}})}\]
    By inverting our hypothesis we have that $x:\reftpv{y}{b}{r} \in \Gamma$ and therefore that
    $x:\reftpv{y}{b}{\Delta(r)} \in \Delta(\Gamma)$.
    We can derive our goal with an application of \tSub as follows:
    \[
      \inferrule
        {\smtvalid{\cdots \forall x:b.\embed{\subst{\Delta(r)}{y}{x}} \Rightarrow \cdots \forall y:b.\embed{\Delta(r)} \Rightarrow (\embed{\Delta(r) \wedge y = x})}}
        {\isSubType{\Delta(\Gamma)}{\reftpv{y}{b}{\Delta(r)}}{\reftpv{v}{b}{\subst{\Delta(r)}{y}{v} \wedge v = x}}}
    \]
    
    \pfcase{$\synth{\Gamma}{x}{t}{\true{}}$}

    Immediate.
    
    \pfcase{$\synth{\Gamma}{\elambda{x}{\tau}{e}}{\trfun{x}{\hat{t}}{t}}{\genimp{x}{\hat{t}}{c}}$}

    Consider $\Delta$, $\Psi$ such that $\valid{\Delta}{\Psi}{\genimp{x}{\hat{t}}{\embed{\Gamma}(c)}}$, and $\kvars{\Gamma} \cup \kvars{\trfun{x}{\hat{t}}{t}} \subseteq \domain{\Delta}$.
    We must show
    \[
      \hastype{\Delta(\Gamma)}{\elambda{x}{\tau}{e}}{\Delta(\trfun{x}{\hat{t}}{t})}
    \]
    This follows by inductive hypothesis, \lemref{lem:fresh-wf}, \lemref{lem:fresh-erasure}, and \lemref{lem:gamma-genimp}.

    \pfcase{$\synth{\Gamma}{\elambda{x}{t_x}{e}}{\trfun{x}{t_x}{t}}{\genimp{x}{t_x}{c}}$}

    Immediate by inductive hypothesis and \lemref{lem:gamma-genimp}.
    
    \pfcase{$\synth{\Gamma}{\app{e_1}{e_2}}{t_2}{c_1 \wedge c_2}$}

    Consider $\Delta$, $\Psi$ such that $\valid{\Delta}{\Psi}{c_1 \wedge c_2}$, and $\kvars{\Gamma} \cup \kvars{t_2} \subseteq \domain{\Delta}$.
    We must show
    \[
      \hastype{\Delta(\Gamma)}{\app{e_1}{e_2}}{\Delta(t_2)}
    \]

    By inversion we have $\synth{\Gamma}{e_1}{t_1}{c_1}$
    and $\spinesynth{\Gamma}{t_1}{e_2}{t_2}{c_2}$.
    
    Let $\overline{\kappa} = \kvars{t_1}$.
    Now consider an arbitrary $\Delta'$ such that $\overline{\kappa} \subseteq (\Delta',\Delta)$ and $\domain{\Delta'} \cap \domain{\Delta} = \emptyset$.
    Then if $\valid{\Delta',\Delta}{\Psi}{c_1}$ and $\valid{\Delta',\Delta}{\Psi}{c_2}$ our inductive hypothesis gives us $\hastype{(\Delta',\Delta)(\Gamma)}{e_2}{(\Delta',\Delta)(t_1)}$ and $\appjudge{(\Delta',\Delta)(\Gamma)}{(\Delta',\Delta)(t_1)}{e_2}{(\Delta',\Delta)(t_2)}$.
    From this we can derive
    \[
      \hastype{(\Delta,\Delta')(\Gamma)}{\app{e_1}{e_2}}{(\Delta,\Delta')(t_2)}
    \]
    but by strengthening this is equivalent to
    \[
      \hastype{(\Delta)(\Gamma)}{\app{e_1}{e_2}}{(\Delta)(t_2)}
    \]

    It then suffices to show that there exists such a $\Delta'$.
    But we already have $\valid{\Delta}{\Psi}{\embed{\Gamma}(c_1)}$ and $\valid{\Delta}{\Psi}{\embed{\Gamma}(c_2)}$ by assumption.
    Therefore the only restriction on $\Delta'$ is that is covers the extra $\kappa$ variables in $t_1$ and doesn't intersect with $\Delta$.
    These conditions are easy to meet by making arbitrary (well-formed) choices for the $\kappa$ variables in $t_1$.
    
    \pfcase{$\synth{\Gamma}{\ilambda{x}{t_x}{e}}{\trifun{x}{t_x}{t}}{\genimp{x}{\hat{t}}{c}}$}

    Immediate by inductive hypothesis and \lemref{lem:gamma-genimp}.
    
    \pfcase{$\synth{\Gamma}{\eletin{\tb{x}{\tau}}{e_1}{e_2}}{\hat{t}'}{c_1\wedge{} (\genimp{x}{\hat{t}}{c \wedge c_2})}$}

    Consider $\Delta$, $\Psi$ such that $\valid{\Delta}{\Psi}{\embed{\Gamma}(c_1\wedge{} (\genimp{x}{\hat{t}}{c \wedge c_2}))}$, and $\kvars{\Gamma} \cup \kvars{\hat{t}'} \subseteq \domain{\Delta}$.
    We must show
    \[
      \hastype{\Delta(\Gamma)}{\eletin{\tb{x}{\tau}}{e_1}{e_2}}{\Delta(\hat{t}')}
    \]
    Inverting our assumption we have that $\checking{\Gamma}{e_1}{\hat{t}}{c_1}$, $\synth{\Gamma,\fbd{x}{\hat{t}}}{e_2}{t'}{c_2}$, $\subtyping{\Gamma,\fbd{x}{\hat{t}}}{t'}{\hat{t}'}{c}$, $\hat{t} = \fresh{\Gamma}{\tau}$, and $\hat{t}' = \fresh{\Gamma}{t'}$.
    From our inductive hypothesis we may derive $\hastype{\Delta(\Gamma)}{e_1}{\Delta(\hat{t})}$ by \lemref{lem:fresh-wf}, \lemref{lem:fresh-erasure}, and the fact that $\valid{\Delta}{\Psi}{\embed{\Gamma}(c_1)}$.

    Let $\overline{\kappa} = \kvars{t'}$.
    Now consider an arbitrary $\Delta'$ such that $\overline{\kappa} \subseteq (\Delta',\Delta)$ and $\domain{\Delta'} \cap \domain{\Delta} = \emptyset$.
    Our inductive hypothesis gives us that, if $\valid{\Delta',\Delta}{\Psi}{\embed{\Gamma,\fbd{x}{\hat{t}}}(c_2)}$ and $\valid{\Delta',\Delta}{\Psi}{\embed{\Gamma,\fbd{x}{\hat{t}}}(c)}$,
    then $\hastype{(\Delta',\Delta)(\Gamma,\fbd{x}{\hat{t}})}{e_2}{(\Delta',\Delta)(t)}$
    and $\isSubType{(\Delta',\Delta)(\Gamma,\fbd{x}{\hat{t}})}{(\Delta',\Delta)(t')}{(\Delta',\Delta)(\hat{t}')}$.
    Both validities hold by assumption and \lemref{lem:gamma-genimp}.
    We may then pick any choices for $\overline{\kappa}$ and then a use of the subtyping judgment and strengthening gives us
    \[
      \hastype{\Delta(\Gamma)}{\eletin{\tb{x}{\tau}}{e_1}{e_2}}{\Delta(\hat{t}')}
    \]

    \pfcase{$\synth{\Gamma}{\eletin{\tb{x}{t_x}}{e_1}{e_2}}{\hat{t}}{\genimp{x}{t_x}{(c_1 \wedge c_2 \wedge c_3)}}$}

    Consider $\Delta$, $\Psi$ such that $\valid{\Delta}{\Psi}{\embed{\Gamma}(\genimp{x}{t_x}{(c_1 \wedge c_2 \wedge c_3)})}$, and $\kvars{\Gamma} \cup \kvars{\hat{t}} \subseteq \domain{\Delta}$.
    We must show
    \[
      \hastype{\Delta(\Gamma)}{\eletin{\tb{x}{t_x}}{e_1}{e_2}}{\Delta(\hat{t})}
    \]
    Inverting our assumption we have that $\checking{\Gamma, \fbd{x}{t_x}}{e_1}{t_x}{c_1}$, $\synth{\Gamma,\fbd{x}{t_x}}{e_2}{t}{c_2}$, $\subtyping{\Gamma,\fbd{x}{t_x}}{t}{\hat{t}}{c}$, and $\hat{t} = \fresh{\Gamma}{t}$.
    From our inductive hypothesis we may derive $\hastype{\Delta(\Gamma,\fbd{x}{t_x})}{e_1}{\Delta(t_x)}$ as $\valid{\Delta}{\Psi}{\embed{\Gamma}(\genimp{x}{t_x}{c_1})}$ by \lemref{lem:gamma-genimp}.

    Let $\overline{\kappa} = \kvars{t}$.
    Now consider an arbitrary $\Delta'$ such that $\overline{\kappa} \subseteq (\Delta',\Delta)$ and $\domain{\Delta'} \cap \domain{\Delta} = \emptyset$.
    Our inductive hypothesis gives us that, if $\valid{\Delta',\Delta}{\Psi}{\embed{\Gamma,\fbd{x}{t_x}}(c_2)}$ and $\valid{\Delta',\Delta}{\Psi}{\embed{\Gamma,\fbd{x}{t_x}}(c)}$,
    then $\hastype{(\Delta',\Delta)(\Gamma,\fbd{x}{t_x})}{e_2}{(\Delta',\Delta)(t)}$
    and $\isSubType{(\Delta',\Delta)(\Gamma,\fbd{x}{t_x})}{(\Delta',\Delta)(t)}{(\Delta',\Delta)(\hat{t})}$.
    Both validities hold by assumption and \lemref{lem:gamma-genimp}.
    We may then pick any choices for $\overline{\kappa}$ and then a use of the subtyping judgment and strengthening gives us
    \[
      \hastype{\Delta(\Gamma)}{\eletin{\tb{x}{t_x}}{e_1}{e_2}}{\Delta(\hat{t})}
    \]

    \pfcase{$\synth{\Gamma}{\eunpack{x}{y}{e_1}{e_2}}{\hat{t}}{c_1 \wedge (\genimp{x}{t_1}{(\genimp{y}{\subst{t_2}{x'}{x}}{(c_2 \wedge c_3)})})}$}

    Consider $\Delta$, $\Psi$ such that $\valid{\Delta}{\Psi}{\embed{\Gamma}(c_1 \wedge (\genimp{x}{t_1}{(\genimp{y}{\subst{t_2}{x'}{x}}{(c_2 \wedge c_3)})}))}$, and $\kvars{\Gamma} \cup \kvars{\hat{t}} \subseteq \domain{\Delta}$.
    We must show
    \[
      \hastype{\Delta(\Gamma)}{\eunpack{x}{y}{e_1}{e_2}}{\Delta(\hat{t})}
    \]

    Inverting our assumption we have that $\synth{\Gamma}{e_1}{\tcoifun{x'}{t_1}{t_2}}{c_1}$, $\synth{\Gamma,\ebd{x}{t_1},\fbd{y}{\subst{t_2}{x'}{x}}}{e_2}{t}{c_2}$, $\subtyping{\Gamma,\ebd{x}{t_1},\fbd{y}{\subst{t_2}{x'}{x}}}{t}{\hat{t}}{c_3}$, and $\hat{t} = \fresh{\Gamma}{t}$.
    From our inductive hypothesis we may derive $\hastype{\Delta(\Gamma)}{e_1}{\Delta(\tcoifun{x'}{t_1}{t_2})}$ as $\valid{\Delta}{\Psi}{\embed{\Gamma}(c_1)}$.

    Let $\overline{\kappa} = \kvars{t}$.
    Now consider an arbitrary $\Delta'$ such that $\overline{\kappa} \subset (\Delta', \Delta)$ and $\domain{\Delta'} \cap \domain{\Delta} = \emptyset$.
    Our inductive hypothesis gives us that, if $\valid{\Delta',\Delta}{\Psi}{\embed{\Gamma,\ebd{x}{t_x},\fbd{y}{\subst{t_2}{x'}{x}}}}(c_2)$ and $\valid{\Delta',\Delta}{\Psi}{\embed{\Gamma,\fbd{x}{t_x}}(c_3)}$,
    then $\hastype{(\Delta',\Delta)(\Gamma,\ebd{x}{t_x},\fbd{y}{\subst{t_2}{x'}{x}})}{e_2}{(\Delta',\Delta)(t)}$
    and $\isSubType{(\Delta',\Delta)(\Gamma,\ebd{x}{t_x},\fbd{y}{\subst{t_2}{x'}{x}})}{(\Delta',\Delta)(t)}{(\Delta',\Delta)(\hat{t})}$.
    Both validities hold by assumption and \lemref{lem:gamma-genimp}.
    We may then pick any choices for $\overline{\kappa}$ and then a use of the subtyping judgment and strengthening gives us
    \[
      \hastype{\Delta(\Gamma)}{\eunpack{x}{y}{e_1}{e_2}}{\Delta(\hat{t})}
    \]
  \end{itemize}

\item The checking judgments.
  \begin{itemize}
    \pfcase[Subtyping]

    Our assumption is that
    \[\inferrule
      {\synth{\Gamma}{e}{t'}{c}
      \\ \subtyping{\Gamma}{t'}{t}{c'}}
      {\checking{\Gamma}{e}{t}{c \wedge c'}}\]

    Consider $\Delta$, $\Psi$ such that $\valid{\Delta}{\Psi}{\embed{\Gamma}(c \wedge c')}$, and $\kvars{\Gamma} \cup \kvars{t} \subseteq \domain{\Delta}$.
    We must show
    \[
      \hastype{\Delta(\Gamma)}{e}{\Delta(t)}
    \]
    Let $\overline{\kappa} = \kvars{t'}$.
    Now consider an arbitrary $\Delta'$ such that $\overline{\kappa} \subseteq (\Delta',\Delta)$ and $\domain{\Delta'} \cap \domain{\Delta} = \emptyset$.
    Our inductive hypothesis gives us that, if $\valid{\Delta',\Delta}{\Psi}{\embed{\Gamma}(c)}$ and $\valid{\Delta',\Delta}{\Psi}{\embed{\Gamma}(c')}$,
    then $\hastype{(\Delta',\Delta)(\Gamma)}{e}{(\Delta',\Delta)(t')}$
    and $\isSubType{(\Delta',\Delta)(\Gamma)}{(\Delta',\Delta)(t')}{(\Delta',\Delta)(t)}$.
    $\valid{\Delta}{\Psi}{\embed{\Gamma}(c)}$ and $\valid{\Delta}{\Psi}{\embed{\Gamma}(c')}$ so we pick arbitrary instantiations of $\overline{\kappa}$.
    With strengthening we can then derive
    \[
      \hastype{\Delta(\Gamma)}{e}{\Delta(t)}
    \]

    \pfcase[$\checking{\Gamma}{\app{e_1}{e_2}}{t}{c_1 \wedge c_2}$]

    Consider $\Delta$, $\Psi$ such that $\valid{\Delta}{\Psi}{c_1 \wedge c_2}$, and $\kvars{\Gamma} \cup \kvars{t} \subseteq \domain{\Delta}$.
    We must show
    \[
      \hastype{\Delta(\Gamma)}{\app{e_1}{e_2}}{\Delta(t)}
    \]

    By inversion we have $\synth{\Gamma}{e_1}{t'}{c_1}$
    and $\spinechecking{\Gamma}{t'}{e_2}{t}{c_2}$.
    
    Let $\overline{\kappa} = \kvars{t'}$.
    Now consider an arbitrary $\Delta'$ such that $\overline{\kappa} \subseteq (\Delta',\Delta)$ and $\domain{\Delta'} \cap \domain{\Delta} = \emptyset$.
    Then if $\valid{\Delta',\Delta}{\Psi}{c_1}$ and $\valid{\Delta',\Delta}{\Psi}{c_2}$ our inductive hypothesis gives us $\hastype{(\Delta',\Delta)(\Gamma)}{e_1}{(\Delta',\Delta)(t')}$ and $\appjudge{(\Delta',\Delta)(\Gamma)}{(\Delta',\Delta)(t')}{e_2}{(\Delta',\Delta)(t)}$.
    From this we can derive
    \[
      \hastype{(\Delta,\Delta')(\Gamma)}{\app{e_1}{e_2}}{(\Delta,\Delta')(t')}
    \]
    but by strengthening this is equivalent to
    \[
      \hastype{(\Delta)(\Gamma)}{\app{e_1}{e_2}}{(\Delta)(t')}
    \]

    It then suffices to show that there exists such a $\Delta'$.
    But we already have $\valid{\Delta}{\Psi}{\embed{\Gamma}(c_1)}$ and $\valid{\Delta}{\Psi}{\embed{\Gamma}(c_2)}$ by assumption.
    Therefore the only restriction on $\Delta'$ is that is covers the extra $\kappa$ variables in $t'$ and doesn't intersect with $\Delta$.
    These conditions are easy to meet by making arbitrary (well-formed) choices for the $\kappa$ variables in $t'$.

    \pfcase[$\checking{\Gamma}{\eletin{\tb{x}{\tau}}{e_1}{e_2}}{t}{c_1 \wedge{} (\genimp{x}{\hat{t}}{c_2})}$]

    Consider $\Delta$, $\Psi$ such that $\valid{\Delta}{\Psi}{\embed{\Gamma}(c_1\wedge{} (\genimp{x}{\hat{t}}{c_2}))}$, and $\kvars{\Gamma} \cup \kvars{t} \subseteq \domain{\Delta}$.
    We must show
    \[
      \hastype{\Delta(\Gamma)}{\eletin{\tb{x}{\tau}}{e_1}{e_2}}{\Delta(t)}
    \]
    Inverting our assumption we have that $\checking{\Gamma}{e_1}{\hat{t}}{c_1}$, $\checking{\Gamma,\fbd{x}{\hat{t}}}{e_2}{t}{c_2}$, and $\hat{t} = \fresh{\Gamma}{\tau}$.
    From our inductive hypothesis we may derive $\hastype{\Delta(\Gamma)}{e_1}{\Delta(\hat{t})}$ by \lemref{lem:fresh-wf}, \lemref{lem:fresh-erasure}, and the fact that $\valid{\Delta}{\Psi}{\embed{\Gamma}(c_1)}$.

    We derive $\hastype{\Delta(\Gamma,\fbd{x}{\hat{t}})}{e_2}{\Delta(t)}$ by applying our inductive hypothesis and \lemref{lem:gamma-genimp}.

    \pfcase[$\checking{\Gamma}{\eletin{\tb{x}{t_x}}{e_1}{e_2}}{t}{(\genimp{x}{t_x}{(c_1 \wedge c_2)})}$]

    Follows immediately from the inductive hypothesis and \lemref{lem:gamma-genimp}.

    \pfcase[$\checking{\Gamma}{\eunpack{x}{y}{e_1}{e_2}}{t}{c_1 \wedge (\genimp{x}{t_1}{(\genimp{y}{\subst{t_2}{x'}{x}}{c_2})})}$]

    Consider $\Delta$, $\Psi$ such that $\valid{\Delta}{\Psi}{\embed{\Gamma}(c_1 \wedge (\genimp{x}{t_1}{(\genimp{y}{\subst{t_2}{x'}{x}}{c_2})}))}$, and $\kvars{\Gamma} \cup \kvars{t} \subseteq \domain{\Delta}$.

    Inverting our assumption we have that $\synth{\Gamma}{e_1}{\tcoifun{x'}{t_1}{t_2}}{c_1}$ and $\checking{\Gamma,\ebd{x}{t_1},\fbd{y}{\subst{t_2}{x'}{x}}}{e_2}{t}{c_2}$.
    
    Let $\overline{\kappa} = \kvars{\tcoifun{x'}{t_1}{t_2}}$.
    Now consider an arbitrary $\Delta'$ such that $\overline{\kappa} \subseteq (\Delta',\Delta)$ and $\domain{\Delta'} \cap \domain{\Delta} = \emptyset$.
    Then if $\valid{\Delta',\Delta}{\Psi}{c_1}$ and $\valid{\Delta',\Delta}{\Psi}{\genimp{x}{\hat{t}}{c_2}}$ our inductive hypothesis and \lemref{lem:gamma-genimp} gives us $\hastype{(\Delta',\Delta)(\Gamma)}{e_1}{(\Delta',\Delta)(\tcoifun{x'}{t_1}{t_2})}$ and $\hastype{(\Delta',\Delta)(\Gamma,\ebd{x}{t_x},\fbd{y}{\subst{t_2}{x'}{x}})}{e_2}{(\Delta',\Delta)(t)}$.
    From this we can derive
    \[
      \hastype{(\Delta,\Delta')(\Gamma)}{\eunpack{x}{y}{e_1}{e_2}}{(\Delta,\Delta')(t)}
    \]
    but by strengthening this is equivalent to
    \[
      \hastype{(\Delta)(\Gamma)}{\eunpack{x}{y}{e_1}{e_2}}{(\Delta,\Delta')(t)}
    \]

    It then suffices to show that there exists such a $\Delta'$.
    But we already have $\valid{\Delta}{\Psi}{\embed{\Gamma}(c_1)}$ and $\valid{\Delta}{\Psi}{\embed{\Gamma,\ebd{x}{t_x},\fbd{y}{\subst{t_2}{x'}{x}}}{\genimp{x}{\hat{t}}{c_2}}}$ by assumption and \lemref{lem:gamma-genimp}.
    Therefore the only restriction on $\Delta'$ is that is covers the extra $\kappa$ variables in $\tcoifun{x'}{t_1}{t_2}$ and doesn't intersect with $\Delta$.
    These conditions are easy to meet by making arbitrary (well-formed) choices for the $\kappa$ variables in $\tcoifun{x'}{t_1}{t_2}$.

  \end{itemize}

\item The application synthesis judgments.
  \begin{itemize}
    \pfcase[$\spinesynth{\Gamma}{\trfun{x}{t_x}{t}}{y}{\subst{t}{x}{y}}{c}$]

    Follows immediately by the inductive hypothesis and a trivial subtyping judgment ($\isSubType{\Delta(\Gamma)}{\Delta({\subst{t}{x}{y}})}{\Delta({\subst{t}{x}{y}})}$).

    \pfcase[$\spinesynth{\Gamma}{\trfun{x}{t_x}{t}}{e}{\hat{t}}{c_1 \wedge c_2 \wedge (\genimp{y}{t_e}{c_3})}$]

    Follows immediately by repeated uses of the inductive hypothesis and \lemref{lem:fresh-wf}.

    \pfcase[$\spinesynth{\Gamma}{\trifun{x}{t_x}{t}}{e}{\hat{t}}{\genexists{z}{t_x}{(c \wedge c')}}$]

    By well-formedness of $\trifun{x}{t_x}{t}$, there are some $v, r, b$ such that $t_x = \reftpv{v}{b}{r}$ and therefore $\genexists{z}{t_x}{(c \wedge c')} = \exists z:b. (\subst{r}{v}{z} \wedge c \wedge c')$.

    Consider $\Delta$, $\Psi$ such that $\valid{\Delta}{\Psi}{\embed{\Gamma}(\subst{r}{v}{z} \wedge c \wedge c')}$, and $\kvars{\Gamma} \cup \kvars{t} \cup \kvars{\hat{t}} \subseteq \domain{\Delta}$.
    We must show
    \[
      \appjudge{\Delta(\Gamma)}{\Delta(\trifun{x}{t_x}{t})}{e}{\Delta(\hat{t})}
    \]
    $\Psi$ must contain some $\fobind{e'}{z}$ such that $\valid{\Delta}{\Psi}{\subst{(\subst{r}{v}{z} \wedge c \wedge c')}{z}{e'}}$.
    We take this $e'$ as our implicit instantiation.
    Then by our inductive hypothesis we get
    \[
      \inferrule
      {\appjudge{\Delta(\Gamma)}{\Delta(\subst{t}{x}{e'})}{e}{\Delta(t')}
      \\ \isSubType{\Delta(\Gamma)}{\Delta(t')}{\Delta(\hat{t})}
      \\ \hastype{\forgetimplicits{\Delta(\Gamma)}}{e}{t_x}}
      {\appjudge{\Delta(\Gamma)}{\Delta(\trifun{x}{b}{t})}{e}{\Delta(\hat{t})}}
    \]
  \end{itemize}

\item The application checking judgments.
  \begin{itemize}
    \pfcase[$\spinechecking{\Gamma}{\trfun{x}{t_x}{t}}{e}{t'}{c \wedge c'}$]

    Follows immediately from the inductive hypotheses.

    \pfcase[$\spinechecking{\Gamma}{\trfun{x}{t_x}{t}}{e}{\hat{t}}{c_1 \wedge c_2 \wedge (\genimp{y}{t_e}{c_3})}$]

    Follows immediately by repeated uses of the inductive hypothesis.

    \pfcase[$\spinechecking{\Gamma}{\trifun{x}{t_x}{t}}{e}{t'}{\genexists{z}{t_x}{c}}$]

    By well-formedness of $\trifun{x}{t_x}{t}$, there are some $v, r, b$ such that $t_x = \reftpv{v}{b}{r}$ and therefore $\genexists{z}{t_x}{c} = \exists z:b. (\subst{r}{v}{z} \wedge c)$.

    Consider $\Delta$, $\Psi$ such that $\valid{\Delta}{\Psi}{\embed{\Gamma}(\subst{r}{v}{z} \wedge c)}$, and $\kvars{\Gamma} \cup \kvars{t} \cup \kvars{\hat{t}} \subseteq \domain{\Delta}$.
    We must show
    \[
      \appjudge{\Delta(\Gamma)}{\Delta(\trifun{x}{t_x}{t})}{e}{\Delta(t')}
    \]
    $\Psi$ must contain some $\fobind{e'}{z}$ such that $\valid{\Delta}{\Psi}{\subst{(\subst{r}{v}{z} \wedge c)}{z}{e'}}$.
    We take this $e'$ as our implicit instantiation.
    Then by our inductive hypothesis we get
    \[
      \inferrule
      {\appjudge{\Delta(\Gamma)}{\Delta(\subst{t}{x}{e'})}{e}{\Delta(t')}
      \\ \isSubType{\Delta(\Gamma)}{\Delta(t')}{\Delta(t')}
      \\ \hastype{\forgetimplicits{\Delta(\Gamma)}}{e}{t_x}}
      {\appjudge{\Delta(\Gamma)}{\Delta(\trifun{x}{t_x}{t})}{e}{\Delta(t')}}
    \]
  \end{itemize}

\item The subtyping judgments.
  \begin{itemize}
    \pfcase[$\subtyping{\Gamma}{\reftpv{x}{\tau}{r}}{\reftpv{y}{\tau}{r'}}{\forall x:\tau. \embed{r} \Rightarrow \embed{\subst{r'}{y}{x}}}$]

    Immediate.

    \pfcase[$\subtyping{\Gamma}{\tpoly{\alpha}{t}}{\tpoly{\beta}{t'}}{c}$]

    Immediate by inductive hypotheses.

    \pfcase[$\subtyping{\Gamma}{\trfun{x_1}{t_1}{t_1'}}{\trfun{x_2}{t_2}{t_2'}}{c \wedge {} (\genimp{x_2}{t_2}{c'})}$]

    Immediate by inductive hypotheses and \lemref{lem:gamma-genimp}.

    \pfcase[$\subtyping{\Gamma}{\trifun{x}{t_1}{t_1'}}{t_2}{\genexists{z}{t_1}{c}}$]

    By well-formedness of $\trifun{x}{t_1}{t_1'}$, there are some $v, r, b$ such that $t_1 = \reftpv{v}{b}{r}$ and therefore $\genexists{z}{t_1}{c} = \exists z:b. (\subst{r}{v}{z} \wedge c)$.

    Consider $\Delta$, $\Psi$ such that $\valid{\Delta}{\Psi}{\embed{\Gamma}(\exists z:b. (\subst{r}{v}{z} \wedge c))}$, and $\kvars{\Gamma} \cup \kvars{t_1} \cup \kvars{t_1'} \cup \kvars{t_2} \subseteq \domain{\Delta}$.
    We must show
    \[
      \isSubType{\Delta(\Gamma)}{\Delta(\trifun{x}{t_1}{t_1'})}{\Delta(t_2)}
    \]
    $\Psi$ must contain some $\fobind{e}{z}$ such that $\valid{\Delta}{\Psi}{\subst{(\subst{r}{v}{z} \wedge c)}{z}{e}}$.
    We take this $e$ as our implicit instantiation.
    Then by our inductive hypothesis we can derive
    \[
      \inferrule
      {\isSubType{\Delta(\Gamma)}{\Delta(\subst{t_1'}{x}{e})}{\Delta(t_2)}
      \\ \hastype{\forgetimplicits{\Delta(\Gamma)}}{e}{t_1}}
      {\isSubType{\Delta(\Gamma)}{\Delta(\trifun{x}{t_1}{t_1'})}{\Delta(t_2)}}
    \]

    \pfcase[$\subtyping{\Gamma}{t_1}{\trifun{x}{t_2}{t_2'}}{\genimp{x}{t_2}{c}}$]

    Immediate by our inductive hypothesis and \lemref{lem:gamma-genimp}.

    \pfcase[$\subtyping{\Gamma}{t_1}{\tcoifun{x}{t_2}{t_2'}}{\genexists{z}{t_2}{c}}$]

    By well-formedness of $\tcoifun{x}{t_2}{t_2'}$, there are some $v, r, b$ such that $t_2 = \reftpv{v}{b}{r}$ and therefore $\genexists{z}{t_2}{c} = \exists z:b. (\subst{r}{v}{z} \wedge c)$.

    Consider $\Delta$, $\Psi$ such that $\valid{\Delta}{\Psi}{\embed{\Gamma}(\exists z:b. (\subst{r}{v}{z} \wedge c))}$, and $\kvars{\Gamma} \cup \kvars{t_1} \cup \kvars{t_2} \cup \kvars{t_2'} \subseteq \domain{\Delta}$.
    We must show
    \[
      \isSubType{\Delta(\Gamma)}{\Delta(t_1)}{\Delta(\trifun{x}{t_2}{t_2'})}
    \]
    $\Psi$ must contain some $\fobind{e}{z}$ such that $\valid{\Delta}{\Psi}{\subst{(\subst{r}{v}{z} \wedge c)}{z}{e}}$.
    We take this $e$ as our implicit instantiation.
    Then by our inductive hypothesis we can derive
    \[
      \inferrule
      {\isSubType{\Delta(\Gamma)}{\Delta(t_1)}{\Delta(\subst{t_2'}{x_1}{e})}
      \\ \hastype{\forgetimplicits{\Delta(\Gamma)}}{e}{t_2}}
    {\isSubType{\Delta(\Gamma)}{\Delta(t_1)}{\Delta(\trifun{x}{t_2}{t_2'})}}
    \]

    \pfcase[$\subtyping{\Gamma}{\trifun{x}{t_1}{t_1'}}{t_2}{\genimp{x}{t_1}{c}}$]

    Immediate by our inductive hypothesis and \lemref{lem:gamma-genimp}.
  \end{itemize}
\end{enumerate}
\end{proof}

\thmref{thm:separable}
  c is separable if there are no 
  cyclic $\kappa$s under existential binders.
\begin{proof}
  If there does not exist such a $\kappa$, then we can simply take $c_1 = \noside{c}$ and $c_2=\sides{c}$. $c_1$ is \nnf{} by the structure of $\nosidesym$ and $c_2$ is acyclic because there doesn't exist a cyclic $\kappa$ under an  $\exists$
\end{proof}

\begin{theorem}\label{thm:pokethm}
  $\Delta \vDash c$ iff $\Delta' \vDash \poke{\vars}{c}$
\end{theorem}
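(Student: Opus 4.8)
The plan is to prove the statement by structural induction on the \ehc{} $c$, keeping both the in-scope variable set $\vars$ and the original second-order assignment $\Delta$ general so that the induction hypothesis applies at the nested subconstraints produced by $\pokesym$. Reading $\Delta \vDash c$ as ``$\valid{\Delta}{\Psi}{c}$ for some $\Psi$'', the content of the theorem is that $\pokesym$ leaves the $\kappa$-structure of $c$ untouched and merely trades each existential binder for a fresh Skolem predicate $\pi_n$ together with an inhabitation side condition; hence $\Delta'$ need only extend $\Delta$ with solutions for these fresh $\pi_n$. Concretely, I would prove the strengthened invariant: for every $\vars$ and every $\Delta$ assigning the $\kappa$-variables of $c$, there exists $\Psi$ with $\valid{\Delta}{\Psi}{c}$ if and only if there exist an extension $\Delta'$ of $\Delta$ assigning the fresh $\pi_n$ and some $\Psi'$ with $\valid{\Delta'}{\Psi'}{\poke{\vars}{c}}$.

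The routine cases come first. For $c = p$ a proposition, $\poke{\vars}{p} = p$ and there is nothing to do. For $c = \forall x{:}\tau.\, p \Rightarrow c'$, the definition of $\pokesym$ gives $\poke{\vars}{c} = \forall x{:}\tau.\, p \Rightarrow \poke{\{x,\vars\}}{c'}$, so the guard and binder are preserved and the claim follows directly from the induction hypothesis applied to $c'$ under the enlarged scope $\{x,\vars\}$. For $c = c_1 \wedge c_2$, the fresh $\pi$-variables introduced on the two sides are disjoint, so I would apply the induction hypothesis to each conjunct and take the union of the resulting $\pi$-assignments and existential witnesses; since $\Delta$ is shared and the two extensions have disjoint domains, they compose into a single $\Delta'$ and $\Psi'$.

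The crux is the existential case $c = \exists n{:}\tau.\, c'$, where $\poke{\vars}{c} = (\forall n.\, \pi_n(n,\vars) \Rightarrow \poke{\{\vars,n\}}{c'}) \wedge (\exists n.\, \pi_n(n,\vars))$. For the forward direction, a witnessing $\Psi$ supplies a term $r$ with $\valid{\Delta}{\Psi}{\subst{c'}{n}{r}}$; I would solve the Skolem predicate by its graph, setting $\pi_n \mapsto \lambda n,\vars.\, n = r$, so that the inhabitation conjunct $\exists n.\, \pi_n(n,\vars)$ is met by taking $n = r$ in $\Psi'$, while the guarded universal collapses, upon instantiating $n := r$ under the equality guard, to $\subst{(\poke{\{\vars,n\}}{c'})}{n}{r}$, which the induction hypothesis yields from $\subst{c'}{n}{r}$. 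For the backward direction, the inhabitation conjunct furnishes through $\Psi'$ a term $r$ with $\pi_n(r,\vars)$; instantiating the guarded universal at $n := r$ discharges the guard and leaves $\subst{(\poke{\{\vars,n\}}{c'})}{n}{r}$ valid, whence the induction hypothesis recovers a witness for $\exists n.\, c'$.

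The step I expect to be the main obstacle is justifying the equation $\subst{(\poke{\{\vars,n\}}{c'})}{n}{r} = \poke{\vars}{\subst{c'}{n}{r}}$ used implicitly in both directions above, i.e. that substituting the witness commutes with $\pokesym$ so that the induction hypothesis is actually applicable. This relies on $r$ being a quantifier-free background term, so that $\subst{c'}{n}{r}$ has exactly the same existential/universal skeleton as $c'$ and $\pokesym$ introduces the same fresh $\pi$-variables in both; the only difference is that $n$ is replaced by $r$ inside each nested $\pi$'s argument list, which is precisely what the substitution on the left produces. I would isolate this as a short commutation lemma and also verify the accompanying bookkeeping---that the scope argument $\vars$ threaded by $\pokesym$ correctly records the dependence of each witness on the enclosing universals and outer existentials, so that the graph solution $\lambda n,\vars.\, n = r$ is well-scoped even when $r$ mentions variables of $\vars$.
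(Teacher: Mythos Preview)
Your approach differs from the paper's in its decomposition: you induct on the structure of $c$, while the paper inducts on the combined assignment $\Delta;\Psi$, peeling off one binding at a time and appealing to two lemmas---that $\pokesym$ commutes with $\kappa$-substitution (Lemma~\ref{alpha}), and that under $\pokesym$ the existential substitution $\{r/n\}$ corresponds to the second-order substitution $[\lambda n.\, n{=}r\,/\,\pi_n]$ (Lemma~\ref{beta})---each proved by its own induction on the constraint. Both routes rest on the same graph solution $\pi_n \mapsto \lambda n,\overline{y}.\, n = r$, so the core idea is shared; the paper's organization just keeps the two kinds of substitution cleanly separated.

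There is, however, a genuine gap in your commutation step. The equation $\subst{(\poke{\{\vars,n\}}{c'})}{n}{r} = \poke{\vars}{\subst{c'}{n}{r}}$ fails as a syntactic identity: on the left each nested Skolem predicate $\pi_m$ is created with argument list $(m,\vars,n)$, which becomes $(m,\vars,r)$ after the substitution, whereas on the right $\pi_m$ is created with arguments $(m,\vars)$ only. The two sides do not even agree on the \emph{arity} of the fresh $\pi$'s, so your justification (``$n$ is replaced by $r$ inside each nested $\pi$'s argument list'') does not describe what the right-hand side actually produces. What does hold is a semantic equisatisfiability---absorb the extra argument $r$ into the body of each $\pi_m$-solution to pass between the two---and that is exactly the content of the paper's Lemma~\ref{beta}. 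Because this commutation must itself be argued by induction on $c'$, your single structural induction on $c$ does not directly supply an induction hypothesis for $\subst{c'}{n}{r}$; you will need either to switch to induction on size, or to factor the commutation out as a standalone lemma, as the paper does.
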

\begin{proof}
For brevity and clarity, here we write our two (parallel) contexts $\Delta;\Gamma$ as a combined context $\Delta$.
  We proceed by induction on $\Delta$ in the forwards direction and $\Delta'$ in the reverse direction.
  \begin{itemize}
    \item Base case: $\bullet \vDash H$, iff $\poke{\vars}{H} = H$, iff $\bullet \vDash \poke{\vars}{H}$
    \item Iff $\Delta, R/\kappa \vDash H$,
      \begin{align*}
        \Delta & \vDash c[R/\kappa]\\
        \Delta' & \vDash \poke{\vars}{H[R/\kappa]} & \text{by IH}\\
        \Delta' & \vDash \poke{\vars}{H} [R/\kappa] & \text{by Lemma \ref{alpha}}\\
        \Delta' , R/\kappa&\vDash \poke{\vars}{H}
      \end{align*}
    \item Iff $\Delta, v/n \vDash H$,
      \begin{align*}
        \Delta & \vDash c\{v/n\}\\
        \Delta' & \vDash \poke{\vars}{H\{v/n\}} & \text{by IH}\\
        \Delta' & \vDash \poke{\vars}{H} [\lambda n . v = n/\pvar n] & \text{by Lemma \ref{beta}}\\
        \Delta' , \lambda n . v = n/\pvar n&\vDash \poke{\vars}{H}
      \end{align*}
\end{itemize}
\end{proof}

\begin{lemma} \label{alpha}
  $\poke{\vars}{H[R/\kappa]} = \poke{\vars}{H} [R/\kappa]$
\end{lemma}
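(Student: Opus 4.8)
The plan is to prove the identity by structural induction on the constraint $H$, following exactly the recursive clauses that define $\pokesym$. The guiding intuition is that $\pokesym$ only rewrites the \emph{logical skeleton} of $H$ --- the quantifier/connective structure together with each existential binder, which it replaces by a Skolem predicate and an inhabitation side condition --- while leaving every predicate \emph{atom} (both theory predicates $r$ and applications $\kappa(\overline{y})$) syntactically untouched. Dually, the substitution $[R/\kappa]$ acts \emph{only} on atoms of the form $\kappa(\overline{y})$. Since the two transformations operate on essentially disjoint pieces of the syntax tree, they commute; the induction just has to verify this clause by clause.

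First I would dispatch the base and structural cases. For $H = p$: if $p$ is a theory predicate $r$, then $\poke{\vars}{r} = r$ and $r[R/\kappa] = r$, so both sides are $r$; if $p = \kappa'(\overline{y})$, note that $\pokesym$ is the identity on atoms, so the claim collapses to $(\kappa'(\overline{y}))[R/\kappa] = (\kappa'(\overline{y}))[R/\kappa]$, which holds trivially whether or not $\kappa' = \kappa$. The conjunction case is immediate because both $\pokesym$ and $[R/\kappa]$ distribute over $\wedge$, so the inductive hypotheses on $c_1$ and $c_2$ close it. For $H = \forall x:\tau.\,p \Rightarrow c$, both operations descend into the guard $p$ and the body $c$ in the same manner --- $\pokesym$ recurses on $c$ with the enlarged tracking set $\{x,\vars\}$, and $[R/\kappa]$ rewrites both $p$ and $c$ --- so the result follows from the inductive hypothesis applied at the extended variable set.

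The interesting case, and the one I expect to be the main obstacle, is the existential $H = \exists n:\tau.\,c$. Here $\pokesym$ produces $(\forall n:\tau.\,\pi_n(n,\vars) \Rightarrow \poke{\{\vars,n\}}{c}) \wedge \exists n.\,\pi_n(n,\vars)$, introducing the fresh Skolem predicate $\pi_n$. Pushing $[R/\kappa]$ through this right-hand side leaves the newly created guard occurrence $\pi_n(n,\vars)$ and the side condition $\exists n.\,\pi_n(n,\vars)$ unchanged, and sends the body $\poke{\{\vars,n\}}{c}$ to $(\poke{\{\vars,n\}}{c})[R/\kappa]$, which by the inductive hypothesis equals $\poke{\{\vars,n\}}{c[R/\kappa]}$; the left-hand side applies $\pokesym$ directly to $\exists n:\tau.\,c[R/\kappa]$ and produces exactly this same formula, so the two sides coincide. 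The one point that requires care --- the crux of the argument --- is that $[R/\kappa]$ must not touch the inserted atoms, i.e.\ $\pi_n(n,\vars)[R/\kappa] = \pi_n(n,\vars)$. This is precisely where I would invoke the freshness invariant: the Skolem predicates $\h{\pi_n}$ minted by $\pokesym$ for each existential binder are distinct from the original predicate variables of the \ehc{}, and $\kappa$ is one of the latter, so $\pi_n \neq \kappa$. I would state this naming/freshness convention explicitly (it is the same distinction between $\h{\pi_n}$ and $\kappa$ drawn immediately after the definition of $\pokesym$), after which the existential case --- and hence the whole induction --- goes through by routine bookkeeping.
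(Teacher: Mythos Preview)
Your proposal is correct and follows the same structural-induction approach as the paper. Your handling of the existential case is in fact more careful than the paper's printed proof, which in that case does not expand $\pokesym$ to its Skolemizing definition; you correctly do so and explicitly invoke the freshness of the newly minted $\pi_n$ to justify why it is untouched by $[R/\kappa]$.
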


\begin{proof}
  We proceed by induction on $H$:
  \begin{itemize}
    \item $e$ \begin{align*}\poke{\vars}{e}[R/\kappa] \\&= e \\&= e [R/\kappa] \\&= \poke{\vars}{e[R/\kappa]}\end{align*}
    \item $H_1\wedge H_2$ \begin{align*}\poke{\vars}{ H_1\wedge H_2 }[R/\kappa] \\&= \poke{\vars}{H_1}[R/\kappa]\wedge\poke{\vars}{H_2}[R/\kappa]\\&= \poke{\vars}{H_1[R/\kappa]}\wedge\poke{\vars}{H_2[R/\kappa]}\\&=\poke{\vars}{H_1[R/\kappa]\wedge H_2[R/\kappa]} \\&= \poke{\vars}{(H_1\wedge H_2)[R/\kappa]}\end{align*}
    \item $\forall x:t.H$ \begin{align*}\poke{\vars}{\forall x:t.H}[R/\kappa]\\&=\forall x:t.\poke{\vars}{H}[R/\kappa]\\&=\forall x:t.\poke{\vars}{H[R/\kappa]}\\&=\poke{\vars}{(\forall x:t.H)[R/\kappa]}\end{align*}
    \item $\exists x:t.H$ \begin{align*}\poke{\vars}{\exists x:t.H}[R/\kappa]\\&=\exists x:t.\poke{\vars}{H}[R/\kappa]\\&=\exists x:t.\poke{\vars}{H[R/\kappa]}\\&=\poke{\vars}{(\exists x:t.H)[R/\kappa]}\end{align*}
    \item $\kappa'(v),\kappa'\not=\kappa$ \begin{align*}\poke{\vars}{\kappa'(v)}[R/\kappa] \\&= \kappa'(v) \\&= \poke{\vars}{\kappa'(v)[R/\kappa]}\end{align*}
    \item $\kappa(v)$ \begin{align*}\poke{\vars}{\kappa(v)}[R/\kappa] \\&= \kappa(v)[R/\kappa] \\&= R(v) \\&= \kappa(v)[R/\kappa] \\&= \poke{\vars}{\kappa(v)[R/\kappa]}\end{align*}
  \end{itemize}
\end{proof}

\begin{lemma} \label{beta}
  $\Delta; \Gamma \vDash \poke{\vars}{H\{v/n\}}$ iff
  $\Delta; \Gamma\vDash \poke{\vars}{H} [\lambda n . v = n/\pvar{n}]$
\end{lemma}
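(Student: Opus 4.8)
The plan is to prove a statement slightly stronger than the one asked for, from which the validity biconditional falls out. Since $\vDash$ closes existentially over the remaining predicate-variable and first-order solutions, I would show that, as constraints, $\poke{\vars}{\fosubst{H}{n}{v}}$ and $\poke{\vars}{H}[\lambda n.\,v = n/\pvar{n}]$ are logically equivalent (under any assignment to the predicate variables other than $\pvar{n}$), whence both sides admit the same completions and the $\Delta;\Gamma \vDash$ biconditional follows. The proof is by induction on the structure of $H$, exactly as in Lemma~\ref{alpha}, but parametrized over the accumulated variable set $\vars$ so that the induction hypothesis is available at the larger sets that $\pokesym$ threads under binders.

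All cases except the target binder are routine and mirror Lemma~\ref{alpha}. For a leaf proposition $p$ --- which contains neither the binder $\exists n$ nor the freshly introduced $\pvar{n}$ --- the existential substitution and the second-order substitution both act as the identity, so the two sides are literally equal. For $H_1 \wedge H_2$, for $\forall x:\tau.\,p \Rightarrow H'$, and for $\exists y:\tau.\,H'$ with $y \ne n$ (such nodes lie strictly above the $\exists n$ binder, so $n$ is not yet in scope and $\pvar{n}$ is untouched), $\pokesym$, the existential substitution, and the $\pvar{n}$-substitution each commute with the connective, and the claim follows from the induction hypothesis applied to the immediate subconstraints at the appropriately extended $\vars$.

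The heart of the argument is the case $H = \exists n:\tau.\,c$. On the left the existential substitution deletes the binder, giving $\poke{\vars}{\subst{c}{n}{v}}$. On the right, $\poke{\vars}{\exists n:\tau.\,c} = (\forall n:\tau.\,\pvar{n}(n,\vars) \Rightarrow \poke{\vars \cup \{n\}}{c}) \wedge (\exists n.\,\pvar{n}(n,\vars))$, and applying $[\lambda n.\,v = n/\pvar{n}]$ rewrites every guard $\pvar{n}(n,\vars)$ to the equation $v = n$. I would then invoke two elementary theory facts: $\forall n:\tau.\,(v = n) \Rightarrow \phi$ is equivalent to $\subst{\phi}{n}{v}$, and the side condition $\exists n.\,(v = n)$ is valid. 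These collapse the right-hand formula to $\subst{(\poke{\vars \cup \{n\}}{c}[\lambda n.\,v = n/\pvar{n}])}{n}{v}$, so the goal reduces to equating this with $\poke{\vars}{\subst{c}{n}{v}}$.

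The hard part will be precisely this reduction, because the two expressions poke with different variable sets: the left pokes the already-substituted body $\subst{c}{n}{v}$ under $\vars$, whereas the right pokes the original body $c$ under $\vars \cup \{n\}$ and substitutes $v$ for $n$ only afterward. The mismatch shows up at any existential $\exists m$ nested inside $c$: on the right its skolem predicate is emitted as $\pvar{m}(m,\vars,n,\dots)$ and becomes $\pvar{m}(m,\vars,v,\dots)$ after the substitution, carrying an extra constant slot that the left lacks. I would dispatch this with a separate ``poke commutes with value substitution'' auxiliary lemma, proved by an induction on $c$ of the same shape as Lemma~\ref{alpha}, showing that $\poke{\vars}{\subst{c}{n}{v}}$ and $\subst{(\poke{\vars \cup \{n\}}{c})}{n}{v}$ coincide once the extra constant argument on each $\pvar{m}$ is projected away; since that slot is always the fixed value $v$, any solution to one family of skolem predicates transports to the other by pre-composing with that projection, so validity is preserved in both directions. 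This argument-list bookkeeping is the only genuinely delicate point; everything else is the mechanical induction above, and chaining the equivalences yields Lemma~\ref{beta}.
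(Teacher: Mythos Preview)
Your approach is essentially the same as the paper's: a structural induction on $H$ with the $\exists n$ case as the crux, collapsing $\forall n.\,(v=n)\Rightarrow\phi$ to $\subst{\phi}{n}{v}$ and discharging the side condition as trivially inhabited. In fact you are more careful than the paper, which silently writes $\poke{\vars}{H}$ rather than $\poke{\vars\cup\{n\}}{H}$ in that case and omits the side-condition conjunct; the auxiliary ``poke commutes with value substitution'' lemma you propose is exactly the bookkeeping needed to close that gap, and it is routine.
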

\begin{proof}
  We proceed by induction on $H$:
  \begin{itemize}
    \item $e$ \begin{align*}\poke{\vars}{e} [\lambda n . v = n/\pvar n] \\&\iff e \\&\iff \poke{\vars}{e\{v/n\}}\end{align*}
    \item $H_1\wedge H_2$ \begin{align*}\poke{\vars}{ H_1\wedge H_2 }[\lambda n . v = n/\pvar n]
\\&\iff \poke{\vars}{H_1}[\lambda n . v = n/\pvar n]\wedge\poke{\vars}{H_2}[\lambda n . v = n/\pvar n]
\\&\iff \poke{\vars}{H_1\{v/n\}}\wedge\poke{\vars}{H_2\{v/n\}}
\\&\iff\poke{\vars}{H_1\{v/n\}\wedge H_2\{v/n\}} 
\\&\iff \poke{\vars}{(H_1\wedge H_2)\{v/n\}}\end{align*}
    \item $\forall x:t.H$ \begin{align*}\poke{\vars}{\forall x:t.H}[\lambda n . v = n/\pvar n]
\\&\iff\forall x:t.\poke{\vars}{H}[\lambda n . v = n/\pvar n]
\\&\iff\forall x:t.\poke{\vars}{H\{v/n\}}
\\&\iff\poke{\vars}{(\forall x:t.H)\{v/n\}}\end{align*}
    \item $\exists n:t.H$ \begin{align*}\poke{\vars}{\exists n:t.H}[\lambda n . v = n/\pvar n]
\\&\iff(\forall n:t.\pvaro{x}{x}\Rightarrow\poke{\vars}{H})[\lambda n . v = n/\pvar n]
\\&\iff\forall n:t. \left( v = n \right) \Rightarrow(\poke{\vars}{H}[\lambda n . v = n/\pvar n])
\\&\iff\forall n:t. \left( v = n \right) \Rightarrow\poke{\vars}{H\{v/n\}}
\\&\iff \poke{\vars}{H\{v/n\}}
\\&\iff\poke{\vars}{(\exists n:t.H)\{v/n\}}\end{align*}
      \item $\exists x:t.H$ \begin{align*}\poke{\vars}{\exists x:t.H}[\lambda n . v = n/\pvar n]
\\&\iff\forall x:t.\pvaro{x}{x}\Rightarrow\poke{\vars}{H}[\lambda n . v = n/\pvar n]
\\&\iff\forall x:t.\pvaro{x}{x}\Rightarrow\poke{\vars}{H\{v/n\}}
\\&\iff\poke{\vars}{(\exists x:t.H)\{v/n\}}\end{align*}
    \item $\kappa(v)$ \begin{align*}\poke{\vars}{\kappa(v)}\{v/n\} \\&\iff \kappa(v) \\&\iff \poke{\vars}{\kappa(v)\{v/n\}}\end{align*}
  \end{itemize}
\end{proof}

\begin{theorem}
  for $\overline{\kappa}$ acyclic in c \textsc{nnf}
  \( \valid{\Delta}{\bullet}{c} \text{ iff } \valid{\Delta, {\solk{c}{\kappa}}/{\kappa}}{\bullet}{c} \)
  \label{strongsol}
\end{theorem}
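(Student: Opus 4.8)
The plan is to prove this as an \emph{equisatisfiability} result, following the strongest-solution (least fixpoint) argument of Cosman and Jhala~\cite{cosman_local_2017}: substituting $\solk{\kappa}{c}$ for an acyclic $\kappa$ preserves validity in both directions, which is exactly the weakening-and-strengthening criterion that \secref{sec:solving} requires of Step~2. The $(\Leftarrow)$ direction is immediate, since $\valid{\Delta, \solk{\kappa}{c}/\kappa}{\bullet}{c}$ already exhibits a concrete second-order witness for $\kappa$, discharging any obligation on it under $\Delta$. All of the work lies in the $(\Rightarrow)$ direction: I must show that whenever \emph{some} assignment validates $c$ under $\Delta$, the canonical choice $\solk{\kappa}{c}$ does as well. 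I would carry this out by mutual induction on the \textsc{nnf} structure of $c$ (with $\Delta$ held fixed), in the same style as \thmref{thm:pokethm} and \lem\hspace{-0.15em}ref equivalents.

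The first step is a structural lemma characterizing $\solk{\kappa}{c}$ as the least solution of the \emph{head} constraints on $\kappa$. Unfolding the definition of $\solksym$ shows that $\solk{\kappa}{c}(\overline{x})$ is precisely the disjunction, over every head occurrence $\forall \overline{w}.\,p \Rightarrow \kappa(\overline{z})$ of $\kappa$ in $c$, of the guarded equality $\exists \overline{w}.\,p \wedge \overline{x} = \overline{z}$ (conjunctions becoming disjunctions, guards becoming existentials, and guard-position occurrences of $\kappa$ contributing $\false$). From this I derive two facts: first, $\solk{\kappa}{c}$ validates every head occurrence by construction, independently of $\Delta$; and second, $\solk{\kappa}{c} \Rightarrow R$ pointwise for any $R$ satisfying the head constraints, in particular for the assumed witness. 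The induction then discharges head occurrences directly by the first fact, while for a guard occurrence $\forall \overline{w}.\,\kappa(\overline{z}) \Rightarrow c'$ the minimality $\solk{\kappa}{c}(\overline{z}) \Rightarrow R(\overline{z})$ makes the antecedent under the strongest solution weaker, so the guard follows from its counterpart under $R$ together with the inductive hypothesis on $c'$.

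The main obstacle is that $\kappa$ occurs with \emph{both} polarities, so $c$ is not globally monotone in the interpretation of $\kappa$ and a naive monotonicity appeal will not close the induction. The argument is rescued precisely by the hypothesis that $\overline{\kappa}$ is acyclic, which I would exploit through a substitution lemma (in the spirit of \lemref{alpha} and \lemref{beta}) establishing that $\kappa \notin \kvars{\solk{\kappa}{c}}$: acyclicity guarantees no head occurrence of $\kappa$ sits beneath a guard occurrence of the same $\kappa$, so $\solk{\kappa}{c}$ is a genuine (non-recursive) solution rather than a fixpoint equation, the substitution truly eliminates $\kappa$, and the ``smaller antecedent weakens the guard'' step is never re-justified circularly. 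I expect the interaction with the ambient assignment $\Delta$ for the remaining predicate variables to be routine, since $\Delta$ is fixed and commutes with the substitution for $\kappa$ exactly as the second-order substitution commutes with \pokesym{} in \lemref{alpha}; the delicate bookkeeping is entirely in tracking the guard/head polarity along the well-founded order supplied by acyclicity.
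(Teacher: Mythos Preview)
Your proposal is correct, but it takes a substantially different route from the paper. The paper's proof of this theorem is a one-line appeal to prior work: since the constraint is an \textsc{nnf} Horn clause, the result follows directly from Theorem~5.13 of Cosman and Jhala~\cite{cosman_local_2017}, instantiating their parameter $\hat{K}$ with the predicate variables not in $\overline{\kappa}$. In other words, the paper treats this as a black-box import of the strongest-solution theorem already established for \textsc{nnf} Horn constraints.

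What you have done instead is reconstruct that theorem's proof from scratch: the trivial $(\Leftarrow)$ direction, the characterization of $\solk{\kappa}{c}$ as the least predicate satisfying all head constraints, the tautological discharge of head occurrences, the monotonicity argument at guard occurrences, and the appeal to acyclicity to ensure $\kappa \notin \kvars{\solk{\kappa}{c}}$ so that the substitution genuinely eliminates $\kappa$. This is precisely the content of the Cosman--Jhala result, so your argument is sound; it simply re-derives what the paper delegates. The trade-off is that your approach is self-contained and makes explicit why acyclicity is needed, whereas the paper's approach is more economical and keeps the present development focused on what is new (the existential layer) rather than re-proving the \textsc{nnf} solving machinery. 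One small overstatement in your write-up: the claim that $\solk{\kappa}{c}$ validates head occurrences ``independently of $\Delta$'' is not quite right, since the guards along the path to a head occurrence may mention other predicate variables interpreted by $\Delta$; the tautology holds \emph{under} any fixed $\Delta$, which is all you need.
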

\begin{proof}
  Since $\nosidesym$ case is an \nnf{} Horn clause, this follows directly from Theorem 5.13 of COsman and Jhala~\cite{cosman_local_2017} with $\hat{K}$ set to all the predicate variables not in $\overline{\kappa}$.
\end{proof}

\begin{lemma}
  \[
    \forall \Delta, \Gamma .\quad
    \Delta;\Gamma \vDash \elimsol{\overline{\kappa}}{\h c}
  \text{ iff } \Delta,\solk {c}{\overline{\kappa}}/\overline{\kappa};\Gamma\vDash \h c \]
\end{lemma}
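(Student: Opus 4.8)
The plan is to prove the biconditional by reducing it to a purely logical equivalence between the constraint produced by $\elimsolsym$ and the one produced by directly substituting the strongest solutions. Write $\rho = [\solk{c}{\overline{\kappa}}/\overline{\kappa}]$ for the strongest-solution substitution, so that $\elimsol{\overline{\kappa}}{c}$ is shorthand for $\elimsol{\rho}{c}$ (the first argument of $\elimsolsym$ is a solution map). First I would appeal to the semantics of assignments in \figref{fig:logic:complete}: peeling the bindings $\solk{c}{\overline{\kappa}}/\overline{\kappa}$ off the second-order assignment on the right (after first discharging the first-order assignment $\Gamma$, as the second-order rule requires $\bullet$ on the first-order side) shows that $\Delta,\solk{c}{\overline{\kappa}}/\overline{\kappa};\Gamma \vDash c$ holds iff $\Delta;\Gamma \vDash \applyso{\rho}{c}$; the first- and second-order substitutions commute here since they act on disjoint syntactic classes and neither captures the other's variables. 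Thus the lemma reduces to $\Delta;\Gamma \vDash \elimsol{\rho}{c}$ iff $\Delta;\Gamma \vDash \applyso{\rho}{c}$, and it suffices to prove the stronger claim that $\elimsol{\rho}{c}$ and $\applyso{\rho}{c}$ are logically equivalent, since validity under any fixed $\Delta;\Gamma$ then coincides.

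I would establish this equivalence by structural induction on $c$, comparing the defining clauses of $\elimsolsym$ against those of second-order substitution in \figref{fig:logic:complete}. The two operations agree on conjunctions, on plain ($\kappa$-free) guarded implications, and on existentials, so those cases close immediately by the inductive hypothesis. The interesting cases are the two positions a variable in $\overline{\kappa}$ can occupy. In guard position, $\applyso{\rho}{\cdot}$ yields $\forall x.\,\rho(\kappa)(\overline{y}) \Rightarrow \applyso{\rho}{c'}$ whereas $\elimsolsym$ yields $\demorgan{x:\tau}{\rho(\kappa)(\overline{y})}{\elimsol{\rho}{c'}}$. I would discharge this with an auxiliary equivalence $\demorgan{x:\tau}{q}{c''} \equiv \forall x:\tau.\,q \Rightarrow c''$, proved by induction on the disjunctive-existential shape of the strongest solution $q$ following the clauses of $\demorgansym$; it is exactly the De Morgan laws $(\bigvee_j E_j)\Rightarrow C \equiv \bigwedge_j (E_j \Rightarrow C)$ and $(\exists w.P)\Rightarrow C \equiv \forall w.(P\Rightarrow C)$. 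Combined with the inductive hypothesis on $c'$, the guard case follows.

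The crux, and the step I expect to be the main obstacle, is the head-position case: there $\elimsol{\rho}{\kappa(\overline{y})} = \true$, while $\applyso{\rho}{\kappa(\overline{y})} = \solk{\kappa}{\noside{c}}[\overline{y}/\overline{x}]$, and these are \emph{not} equivalent in isolation --- the latter is equivalent to $\true$ only under the guards accumulated on the path from the root to this occurrence. The difficulty is precisely that the equivalence is contextual, so the naive structural induction is too weak. I would resolve this by strengthening the induction hypothesis to carry the accumulated guard context $G$ (the prefix of universal quantifiers and antecedents under which the current subterm sits) and proving equivalence of the two constraints under $G$. At a head leaf reached through $G$, the definition of $\solksym$ guarantees that $\solk{\kappa}{\noside{c}}$ contains a disjunct that is exactly the existential closure of $G$'s guards conjoined with the parameter equalities $\bigwedge_i x_i = y_i$; instantiating those existentials with the ambient path variables makes the disjunct true under $G$, so $\solk{\kappa}{\noside{c}}[\overline{y}/\overline{x}]$ is valid under $G$ and hence interchangeable with $\true$. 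This is precisely the strongest-solution bookkeeping underlying Cosman and Jhala's Theorem 5.13~\cite{cosman_local_2017} (the same result invoked for \thmref{strongsol}), restricted to an acyclic $\kappa \in \overline{\kappa}$; I would either cite it directly to discharge the head case or reproduce the disjunct-selection argument inline. With guards and heads handled, the induction closes, and the reduction of the first paragraph delivers the lemma.
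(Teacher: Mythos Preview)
Your proposal is correct and follows essentially the same approach as the paper: both proofs hinge on (i) the strongest-solution property to justify replacing head occurrences of $\kappa$ by $\true$ (the paper cites \thmref{strongsol}, which is precisely Cosman and Jhala's Theorem~5.13 that you invoke), and (ii) the De~Morgan rewrite $(\exists y.\,P\wedge Q)\Rightarrow R \iff \forall y.\,P\Rightarrow(Q\Rightarrow R)$ to align $\demorgansym$ with ordinary substitution in guard position. Your treatment is more explicit about the induction structure and, in particular, about why the head case is only a \emph{contextual} equivalence requiring the accumulated guard prefix---a subtlety the paper's proof leaves implicit in its appeal to \thmref{strongsol}.
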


\begin{proof}
  After exploiting \thmref{strongsol} to replace each head instance of $\kappa$ with $\true$---the only
  potential remaining occurrences of $\kappa$ are in the guard position of a universal
  quantification: $\forall x:\tau.\applykvar{\kappa}{\overline{y}} \Rightarrow c$, which under the substitution
$\Delta,\solk {c}{\overline{\kappa}}/\overline{\kappa}$ becomes 
$\forall x:\tau.\applykvar{\solk{c}{\kappa}}{\overline{y}} \Rightarrow c$. But the existential
  quantification inside $\solksym$ with the disjunction in guard position are equivalent to
  universal quantification and conjunction in the head of the clause by deMorgan's Law:
  \begin{align*}
    \forall x.(\exists y. P\wedge Q ) \Rightarrow R&\\
    \iff &\forall x. \top \Rightarrow (\forall y. (P\wedge Q \Rightarrow R))\\
    \iff &\forall y. (P\wedge Q \Rightarrow (\forall x. \top \Rightarrow R))\\
    \iff &\forall y. P \Rightarrow \forall x. Q \Rightarrow R\\
  \end{align*}
\end{proof}

\lemref{lem:pi-no-mind}
  If $\pi$ is a predicate variable inserted by $\pokesym$, then
  $\forall x.\pi(x) \Rightarrow c$ is equisatisfiable with $\forall x.\pi(x)\Rightarrow c[\lambda \_ . \true/\pi]$
\begin{proof}
  Because $\pi$ is only ever applied to $x$, $\pi$ only appears as $\pi(x)$. Now, $\pi(x)$ may appear positively or negatively in $c$. If it appears positively, then it is trivial, because it appears as an obligation of itself. If it appears negatively, it is redundant, because we already have it in our context.
\end{proof}

\thmref{thm:elimqe-sound}
  If $c' = \poke{\emptyset}{c}$, 
     $\overline{\pi}$ is the set of all Skolem variables in $c'$, 
     $c'$ has no other predicate variables, 
     $\sigma(\pi_n) = \defconstr{\pi_n}{c'}$ for all $\pi_n \in \overline{\pi}$, and
     $\vDash \elimqe{qe}{}{\overline{\pi}}{\sigma}{c'}$ 
  then $\vDash c$.
\begin{proof}
  ${\poke{\vars}{c}}{}$ is equisatisfiable with $c$ by the theorem above, so we just need to show that
  \[\vDash \sosubst{\poke{\vars}{c}}{\pi_n}{\lambda n.qe(\solp{qe}{\{\pi_n\}}{\sigma}{\sigma(\pi_n)})} \implies \vDash \poke{\vars}{c}\]
  Now,
  \[ \exists \Delta,\Gamma.\, \Delta;\Gamma\vDash \sosubst{\poke{\vars}{c}}{\pi_n}{\lambda n.qe(\solp{qe}{\{\pi_n\}}{\sigma}{\sigma(\pi_n)})}\]
  so
  \[ \Delta,{\lambda n.qe(\solp{qe}{\{\pi_n\}}{\sigma}{\sigma(\pi_n)})}/\pi_n;\Gamma\vDash {\poke{\vars}{c}}\]
  so
  \[ \vDash {\poke{\vars}{c}}\]
\end{proof}

\thmref{thm:safe-complete}
  Let $c$ be an acyclic constraint. 
  \begin{CompactItemize}
  \item If $qe$ is a strengthening then $\safe(c)$ implies $\vDash c$.
  \item If $qe$ is a strengthening and weakening then $\safe(c)$ iff $\vDash c$.
  \end{CompactItemize}
\begin{proof}
  We proceed to prove the second claim of this theorem by induction on $\h c$.
\begin{itemize}
  \item $c = p$
    \[
      {\h c} = p = \sosubst{\elimk{\overline{\kappa}}{\poke{\vars}{c}}{}}{\pi_n}{\lambda n.qe(\solp{qe}{\{\pi_n\}}{\sigma}{\sigma(\pi_n)})}\]
   \item $c = c_1 \wedge c_2$
    \begin{align*}
      &\sosubst{\elimk{\overline{\kappa}}{\poke{\vars}{c}}{}}{\pi_n}{\lambda n.qe(\solp{qe}{\{\pi_n\}}{\sigma}{\sigma(\pi_n)})}\\
      &=
      \sosubst{\elimk{\overline{\kappa}}{\poke{\vars}{c_1}\wedge\poke{\vars}{c_2}}{}}{\pi_n}{\lambda n.qe(\solp{qe}{\{\pi_n\}}{\sigma}{\sigma(\pi_n)})}\\
      &=
      \sosubst{\elimk{\overline{\kappa}}{\poke{\vars}{c_1}}{}}{\pi_n}{\lambda n.qe(\solp{qe}{\{\pi_n\}}{\sigma}{\sigma(\pi_n)})}\\
      &\quad \wedge
      \sosubst{\elimk{\overline{\kappa}}{\poke{\vars}{c_2}}{}}{\pi_n}{\lambda n.qe(\solp{qe}{\{\pi_n\}}{\sigma}{\sigma(\pi_n)})}\\
    \end{align*}
     so the goal follows from the induction hypothesis
   \item  $c = \forall x :\h\tau . p \implies c'$
     so $\forall v . p ,$ we have \( \Delta; \Gamma, v/x \vDash c\).
     applying the induction hypothesis, we have that $\forall v. p$, $\Delta;\Gamma, v/x \vDash 
  \sosubst{\elimk{\overline{\kappa}}{\poke{\vars}{c'}}{}}{\pi_n}{\lambda n.qe(\solp{qe}{\{\pi_n\}}{\sigma}{\sigma(\pi_n)})}$
      and therefore
      $$\Delta, \Gamma \vDash \forall x : \h\tau . p \implies 
  \sosubst{\elimk{\overline{\kappa}}{\poke{\vars}{c'}}{}}{\pi_n}{\lambda n.qe(\solp{qe}{\{\pi_n\}}{\sigma}{\sigma(\pi_n)})}$$
    \item $\exists n . c'$
      We know that $\poke{\vars}{c'} = \noside{\poke{\vars}{c'}}\wedge\sides{\poke{\vars}{c'}}$, so we can prove the \nosidesym{} branch (which is a tautology, since \solpsym{} here returns $c'$)
      \[
  \sosubst{\elimk{\overline{\kappa}}{\poke{\vars}{c'}}{}}{\pi_n}{\lambda n.qe(\solp{qe}{\{\pi_n\}}{\sigma}{\sigma(\pi_n)})}
         \forall n . (c' \implies c')
\]
      and then the \sidesym{} branch:
      \[
  \sosubst{\elimk{\overline{\kappa}}{\poke{\vars}{c'}}{}}{\pi_n}{\lambda n.qe(\solp{qe}{\{\pi_n\}}{\sigma}{\sigma(\pi_n)})}
       = \exists n . c'  = c\]
\end{itemize}

If $qe$ is simply a strengthening, then the forward direction of this proof still holds, so the first claim of the theorem holds as well.
\end{proof}

\section{Higher-Order Binders}
Although we enable automatic higher-order reasoning without quantifying over higher-order terms, we would still like to be able to operate in conjunction with methods that do encourage and use undecidable explicit reasoning about higher-order functions. In this appendix, we breifly describe how our system operates in this regime. We allow arbitary functions to be represented using an G\"odel Encoding as described below, and we make use of lambdas for metalinguistic convenience to describe the semantics of predicate variables.

We do this by adding to our logic an explicit application form \happ{\h e}{\h e}{\h\tau},

\[
\emphbf{Expressions}       \qquad \h \expr  ::=    \h\vconst
                      \spmid \h\evar
                      \spmid \app{\h\vconst}{\overline{\h{\expr}}}
                      \spmid \h{\app{\text{app}_\utype}{\app \expr\, \expr}}
                      \]
And the following rules to the semantics of our refinement logic: We give a denotation for appplying well-typed terms:
  $$\inference{\s{e_1}:\s\tau\rightarrow \tau'}{\embed{\app{\s{e_1}}{\s{e_2}}} = \app{\h{\text{app}_{\h{}\embed{\s{\tau\rightarrow\tau'}}}}}{\app{\lceil \s{e_1}\rceil}{\embed{\s{e_2}}}}}$$
And the following rule governing well-formedness of these application forms:
\[
\inference
{\lhastype{\Gamma}{\h {e_1}}{\h\tint}
&&  \lhastype{\Gamma}{\h {e_2}}{\h {\tau_x}}
&& \h{\tau_x} \not= \h\tprop
   }
   {\lhastype{\Gamma}{\h{\text{app}}_{\tfun{\h{\utype_x}}{\h\utype}}{\app{\h {e_1}}{\h {e_2}}}}{\h \tau}}
   [T-APP]
\]

\paragraph{G\"odel Encodings} We represent values of function types using their G\"odel encodings. This allows us to represent higher-order functions within our first-order logic. This effectively amounts to numbering each function that is created, but formulating this in terms of G\"odel encodings allows us to keep track of functions locally while separating their use and definition in the constraint language, which has no knowledge of the encoding.

\begin{definition}
  The quotation operator $\lceil - \rceil_{\h\tau}$ assigns an expression an
  G\"odel numbering at that expression's \textit{unrefined} type $\h\tau$. It's
  a bijection between that type and the integers (or the integers modulo the
  cardinality of the type, $|\tau|$, for finite types), with inverse $\lfloor - \rfloor_{\h\tau}$
\end{definition}

This way, we can quantify over higher-order functions, but for our purposes, the meaning of
those higher-order functions can just be conservatively approximated using congruence, while still allowing
techniques such as Refinement Reflection\cite{vazou_refinement_2018} to work over our constraints. See that paper
for further details on this encoding.

For example: @(f x) y@ becomes $\text{let } i = \lfloor f x \rfloor \text{ in } \text{app}_\tau i\, y$.

\end{document}